\newcommand{\eam}{{\sc Edge Arrival}\xspace}
\newcommand{\ea}{\mbox{{\sc Ea}}\xspace}
\newcommand{\vam}{{\sc Vertex Arrival}\xspace}
\newcommand{\va}{\mbox{{\sc Va}}\xspace}
\newcommand{\alm}{{\sc Adjacency List}\xspace}
\newcommand{\al}{\mbox{{\sc Al}}\xspace}
\newcommand{\ra}{\mbox{{\sc Ra}}\xspace}
\newcommand{\rouone}{{\sc SampLightHelper}\xspace}
\newcommand{\routwo}{{\sc SampHeavyHelper}\xspace}
\newcommand{\tsamp}{{\sc Triangle Sampling}\xspace}
\newcommand{\poly}{\mbox{{poly}}}
\newcommand{\defproblem}[3]{
  \vspace{5mm}
\noindent\fbox{
  \begin{minipage}{0.96\textwidth}
  \begin{tabular*}{\textwidth}{@{\extracolsep{\fill}}lr} #1 \\ \end{tabular*}
  {\bf{Input:}} #2  \\
  {\bf{Output:}} #3
  \end{minipage}
  }
  \vspace{5mm}
}
\newcommand{\sampllight}{{\sc Sample-Light-Triangle}\xspace}
\newcommand{\samplheavy}{{\sc Sample-heavy-Triangle}\xspace}
\newcounter{mynotes}
\renewcommand{\hat}{\widehat}
\newcommand{\size}[1]{\left| #1 \right|}
\newcommand{\E}{\mathbb{E}}
\newcommand{\remove}[1]{}
\newcommand{\N}{\mathbb{N}}
\newcommand{\cM}{\mathcal{M}}
\newcommand{\cT}{\mathcal{T}}
\newcommand{\cD}{\mathcal{D}}
\newcommand{\cU}{\mathcal{U}}
\newcommand{\cE}{\mathcal{E}}
\newcommand{\cF}{\mathcal{F}}
\newcommand{\Oh}{\mathcal{O}}
\newcommand{\tOh}{\widetilde{{\Oh}}}
\newcommand{\eps}{\varepsilon}
\newcommand{\pr}{\mathbb{P}}
\newcommand{\comments}[1]{\textcolor{blue}{\bf{#1}}}
\theoremstyle{plain}
\newtheorem{theo}{Theorem}[section]
\newtheorem{lem}[theo]{Lemma}
\newtheorem{coro}[theo]{Corollary}
\theoremstyle{definition}
\newtheorem{defi}[theo]{Definition}
\newtheorem{rem}{Remark}
\newtheorem{obs}[theo]{Observation}
\newtheorem{qs}[theo]{Question}
\title{
{\bf Near Uniform Triangle Sampling Over Adjacency List\\
Graph Streams}
}
\author{
Arijit Bishnu \footnote{Indian Statistical Institute, Kolkata, India. Email: arijit@isical.ac.in.}
\and
Arijit Ghosh \footnote{Indian Statistical Institute, Kolkata, India. Email: arijitiitkgpster@gmail.com.}
\and
Gopinath Mishra
\footnote{National University of Singapore, Singapore. Email: gopianjan117@gmail.com.}
\and
Sayantan Sen\footnote{Centre for Quantum Technologies,
National University of Singapore, Singapore. Email: sayantan789@gmail.com.}
}
\date{}
\begin{document}

\maketitle

\begin{abstract}
Triangle counting and sampling are two fundamental problems for {\em streaming algorithms}.  Arguably, designing sampling algorithms is more challenging than their counting variants. It may be noted that triangle counting has received far greater attention in the literature than the sampling variant. In this work, we consider the problem of \emph{approximately sampling triangles} in different models of streaming with the focus being on the \emph{adjacency list} model.

In this problem, the edges of a graph $G$ will arrive over a data stream. The goal is to design efficient streaming algorithms that can sample and output a triangle from a distribution, over the triangles in $G$, that is \emph{close} to the uniform distribution over the triangles in $G$. The distance between distributions is measured in terms of $\ell_1$-distance. The main technical contribution of this paper is to design algorithms for this triangle sampling problem in the \emph{adjacency list} model with the space complexities matching their counting variants.  For the sake of completeness, we also show results on the vertex and edge arrival models. 
\end{abstract}

\thispagestyle{empty}


\section{Introduction}
\label{sec:intro}
\noindent
The primary goal of this paper is to initiate a systematic study of \emph{sampling almost uniformly} a triangle over a data stream, denoted as $\cM$. Our setting is the usual streaming models (namely Adjacency List (AL) model, Edge Arrival (\ea) model and Vertex Arrival (\va) model) with single or multi-passes~\cite{muthukrishnan2005data,mcgregor2014graph} over a graph $G=(V, E)$ with the set of vertices $V$ and the set of edges $E$ such that $\size{V}=n$  and $\size{E}=m$. Let us start with a brief description of the streaming models.
\begin{itemize}
 \item {\alm(\al):} The set of vertices $V$ of $G$ are revealed in an arbitrary order. When a vertex $v \in V$ is revealed, all the edges that are incident to $v$, arrive one by one in an arbitrary order. Note that in this model each edge arrives twice, once for each endpoint.

\item \eam(\ea): The stream consists of the edges of $G$ in an arbitrary order.

\item \vam(\va): The set of vertices $V$ of $G$ are revealed in an arbitrary order. When a vertex $v \in V$ is revealed, all the edges between $v$ and the neighbors of $v$ that are already revealed, arrive one by one in an arbitrary order.

\end{itemize}

The contribution of this paper is in designing triangle sampling algorithms in all the three models with the primary focus being on the \al model {as we can show that the existing triangle counting algorithms in the \ea and \va models can be extended to triangle sampling}. 
For a graph $G$ having at least one triangle, let $t(G)$ denote the set of triangles present in $G$, and $\mathcal{U}$ denote the uniform distribution over the set $t(G)$. 
The formal description of the problem that we study in this work is as follows:

\defproblem{\underline{{\sc Triangle Sampling} over Data Streams}}{A graph $G(V,E)$ with its edges as a stream over $p$  pass(es), and a proximity parameter $\eps \in (0,2)$.}{A triangle sampled almost uniformly, that is, from a  distribution $\mathcal{D}$ over $t(G)$ such that the $\ell_1$-distance between  $\mathcal{D}$ and $\mathcal{U}$ is at most $\eps$.~\footnote{For two distributions $\cD_1$ and $\cD_2$ over a (non-empty) set $S$, the $\ell_1$-distance between $\cD_1$ and $\cD_2$ is denoted and defined as 
$
    |\size{\cD_1 -\cD_2}|_{1} := \sum_{x \in S}\size{\cD_1(x)-\cD_2(x)}.
$}}

In this work, we determine the space complexities of \tsamp in \al, \va, and \ea models that match the space complexity of the counting variants of the problems.

\medskip 

In terms of applications, triangle counting as well as triangle sampling have found wide applications across multiple disciplines such as computational biology~\cite{scott2006efficient,shlomi2006qpath}, databases~\cite{durand2013structural,chen2017logic,arenas2021approximate}, social network analysis~\cite{ugander2013subgraph}, to name a few. In particular, sampling over triangle queries and general join queries as well as sampling over join-project queries are studied in the database community (see \cite{chen2020random,zhao2018random} and the references therein). Moreover, this is also used in the analysis of huge graphs as well as for community detection in large networks~\cite{bloedorn2005relational,milo2002network}.

\medskip

\paragraph{The complexity of sampling may be more than that of counting.}
It may be pertinent to mention here that though counting triangles over a data stream has received wide attention for a long time now~\cite{bera2017towards,kallaugher2019complexity,braverman2013hard, buriol2006counting,cormode2014second,bulteau2016triangle,jowhari2005new, kallaugher2017hybrid,kolountzakis2012efficient,DBLP:conf/pods/McGregorVV16,pagh2012colorful,pavan2013counting}, to the best of our knowledge, the only work prior to ours that dealt with sampling triangle over a data stream (in the $1$-pass \ea/\va model) is the work by Pavan et al.~\cite{pavan2013counting}. 
 Though triangle counting has been a challenging as well as an important problem in the streaming literature, it seems that designing algorithms for  triangle sampling is possibly harder than its counting variant. To count triangles {or any other implicit structures, like cycles, cliques, etc.,} that are aggregating measures, one works on random samples of different types. But there is no guarantee that those random samples would serve the purpose of figuring out a uniformly random sampled triangle or such an implicit structure. Here, it may be instructive to see what happened in the ``other'' sublinear world, i.e. in the property testing models pertaining to query complexity.

 It is interesting to note that in the local query model, where we have only \emph{degree}, \emph{edge existence} and \emph{neighborhood} query access to a graph,~\footnote{In degree query, given a vertex $v$, the oracle returns $deg(v)$. 
For edge-existence query, given two vertices $u$ and $v$, the oracle returns $1$ if there is an edge between $u$ and $v$. Finally, in the neighborhood query, given a vertex $v$ and an integer $i$, the oracle returns the $i$-th neighbor of $v$ if it exists, or a special symbol if $deg(v) < i$.} the query complexity of sampling triangles is more compared to its counting counterparts~\cite{eden2022almost}. However, the addition of \emph{random edge} query access to the graph makes the query complexities of triangle counting and sampling the same ignoring polynomial terms in $1/\eps$ and $\log n$~\cite{DBLP:conf/icalp/FichtenbergerG020}. So, it is interesting to understand the status of \tsamp in the streaming setting in the following sense: are the space complexities of \tsamp the same as their counting counterparts ignoring polynomial terms in $1/\eps$ and $\log n$? In this work, we answer this question in the affirmative for \al, \ea and \va models.

\subsection{Our Results}
\label{ssec:results}
\noindent
The main technical contribution of the paper is designing optimal algorithms for {\sc Triangle Sampling} in the \al model.
\begin{theo}{\bf (Main Result: Informal Statement).}
\label{theo:two}
\tsamp can be solved in \al model using: (i) $\widetilde{\Theta}\left(m/T^{2/3}\right)$ space in $3$-passes and (ii) $\widetilde{\Theta}\left(m/\sqrt{T}\right)$ space in $1$-pass.
    Here $T$ is a promised lower bound on the number of triangles in $G$.~\footnote{The promise $T$ on the number of structures  is a standard assumption for estimating substructures (such as triangles) in the world of graph streaming algorithms~\cite{DBLP:conf/focs/KallaugherKP18,DBLP:conf/icalp/KaneMSS12,kallaugher2019complexity, DBLP:conf/pods/McGregorVV16, bera2017towards}. Here we have cited a few. However, there are huge amount of relevant literature.} Moreover, $\widetilde{\Theta}(\cdot)$ hides a polynomial in $\log n$ and $1/\eps$ in the upper bound.
\end{theo}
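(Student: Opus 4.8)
The plan is to reduce \tsamp in the \al model to an \emph{approximate-detection} primitive: a low-space streaming procedure that, for \emph{every} triangle $\Delta$ of $G$, reports $\Delta$ with some probability $q(\Delta)$ lying in a fixed range $[c_1 q,\ c_2 q]$ for absolute constants $c_1,c_2$, where $q = 1/T^{2/3}$ for the $3$-pass algorithm and $q = 1/\sqrt{T}$ for the $1$-pass algorithm; moreover $q(\Delta)$ should be estimable to within a $(1\pm\eps)$ factor. Given such a primitive using $\tOh(mq)$ space, we obtain \tsamp as follows. Run the primitive; since the expected number of reported triangles is at least $c_1 q\, t(G) \ge c_1 q T \gg 1$ by the promise, a Chernoff bound shows the report set is nonempty with probability $\ge 1-\delta$. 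Among the reported triangles we perform rejection sampling, accepting a reported $\Delta$ with probability proportional to $1/\hat q(\Delta)$ (the reciprocal of the estimate), which equalises the output probabilities across all triangles. A short computation bounds the $\ell_1$-distance of the output from $\cU$ by $O(\eps)$, coming entirely from the $(1\pm\eps)$ error in the estimates $\hat q(\Delta)$; rescaling $\eps$ and boosting the success probability by $O(\log(1/\delta))$ repetitions yields the two upper bounds.

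To build the primitive we classify triangles by a degree threshold $\theta$, taken to be $\approx T^{1/3}$ in the $3$-pass case and $\approx \sqrt T$ in the $1$-pass case: a vertex is \emph{heavy} if its degree exceeds $\theta$ and \emph{light} otherwise, and a triangle is \emph{light} if it has at least one light vertex and \emph{heavy} if all three are heavy. The split is what lets us control the \emph{worst-case} (not merely expected) space. For light triangles, handled by \sampllight via the helper \rouone, we subsample vertices with probability $q$ and store the neighbourhood only of the sampled \emph{light} vertices; each such neighbourhood has size $\le\theta$, so the total storage is $\tOh(qm)$ with high probability. We then, in a later pass (the $3$-pass version) or on the fly (the $1$-pass version, exploiting that in \al every edge is seen at both endpoints), test which candidate pairs inside a stored neighbourhood are edges, thereby reporting the light triangle \emph{anchored} at its canonically chosen, e.g.\ lowest-id light, vertex. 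Since at most $2m/\theta$ vertices are heavy, \samplheavy (via the helper \routwo) can afford to remember all heavy vertices and report each heavy triangle by sampling a heavy wedge and testing its closing edge over the remaining passes; the shared subroutine \routhree supplies the common estimation and sampling steps. The thresholds are tuned so that both parts use $\tOh(qm)$ space within $3$ (resp.\ $1$) passes and so that every triangle, light or heavy, is reported with probability $\Theta(q)$.

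The matching lower bounds $\thOh(m/T^{2/3})$ and $\thOh(m/\sqrt T)$ are inherited from the known lower bounds for triangle \emph{counting} in the \al model, together with the standard observation that an (approximately) uniform triangle sampler is at least as powerful as an approximate triangle counter in the promised regime (e.g.\ via collision counting over repeated samples).

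\textbf{Main obstacle.} The crux is guaranteeing that \emph{every} triangle of a given class is reported with a probability that is the same up to a constant factor, and is efficiently estimable, while keeping the pass count at $3$ (resp.\ $1$). A triangle can be discovered through several of its vertices or wedges, so the naive report probability depends on its local structure; the canonical-anchor device for light triangles and the canonical-wedge device for heavy triangles remove this dependence, but the canonical choice must be computable from the limited information available within the pass budget. In particular, whether a vertex is heavy is only learned after a pass, so the classification, the anchoring, and the verification passes have to be interleaved carefully. Obtaining a high-probability (not just in-expectation) space bound, so that a heavy vertex is never stored together with its full neighbourhood, and propagating the $(1\pm\eps)$ estimation error cleanly into the final $\ell_1$ guarantee, are the remaining technical points.
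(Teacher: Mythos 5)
Your high-level plan (split the triangles into a heavy class and a light class, handle each with a dedicated low-space reporter, then normalise) matches the paper's, but the axis along which you split is wrong, and this is not cosmetic. You classify \emph{vertices} by degree; the paper classifies \emph{edges} by the number of triangles incident on them (3-pass: $e$ is heavy iff $\lambda_e \ge \tau = \Theta_{\eps}(T^{1/3})$) or charged to them via the stream order (1-pass: the triangle $\{a,b,c\}$ with $a \prec_s b \prec_s c$ is charged to $\{a,c\}$, and an edge is heavy iff $\Theta(\sqrt T)$ triangles are charged to it). Two quantitative facts hinge on this choice and fail for yours. First, there are at most $3T/\tau = O(T^{2/3})$ triangle-heavy edges, and by the $O(m^{3/2})$ bound on the number of triangles in an $m$-edge graph, the triangles all of whose edges are heavy number only $O(\eps T)$ --- so the 3-pass algorithm can simply \emph{never} output a heavy triangle and still be $\eps$-close to uniform in $\ell_1$. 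With a degree threshold $\theta \approx T^{1/3}$ neither fact holds: in a clique (where $T=\Theta(m^{3/2})$) every vertex has degree exceeding $T^{1/3}$, so every triangle is heavy and the heavy class cannot be ignored, while the number of heavy vertices is $\Theta(\sqrt m)$ against a space budget of $m/T^{2/3}=\Theta(1)$; in general $2m/\theta = 2m/T^{1/3} \gg m/T^{2/3}$, so ``remember all heavy vertices'' already overshoots. Second, your light-triangle reporter is biased and can fail outright: sampling anchor vertices with probability $q$ and reporting \emph{all} triangles anchored at a sampled light vertex makes the per-triangle report indicators strongly correlated --- in the 1-pass setting a single light vertex of degree $\sqrt T$ can anchor $\Theta(T)$ triangles --- so the Chernoff bound you invoke for ``the report set is nonempty w.h.p.''\ does not apply (if all triangles share $O(1)$ anchors, the report set is empty except with probability $O(1/\sqrt T)$), and choosing among the reported triangles is not near-uniform without a reweighting whose normaliser does not concentrate. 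The paper avoids both problems by making each independent trial (each edge of the reservoir multiset $\cF$) output \emph{at most one} candidate, accepted with probability $\lambda_e/(i\tau)$ (resp.\ $t_e/\tau$), so that every light triangle is the outcome of a given trial with probability exactly $1/(m\tau)$.

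The one-pass case has a further unresolved obstruction in your write-up: you verify heavy wedges ``over the remaining passes,'' but there are none, and a vertex's degree (hence its heaviness) is known only after its adjacency list has fully streamed by, at which point its neighbourhood can no longer be stored retroactively. The paper's mechanism is genuinely different: all decisions are made per edge at arrival time --- \rouone reservoir-samples edges and, for an edge caught at its first appearance, exactly counts and uniformly samples the triangles charged to it (all visible between its two appearances), while \routwo subsamples edges independently with probability $p$ so that each triangle charged to a heavy edge is detected with probability $p$, yielding a concentrated estimate $x_e/p$ of $t_e$ and a near-uniform sample per heavy edge. Finally, for the lower bounds, collision counting from uniform samples would require $\Omega(\sqrt T)$ samples and is not the reduction needed; the paper plants a disjoint clique with $\Theta(T)$ triangles and checks whether any of $O(1)$ samples lands outside it, reducing the known $0$-versus-$T$ distinguishing lower bounds for counting to sampling. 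As written, your proposal establishes neither upper bound.
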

Let us discuss about the complexity of \tsamp in \ea/\va model. It is known that \tsamp can be solved in \ea/\va model in $1$-pass by using $\tOh\left(\min\{m,m^2/T\}\right)$ space~\cite{pavan2013counting}.~\footnote{In Appendix~\ref{sec:eaalgo}, we also present a new algorithm of \tsamp in $1$-pass \ea/\va model with space complexity $\widetilde{\Theta}\left(\min\{m,m^{2}/T\}\right)$ which we believe has a unifying structure with our other algorithms in different settings.} We observe that multi-pass triangle counting algorithms in \ea/\va model~\cite{assadi2018simple} can be extended directly to \remove{get algorithms for \tsamp, that is, we can} solve \tsamp in \ea/\va  model using $\widetilde{\Theta}\left(m^{3/2}/T\right)$ space in $3$-passes.
The details of the algorithms in \ea/\va models are discussed in Appendix~\ref{sec:eaalgo}. Our algorithm pertaining to Theorem~\ref{theo:two} is inspired from the counting algorithms in \al model \cite{kallaugher2019complexity, DBLP:conf/pods/McGregorVV16}. However, our sampling algorithms undergo a significantly different and subtle analysis compared to their counting counterparts. We believe that the detailed analysis constitutes the primary contribution of our paper.
\remove{
\comments{In our paper, we acknowledge that the foundational inspiration is drawn from the respective counting algorithms for \tsamp in the \al model \cite{kallaugher2019complexity, DBLP:conf/pods/McGregorVV16}. However, it is important to highlight that our sampling algorithms undergo a significantly different and certail analysis compared to their counting counterparts. We believe that the detailed analysis constitutes the primary contribution of our paper.}}

\medskip

The proof of the two upper bounds in Theorem~\ref{theo:two}, presented in \Cref{sec:almultipassalgo} and \Cref{sec:al1passalgo}, respectively form the main technical component of this paper. The lower bounds in Theorem~\ref{theo:two} and the lower bounds in \ea/\va model
follow from the  lower bounds for counting version of the problems~\cite{kallaugher2019complexity,braverman2013hard,bera2017towards}. The details are discussed in \Cref{app:rem} and \Cref{sec:lb_app}.


%

\remove{
\paragraph{Challenges of designing sampling algorithms.} In the local query model, the sampling variant of problems like \emph{edge sampling} and \emph{triangle sampling} took some time to be solved after the complexities of their counting variants have been resolved. In particular, the query complexity of edge estimation is resolved by Goldreich and Ron ~\cite{DBLP:journals/rsa/GoldreichR08} and that of edge sampling is resolved by Eden and Rosenbaum~\cite{eden2018sampling}. Also, the query complexity of triangle estimation is resolved by Eden, Levi, Ron, and Seshadri~\cite{eden2017approximately} and that of triangle sampling is resolved by Eden, Ron, and Rosenbaum~\cite{eden2018sampling}. This discussion clearly indicates that resolving the space complexity of triangle sampling is an important problem given that the space complexities of the counting variants have been resolved and the sampling variants have not been explored in detail.
}

\paragraph{Notations.}
As in other works in streaming, we use a parameter $T$ to denote a guaranteed lower bound on the total number of triangles in $G$, i.e., $\size{t(G)} \geq T$. To make the exposition simple while describing our algorithms, we assume that we know the exact values of $m$ and $\size{t(G)}=T$. However, the assumptions can be removed by using standard techniques in graph streaming. The details are discussed in \Cref{app:rem}. \remove{We will use $e=\{u,v\}$ as an usual (unordered) edge, whereas, $e= (u,v)$ will denote its ordered variant. }\remove{For an edge $e\in E$, the number of triangles with $e$ as an edge is denoted as $t_e$. An edge $e \in E$ is said to be $\tau$-heavy, if $t_e \geq \tau$, for some integer $\tau \in \N$.} 
We will use $\widetilde{\Oh}(\cdot)$ to hide polynomial factors in $1/\eps$ and $\log n$. For an event $\cE$ and $\alpha, p \in (0,1)$, we write $\pr(\cE)=(1\pm \alpha)p$ to denote that the probability of the event $\cE$ lies between $(1-\alpha)p$ and $(1+\alpha)p$. When we say that event $\cE$ holds with high probability, we mean $\pr(\cE)\geq 1-1/\mathsf{poly}(n)$. Let $S$ be a non-empty set and $\cD$ be a probability distribution over $S$. For $x \in S$, $\cD(x)$ denotes the probability of sampling $x$ when one draws a random sample from $\cD$. For two distributions $\cD_1$ and $\cD_2$ over a (non-empty) set $S$, the $\ell_1$-distance between $\cD_1$ and $\cD_2$ is denoted and defined as 
$
    |\size{\cD_1 -\cD_2}|_{1} := \sum_{x \in S}\size{\cD_1(x)-\cD_2(x)}.
$
{\emph{Reservoir sampling} is a widely used randomized algorithmic framework in streaming that samples $k$ uniformly random elements from a stream where $k$ is known before the start of the stream~\cite{DBLP:journals/toms/Vitter85}. The space complexity of the algorithm is $\Oh(k)$. We denote a reservoir sampler by RS and explicitly mention whether we are considering sampling with or without replacement.
}
In a similar fashion, one can consider \emph{weighted reservoir sampler}, denoted by WRS, to sample elements proportional to their weights.



 
                   
 


\remove{
\subsection{Importance of our results}
As we have already mentioned in the introduction, triangle counting is a fundamental problem in the field of sublinear algorithms, and in streaming in particular. Due to its inherent importance, triangle counting has been extensively studied, and several techniques that have been developed for this purpose are currently being used for attacking several other problems.

Although sampling is a natural extension of the counting problem, no prior work has studied the problem of triangle sampling in depth. In this work, we design the first sampling algorithms that samples triangles uniformly in three popular and extremely well studied models in streaming literature, namely the \ea model, the \va model, and the \al model.
}

\subsection{Related Works}
\label{ssec:relatedwork}
\noindent
Alon et al.~\cite{alon1997finding} studied the problem of counting triangles in directed as well as undirected graphs in the RAM model. \remove{and proved that deciding whether an undirected (directed) graph has a triangle or not and finding a triangle if one exists can be achieved in $\Oh(|E|^{2\omega/\omega +1})$ time, where $E$ denotes the set of edges of $G$ and $\omega \leq 2.372$ denotes the matrix multiplication exponent.} The problem of triangle counting in the streaming model was first considered by Bar-Yossef et al.~\cite{bar2002reductions} where the authors studied the problem of approximately counting triangles in a graph $G$ where the edges of $G$ appear as a stream. Subsequently, there have been several interesting works in several variants~\cite{braverman2013hard,buriol2006counting,cormode2014second,bulteau2016triangle,jowhari2005new,kallaugher2017hybrid,kolountzakis2012efficient,DBLP:conf/pods/McGregorVV16,pagh2012colorful,pavan2013counting,kallaugher2019complexity,JayaramK21}.
For an exhaustive list of results, and several techniques, see the surveys of Muthukrishnan~\cite{muthukrishnan2005data} and McGregor~\cite{mcgregor2014graph}; also see Bera and Chakrabarti~\cite{bera2017towards} for a nice summary of relevant results.

A close cousin of counting is the problem of sampling. However, in the field of streaming algorithms, sampling implicit graph structures like triangle, clique, and cycles have not been very well studied before. In the field of property testing, where sublinear number of query accesses to the graph is important,  uniform sampling of edges has been considered~\cite{eden2018sampling, DBLP:conf/stoc/TetekT22,DBLP:conf/sosa/EdenNT23}. Later there have been results on sampling cliques \cite{eden2022almost} and sampling arbitrary subgraphs \cite{DBLP:conf/icalp/FichtenbergerG020,biswas2021towards}.

As seen from the preceding discussions, although triangle counting has been studied extensively in the streaming model, sampling triangles uniformly from a graph stream has not been well studied before, excepting for a work or two. Sampling works in property testing setting is also of recent vintage. In this work, we progress towards closing this gap between the study of counting and sampling triangles over a graph stream.

Now, we discuss briefly relevant triangle counting literature. For single pass \ea/\va model, Braverman et al.~\cite{braverman2013hard} proved that even for distinguishing triangle-free graphs from graphs with $T$ triangles, $\Omega(m)$ space is required. Pavan et al.~\cite{pavan2013counting} studied the problem of triangle counting and sampling in $1$-pass \ea/\va model, where they designed an algorithm that uses $\widetilde{\Theta}(m^2/T)$ space. Note that the work of Pavan et al.~\cite{pavan2013counting} is the only work that studied the triangle sampling problem prior to this work. Their triangle sampling technique first picks up an edge $e$ uniformly at random using reservoir sampling. Next they pick up an edge uniformly at random from the adjacent edges of $e$. This creates a wedge; they now wait for the arrival of an edge that completes the triangle with the edge $e$.
Cormode and Jowhari~\cite{cormode2014second} studied the triangle counting problem in \ea model and designed a multi-pass algorithm with space complexity $\widetilde{\Oh}(m/\sqrt{T})$. Later McGregor et al.~\cite{DBLP:conf/pods/McGregorVV16} designed two separate multi-pass algorithms with space complexity $\tOh(m^{3/2}/T)$ and $\tOh(m/\sqrt{T})$, respectively. Bera and Chakrabarti~\cite{bera2017towards} also proved a lower bound of $\Omega(\min\{m^{3/2}/T, m/\sqrt{T}\})$ for multi-pass setting, and presented a simpler multi-pass algorithm with space complexity $\widetilde{\Oh}(m^{3/2}/T)$. \remove{Below we give a high-level overview of that algorithm. } For \al model, McGregor et al.~\cite{DBLP:conf/pods/McGregorVV16} designed a single pass $(1 \pm \eps)$-multiplicative approximation algorithm for counting the number of triangles in  $\widetilde{\Oh}(m/\sqrt{T})$ space. \remove{They classified all edges into two categories: \emph{heavy} and \emph{light} edges, where an edge $e=(a, b)$ is said to be heavy if $t_e \geq \sqrt{T}$, and light otherwise, where $t_e$ denotes the number of triangles incident to the edge $e$. They define an estimator which has a low variance for light edges. Since the number of triangles associated with heavy edges is maintained among the sampled edges, they conclude that the number of triangles can be approximated by using $\widetilde{\Oh}(m/\sqrt{T})$ space.} The multi-pass algorithm for triangle counting in \al model was first studied by the same authors in \cite{DBLP:conf/pods/McGregorVV16} which was later improved by Kallaugher et al.~\cite{kallaugher2019complexity} to $\widetilde{\Oh}(m/T^{2/3})$.

\section{Overview of our algorithms}
\label{sec:overview}

\noindent In this section, we present a technical overview of our upper bound results of \Cref{theo:two}. The corresponding lower bounds follow from a sampling to counting reduction technique, and are discussed in~\Cref{sec:lb_app}.

\subsection{Multi-pass triangle sampling algorithm in the \al model}\label{sec:al-multi-over}
\noindent Triangle is an implicit structure derived from edges; so to get to an uniformly random sampled triangle, we target the triangles via edges in the following way. We call an edge \emph{heavy} (\emph{light}) if it has \emph{many} (\emph{less number of}) triangles incident on it. Informally speaking, an edge is said to be heavy if the number of triangles  incident on it is at least some threshold $\tau=\Theta_{\varepsilon}(T^{1/3})$.~\footnote{$\Theta_{\eps}(\cdot)$ hides the dependency on $1/\eps$.} Otherwise, the edge is said to be light. A triangle is said to be heavy if all of the edges of the triangle are heavy. Otherwise, the triangle is said to be light. One can argue that the total number of heavy edges is at most $\Theta_{\varepsilon}(T^{2/3})$ and the number of heavy triangles is $\Oh(\varepsilon T)$. So, if we can show that we can sample light triangles uniformly, we are done. To do so, consider the following algorithm: take an edge $e$ uniformly at random. If $e$ is light, then choose a triangle $s_e$ uniformly at random from the triangles incident on $e$. Then compute the number of edges of $s_e$ that are light, let it be $i\in \{1,2,3\}$ ({$i \not= 0$}  as $e$ itself is a light edge). Report $s_e$ as the sampled triangle with probability $\frac{1}{i}\cdot \frac{\lambda_e}{\tau}$, where $\lambda_e$ denotes the number of triangles incident on $e$. Now, consider a light triangle $\Delta$ such that the number of light edges on it is $j \in \{1,2,3\}$. The probability that $\Delta$ is sampled is $\frac{j}{m}\cdot \frac{1}{\lambda_e}\cdot \frac{1}{j}\cdot \frac{\lambda_e}{\tau} =\frac{1}{m\tau}=\Theta_{\eps}(\frac{1}{mT^{1/3}})$. Since the number of heavy triangles is $\Oh(\eps T)$, the number of light triangles is $\Omega(T)$. Thus the probability that the above procedure reports some light triangle is $\Theta_{\eps}(T^{2/3}/m)$. If the above process succeeds, we can argue that we get a light triangle uniformly at random. To boost the success probability, one can repeat the above procedure $\tOh{(m/T^{2/3})}$ times.

Here, we would like to emphasize that the above intuition can be formalized and implemented in streaming model by exploiting the properties of the \al model. We defer the details to \Cref{sec:almultipassalgo}.

Note that the intuition described above can be worked out to give an alternative multi-pass algorithm for triangle counting in the \al model with the same space bound as of \cite{kallaugher2019complexity}. The intuition {behind the triangle counting algorithm} of \cite{kallaugher2019complexity} is different from the intuition we described above for our sampling algorithm. It is not clear to us if the algorithm of \cite{kallaugher2019complexity} can be extended to sampling.

\subsection{Single pass triangle sampling algorithm in the \al model} \label{sec:al-onepass-over}

\noindent We will use an idea akin to light and heavy edges as described in the 3-pass algorithm, but to get to a 1-pass algorithm, we will take recourse to running subroutines in parallel. The idea of the algorithm is to charge each triangle to a unique edge depending on the order of exposure of vertices in the stream. An edge is designated to be either \emph{heavy} or \emph{light}~\footnote{Here, the definition of heavy and light are different than that of the 3-pass algorithm. The exact definitions will be given in the respective section.}, depending on the number of triangles charged to it. A triangle is said to be heavy or light according to the type of the edge to which it has been charged. Over the stream, our algorithm detects all the heavy edges. For each heavy edge, our algorithm samples a triangle uniformly at random from all the triangles incident on that heavy edge. Our algorithm also samples a light triangle with a \emph{suitable} probability. Let us discuss the intuition in some more detail. 

For two vertices $a,b \in V$, we say $a \prec_s b$ {(the $s$ in $\prec_s$ denotes the order imposed by the streaming order of the vertices)}  if the adjacency list of a is revealed before the adjacency list of $b$. Consider a triangle $\Delta=\{a,b,c\}$ such that $a \prec_s  b \prec_s c$. We say $\Delta$ is \emph{charged} to the edge $\{a,c\}$. For an edge $e=\{a,c\}$ such that $a \prec_s c$, $t_{e}$ denotes the number of triangles charged to edge $e$. Since every triangle is assigned to an unique edge, we can say that $T=\sum_{e \in E} t_e$.\remove{~\footnote{\comments{We emphasize that the notation $t_e$ is overloaded. While discussing multi-pass algorithm in the \al model, $t_e$ denotes the number of triangles incident on $e$. But, while discussing the one-pass  algorithm in the \al model, $t_e$ denotes the number of triangles charged to $e$}.}}

\paragraph{Intuition of sampling a light triangle:}  For the time being, let us define an edge $e$ to be \emph{light} if $t_e \leq \tau=\Theta(\sqrt{T})$~\footnote{As previously noted, the exact definition is presented in \Cref{sec:al1passheavylight}.}. Consider a light triangle $\Delta=\{u,v,w\}$ which is charged to the edge $\{u,w\}$. Let us sample an edge $e$ uniformly at random over the stream and take a triangle charged to $e$ uniformly at random with probability $\frac{t_e}{\tau}$ if $t_e \leq \tau$. The probability that $\Delta$ is sampled is $\frac{1}{m}\cdot \frac{1}{t_e}\cdot \frac{t_e}{\tau}=\frac{1}{m\tau}$. So, the probability that some light triangle will be sampled by the above procedure is $\frac{T_L}{m\tau}$, where $T_L$ denotes the total number of light triangles. If $T_L=\Omega(T)$, (otherwise, we argue that sampling light triangles is not necessary) then the success probability of sampling a light triangle is $\Omega(\sqrt{T}/m)$ and we obtain a light triangle uniformly at random. If we run the above process $\tOh(m/\sqrt{T})$ times independently, we will get a light triangle uniformly at random with high probability.

\paragraph{Intuition of sampling a heavy triangle:}  Let us define an edge $e$ to be \emph{heavy} if $t_e > \tau=\Theta(\sqrt{T})$. This implies that there are at most $\Oh(\sqrt{T})$ number of heavy edges. Let $H$ denote the set of heavy edges, and $T_H=\sum_{e\in H}t_e$. Consider the following algorithm for sampling a heavy triangle. Take an edge $e \in H$ with probability $\frac{t_e}{T_H}$ and report a triangle charged to $e$ uniformly at random. For any heavy triangle $\Delta$ (charged to $e$), the probability that $\Delta$ is sampled is $\frac{t_e}{T_H}\cdot \frac{1}{t_e}=\frac{1}{T_H}$.

Here, we would like to emphasize that both of the above intuitions can be formalized and implemented in the \al streaming model by exploiting its properties. We defer the details to \Cref{sec:al1passsubroutine}.

\section{Multi-pass Triangle Sampling in \al model
}\label{sec:almultipassalgo}

\begin{theo}[Upper Bound of Theorem~\ref{theo:two}(i)]\label{theo:tsamp-al-p-proof}
\tsamp can be solved by using $\tOh\left(m/T^{2/3}\right)$ space in \al model in $3$-passes.
\end{theo}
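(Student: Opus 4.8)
The plan is to formalize the three-pass strategy sketched in \Cref{sec:al-multi-over}. We will work with the threshold $\tau = \Theta_{\eps}(T^{1/3})$ and classify an edge $e$ as \emph{heavy} if the number of triangles incident on it, $\lambda_e$, satisfies $\lambda_e \geq \tau$, and \emph{light} otherwise; a triangle is heavy iff all three of its edges are heavy. The first structural fact we need is that the number of heavy edges is $\Oh_{\eps}(T^{2/3})$ (since each heavy edge witnesses $\geq \tau$ triangles and each triangle is counted at most $3$ times, so $\#\{\text{heavy edges}\}\cdot \tau \leq 3T$) and that the number of heavy triangles is $\Oh(\eps T)$ (a heavy triangle lies on three heavy edges, and one can bound the number of triangles whose all three edges are heavy by something like $\binom{\#\text{heavy edges}}{?}$ — more carefully, by charging and using that heavy edges form a small set, the count of triangles supported entirely on the $\Oh_{\eps}(T^{2/3})$ heavy edges is at most $\Oh_{\eps}(T)$ with the right constant inside $\tau$, which we tune to make it $\le \eps T / 100$). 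Consequently $T_L := |t(G)| - \#\text{heavy triangles} = \Omega(T)$, so it suffices to output a light triangle almost uniformly and argue the induced distribution over all of $t(G)$ is within $\eps$ in $\ell_1$.

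Next I would implement one \emph{trial} of the light-triangle sampler using the passes afforded by the \al model. In pass~1, use reservoir sampling to pick an edge $e=\{u,v\}$ uniformly at random from the stream (each edge appears twice in \al, once per endpoint; handle the double-counting by, e.g., only accepting the copy revealed at the lexicographically-or-stream-earlier endpoint, or by standard deduplication, so that $e$ is uniform over $E$). In pass~2, having fixed $e=\{u,v\}$, count $\lambda_e$ exactly by, when $u$'s adjacency list is revealed, storing $N(u)$ is too expensive — instead, when the smaller-degree endpoint's list arrives store it (or rather: we know $e$, so scan $N(u)$ and $N(v)$ and for each candidate $w$ check membership in the other list; in \al we see both lists within the pass, so $\lambda_e$ is computed with $\Oh(\min(d_u,d_v)) = \Oh(m)$ ... we actually only need to store one of the two neighborhoods, and $\lambda_e \le \min(d_u,d_v)$, but total space is dominated by the repetition count anyway, so a single $\Oh(d_u)$-sized buffer suffices per trial, reused). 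Simultaneously in pass~2, if $e$ turns out light ($\lambda_e < \tau$), sample a uniform triangle $s_e = \{u,v,w\}$ incident on $e$ via reservoir sampling over the common-neighbor witnesses $w$. In pass~3, determine $i = i(s_e) \in \{1,2,3\}$, the number of light edges of $s_e$, by computing $\lambda_{\{u,w\}}$ and $\lambda_{\{v,w\}}$ (same technique), and then output $s_e$ as the final answer with probability $\frac{1}{i}\cdot\frac{\lambda_e}{\tau}$, otherwise output FAIL for this trial. (If $e$ was heavy, the trial FAILs.) Running $r = \tOh(m/T^{2/3})$ independent trials in parallel — they share pass boundaries but use independent randomness — and returning the first non-FAIL output boosts success; by a Chernoff bound on $r$ independent trials each succeeding with probability $\Theta_{\eps}(T^{2/3}/m)$, at least one succeeds w.h.p.

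The correctness of the distribution is the heart of the argument: condition on the event that trial $k$ is the first to succeed. For a fixed light triangle $\Delta$ with exactly $j \in \{1,2,3\}$ light edges, $\Delta$ is output by a given trial iff (i) $e$ is chosen to be one of the $j$ light edges of $\Delta$ — probability $j/m$; (ii) among the $\lambda_e$ triangles incident on that light $e$, the one sampled is $\Delta$ — probability $1/\lambda_e$; (iii) the acceptance test passes — probability $\frac{1}{j}\cdot\frac{\lambda_e}{\tau}$ (note $i(s_e) = j$ since $s_e = \Delta$). Multiplying, summing over the $j$ choices of $e$ gives $\pr[\text{trial outputs } \Delta] = j \cdot \frac{1}{m}\cdot\frac{1}{\lambda_e}\cdot\frac{1}{j}\cdot\frac{\lambda_e}{\tau} = \frac{1}{m\tau}$, \emph{independent of $\Delta$}. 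Hence conditioned on a trial succeeding, its output is exactly uniform over light triangles; and conditioning on ``first trial to succeed'' preserves this (the success event is symmetric across light triangles). So the final output is exactly uniform over $t_L(G)$, the set of light triangles. Finally, $\|\cD - \cU\|_1 \le 2\cdot\frac{\#\text{heavy triangles}}{|t(G)|} \le 2\cdot\frac{\Oh(\eps T)}{T} \le \eps$ after rescaling $\eps$ by a constant and choosing the constant in $\tau$ accordingly — this is the standard ``uniform on a $(1-\eps)$-fraction is $\eps$-close to uniform on all'' estimate.

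The main obstacle I anticipate is the interplay between the three items above being \emph{exactly} — not approximately — computed within the space budget: we must count $\lambda_e$, $\lambda_{\{u,w\}}$, $\lambda_{\{v,w\}}$ exactly in a single pass each, which in the \al model is fine because when $u$'s list is revealed we see all of $N(u)$ and can test each neighbor against a stored copy of one relevant adjacency list, but we must be careful that storing a full adjacency list costs $\Oh(d_{\max})$ which could exceed $\tOh(m/T^{2/3})$ — so the implementation must store only the \emph{smaller} of the two lists of the current edge, and, crucially, must fix which vertex's list to store \emph{before} seeing it, which requires knowing degrees; this is where we either invoke a preliminary degree-gathering pass or argue $\min(d_u, d_v) = \Oh(m/T^{2/3})$ is not guaranteed and instead use the trick of only storing a list when it is short and re-deriving $e$ otherwise. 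A secondary subtlety is making the $r$ parallel trials genuinely share the three passes while each maintaining its own $\Oh(\text{list})$ buffer — so the total space is $r \cdot \tOh(1) \cdot (\text{per-trial buffer})$; getting the per-trial buffer down to $\tOh(1)$ amortized, rather than $\tOh(d_{\max})$, is the delicate part, and I expect the clean fix is to first compute all needed degrees (one extra pass, or folded into pass~1 with $\Oh(n)$ space, which is within budget when $T$ is not too large — or, if it is, a separate sparsification argument), then in later passes store only lists of length $\Oh(m/T^{2/3})$. Modulo this bookkeeping, the analysis is exactly the clean calculation above.
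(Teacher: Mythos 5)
Your overall route is the same as the paper's: the same threshold $\tau=\Theta_{\eps}(T^{1/3})$, the same heavy/light classification of edges and triangles, the bound of $3T/\tau$ on heavy edges combined with the $\Oh(m^{3/2})$ bound on triangles in an $m$-edge graph to get $T_H=\Oh(\eps T)$, the three-pass trial (uniform edge, uniform incident triangle, then acceptance with probability $\frac{1}{i}\cdot\frac{\lambda_e}{\tau}$), the cancellation giving $\frac{1}{m\tau}$ independent of the triangle, the repetition $\tOh(m/T^{2/3})$ times, and the closing $\ell_1$ estimate. All of that matches the paper's proof essentially verbatim, and the probability calculation is correct.

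The one place you go astray is the part you flag as ``the main obstacle'': counting $\lambda_e$, $\lambda_{\{u,w\}}$, $\lambda_{\{v,w\}}$ exactly. Your proposed implementation (store one of the two adjacency lists of the tracked edge and test the other list against it) costs $\Oh(\min(d_u,d_v))$ per tracked edge, which can indeed exceed the budget once multiplied by $r=\tOh(m/T^{2/3})$ parallel trials, and your suggested repairs (a preliminary degree-gathering pass, or $\Oh(n)$ space, or storing only short lists) either add a fourth pass or are not within budget in general, since $m/T^{2/3}$ can be far smaller than $n$. This obstacle is a phantom: in the \al model a triangle $\{u,v,w\}$ incident on the tracked edge $e=\{u,v\}$ is witnessed \emph{at the third vertex} --- when $w$'s adjacency list is revealed, both $u$ and $v$ appear in it, and detecting this requires only two Boolean flags that are reset at each new list. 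So each tracked edge needs only $\Oh(1)$ words (a counter for $\lambda_e$ plus a reservoir slot of size one for the uniform incident triangle), the total space is $\tOh(m/T^{2/3})$ with no degree information and no extra pass, and the rest of your analysis goes through unchanged. This is exactly how the paper's Algorithm~\ref{alg:3passal} realizes Passes~2 and~3.
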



Before proceeding to the algorithm, let us first define the notions of heavy and light edges as well as heavy and light triangles which will be required in the proof.

\begin{defi}[Heavy and light edge]\label{defi:heavylightedge}
Given a parameter $\tau \in \N$, an edge $e$ is said to be \emph{$\tau$-heavy} or simply heavy if $\lambda_e \geq \tau$, where $\lambda_e$ denotes the number of triangles incident on $e$. Otherwise, $e$ is said to be a \emph{$\tau$-light} or simply light edge.
\end{defi}

We now also define notions of heavy and light triangles.


\begin{defi}[Heavy and light triangle]\label{defi:heavylighttriangle}
Let $i, \tau$ be two integers such that $i \in \{0,1,2,3\}$. A triangle $\Delta$ is said to be \emph{$i$-light} if it has $i$ $\tau$-light edges. Note that $0$-light triangles are \emph{heavy triangles}. A triangle is said to be light if it is not heavy, that is, $i$-light for some $i \in \{1,2,3\}$.
\end{defi}


We will use the following result which upper bounds the maximum number of triangles of a graph $G$ with respect to its number of edges.

\begin{lem}[\cite{RIVIN2002647}]\label{lem:boundnumbertriangle}
    Given a graph $G(V,E)$ such that $\size{E}=m$, the number of triangles of $G$ is at most $\Oh(m^{3/2})$.
\end{lem}

As earlier, let us denote the stream by $\cM$. Below we set $\tau = 12(T/\eps^2)^{1/3}$, where $T$ is  the number of triangles in $G$. The following observation bounds the total number of heavy edges along with the total number of heavy and light triangles.

\begin{obs}\label{obs:al3passboundno}
Let us assume that in a graph $G$, the total number of triangles is $T$ and $\tau= 12(T/\eps^2)^{1/3} $. Then the following hold:
\begin{enumerate}    
       \item[(i)] The number of $\tau$-heavy edges is at most $ (\eps T/8)^{2/3}$.
    
    \item[(ii)] The number of heavy triangles $T_H$ is at most $\eps T/8$.
    
    \item[(iii)] The number of light triangles $T_L$ is at least $(1-\eps /8 )T$.
\end{enumerate}
\end{obs}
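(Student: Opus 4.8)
The plan is to prove each of the three items by a double-counting argument that bounds the total number of triangles $T$ from below in terms of the quantity we want to control, and then solve for that quantity using the chosen value $\tau = 12(T/\eps^2)^{1/3}$.

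First I would handle (i). Let $H$ denote the set of $\tau$-heavy edges, so each $e\in H$ has $\lambda_e \geq \tau$ triangles incident on it. Summing $\lambda_e$ over $e\in H$ counts each triangle at most three times (once per edge), so $\sum_{e\in H}\lambda_e \leq 3T$; on the other hand $\sum_{e\in H}\lambda_e \geq \size{H}\tau$. This already gives $\size{H}\leq 3T/\tau$, but that is only $\Oh((T/\eps^2)^{2/3})$, not the claimed $(\eps T/8)^{2/3}$ — so a cruder first-moment bound is not enough. To get the stronger bound I would instead invoke \Cref{lem:boundnumbertriangle} restricted to the subgraph induced by (or rather, spanned by) the heavy edges: the subgraph $G_H=(V,H)$ has $\size{H}$ edges, so it contains at most $\Oh(\size{H}^{3/2})$ triangles. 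But every heavy edge $e$ has $\lambda_e\geq\tau$ triangles incident on it in $G$, and ideally one wants many of these triangles to themselves lie in $G_H$ — this is exactly the subtle point (see the next paragraph). Assuming we can argue that the heavy edges collectively witness $\Omega(\size{H}\tau)$ triangles that are counted by the $\Oh(\size{H}^{3/2})$ bound, we get $\size{H}\tau = \Oh(\size{H}^{3/2})$, i.e. $\size{H} = \Omega(\tau^2)$... that is the wrong direction. The correct route: apply the Rivin bound to get that the number of triangles incident to heavy edges is at most $\Oh(\size{H}^{3/2})$ only after noticing each such triangle uses a heavy edge, but each heavy edge is in $\geq\tau$ triangles, so counting with multiplicity $\size{H}\tau \leq 3\cdot(\text{\# triangles touching }H)$, and the number of triangles is globally $\leq$ something. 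Let me restructure: the clean statement is $\size{H}\tau \le \sum_{e\in H}\lambda_e \le 3T$, giving $\size{H}\le 3T/\tau = 3T/(12(T/\eps^2)^{1/3}) = \tfrac14 (\eps^2 T)^{1/3}\cdot(T/T)= \tfrac14\eps^{2/3}T^{2/3} = \tfrac14(\eps T)^{2/3}\cdot(\eps^{2/3}/\eps^{2/3})$; checking constants, $3/12 = 1/4$ and $1/4 \le (1/8)^{2/3}$ is false since $(1/8)^{2/3}=1/4$. So in fact $\size{H}\le \tfrac14(\eps^2 T)^{1/3}T^{1/3} = (\eps T/8)^{2/3}$ exactly, using $(1/8)^{2/3}=1/4$. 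Good — (i) follows from the plain first-moment bound after all; no Rivin needed here.

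For (ii), a heavy triangle has all three edges $\tau$-heavy, so each of its edges lies in $H$; hence every heavy triangle is a triangle of the subgraph $G_H=(V,H)$, and by \Cref{lem:boundnumbertriangle} there are at most $\Oh(\size{H}^{3/2})$ of them. Plugging in $\size{H}\le(\eps T/8)^{2/3}$ gives $T_H = \Oh(((\eps T/8)^{2/3})^{3/2}) = \Oh(\eps T/8)$. I would need to be careful to track the constant hidden in Rivin's $\Oh(\cdot)$ — the statement of (ii) wants the clean bound $\eps T/8$, so presumably the intended constant in Lemma~\ref{lem:boundnumbertriangle} (or in the choice $\tau = 12(\cdot)^{1/3}$, note $12$ rather than a smaller constant) is calibrated so that $\size{H}^{3/2} \le \eps T/8$ with the right absolute constant; I would state this explicitly and verify that $12$ is chosen precisely to absorb Rivin's constant. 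Finally (iii) is immediate from (ii): a triangle is either heavy or light, so $T_L = T - T_H \ge T - \eps T/8 = (1-\eps/8)T$.

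The main obstacle I anticipate is pinning down the constant in item (ii): Lemma~\ref{lem:boundnumbertriangle} is quoted with an unspecified $\Oh(\cdot)$ constant, and the whole point of the specific value $\tau = 12(T/\eps^2)^{1/3}$ is to make the arithmetic in (i) and (ii) come out to the stated clean bounds. So the real work is (a) recalling/citing the precise constant in Rivin's inequality — the sharp bound is that a graph with $m$ edges has at most $\tfrac{1}{3}(2m)^{3/2}$ or similar triangles — and (b) checking that $(\eps T/8)^{2/3}$ raised to the $3/2$, times that constant, is at most $\eps T/8$. Everything else (the first-moment count for (i), the subgraph observation for (ii), the subtraction for (iii)) is routine once the definitions of heavy/light edge and triangle from Definitions~\ref{defi:heavylightedge} and~\ref{defi:heavylighttriangle} are in hand.
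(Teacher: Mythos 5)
Your proposal is correct and follows essentially the same route as the paper: the first-moment count $\size{H}\le 3T/\tau=(\eps T/8)^{2/3}$ for (i), Rivin's bound (Lemma~\ref{lem:boundnumbertriangle}) applied to the subgraph spanned by the heavy edges for (ii), and $T_L=T-T_H$ for (iii). Your concern about the unspecified constant in Lemma~\ref{lem:boundnumbertriangle} is legitimate --- the paper's own proof of (ii) glosses over it as well --- but the sharp form of Rivin's inequality (a graph with $m$ edges has at most $\tfrac{\sqrt{2}}{3}m^{3/2}$ triangles) combined with $\size{H}\le\tfrac14(\eps T)^{2/3}$ does yield $T_H\le \eps T/8$ as claimed.
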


\begin{proof}

\begin{enumerate}

\item[(i)] From the definition of $\tau$-heavy edges (Definition~\ref{defi:heavylightedge}), an edge $e$ is $\tau$-heavy, if the number of triangles $\lambda_e$ incident on $e$ is at least $\tau$. Since the total number of triangles in the graph $G$ is $T$, the total number of $\tau$-heavy edges can be at most $3T/\tau$. As $\tau = 12(T/\eps^2)^{1/3}$, the result follows.

\item[(ii)] 
\remove{Following the result regarding bounding the total number of triangles in a graph with respect to its number of edges (Lemma~\ref{lem:boundnumbertriangle}),} From Lemma~\ref{lem:boundnumbertriangle}, we know that the total number of possible triangles over $m$ edges is at most $\Oh(m^{3/2})$. Since from $(i)$, we know that the total number of $\tau$-heavy edges is at most $\mathcal{O}((\eps T /8 )^{2/3})$, we have the result.

\item[(iii)] Follows directly from $(ii)$ as $T=T_L+T_H$.
\end{enumerate}
\end{proof}

Now let us proceed to describe our $3$-pass algorithm in the \al model. Recall the intuition described in \Cref{sec:al-multi-over}. We will sample a multi-set of edges $\cF$ uniformly and independently at random with replacement in the first pass. The edges of $\cF$ can be thought of as \emph{candidate light edges}. Next, we count the number of triangles $\lambda_e$ incident on every edge $e \in \cF$ to decide which edges of $\cF$ are light. Additionally, for every edge $e \in \cF$, we sample a triangle $s_e=(e,e_1,e_2)$ incident on $e$ uniformly. In the third and final pass, for every edge $e \in \cF$ and every sampled triangle $s_e= (e,e_1,e_2)$, we count the number of triangles incident on $e_1$ and $e_2$. In the processing phase, every heavy edge of $\cF$ is ignored. For every light edge $e \in \cF$, we consider the triangle $s_e$ further with a suitable probability, depending upon the number of triangles incident on the edges of $s_e$. The triangle $s_e$ corresponding to some light edge $e \in \cF$ is the desired output. The formal algorithm is described in \Cref{alg:3passal}. In order to prove the correctness of \Cref{alg:3passal}, i.e. proving \Cref{theo:tsamp-al-p-proof}, we first show the following lemma about \Cref{alg:3passal}.

\begin{algorithm}[H]
\caption{Triangle Sampling in $3$-pass \al-Model}\label{alg:3passal}


\SetAlgoLined

{\bf Pass 1}: Sample a multi-set $\cF$ of $\tOh(m/T^{2/3})$ edges from $\cM$ uniformly and independently at random with replacement.\ \label{algo:3passal_line1}

{\bf Pass 2}: \For{every edge $e \in \cF$}{ \label{algo:3passal_line2}

Count $\lambda_e$.\ \label{algo:3passal_line3}

Sample a triangle $s_e$ incident on $e$ uniformly at random.\ \label{algo:3passal_line4}

}




{\bf Pass 3}: \For{every edge $e \in \cF$ and sampled triangle $s_e$ incident on $e$}{ \label{algo:3passal_line5}


Let $e_1$ and $e_2$ be the other two edges of $s_e$. Compute $\lambda_{e_1}$ and $\lambda_{e_2}$.\ \label{algo:3passal_line7}

// This information along with $\lambda_e$ helps us to determine the value of $i$ such that $s_e$ is a $i$-light triangle.

}




{\bf Process}: Set $\tau \leftarrow  12(T/\eps^2)^{1/3}$.\ \label{algo:3passal_line8}

\For{every edge $e \in \cF$}{ \label{algo:3passal_line9}
 
 \uIf{$e$ is $\tau$-{heavy}}{ \label{algo:3passal_line10}
            
             {\sc Ignore} $e$.\ \label{algo:3passal_line11}
            }

\uElse{ 
             Consider the triangle $s_e$ sampled on $e$.\ \label{algo:3passal_line12}

Determine $i \in \{1,2,3\}$ such that $s_e$ is an $i$-light triangle.


               Mark $s_e$ with probability $ \lambda_e/i \tau$.\ \label{algo:3passal_line14}
              

            }
             


}

Report any marked triangle if there exists at least one such marked triangle. Otherwise, 
report {\sc FAIL} and {\sc ABORT} the process.\ \label{algo:3passal_line15}


\end{algorithm}

\begin{lem}\label{lem:lighttriangleprob}
With high probability, Algorithm~\ref{alg:3passal} samples a light triangle uniformly. 
\end{lem}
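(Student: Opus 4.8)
The plan is to show two things: (a) conditioned on at least one triangle being marked in Line~\ref{algo:3passal_line15}, the reported triangle is a uniformly random \emph{light} triangle; and (b) with high probability at least one triangle \emph{is} marked, so the algorithm does not output \textsc{FAIL}. For part (a), I would fix an arbitrary light triangle $\Delta$ and compute the probability that a single copy of an edge drawn in $\cF$ leads to $\Delta$ being marked. Suppose $\Delta$ has exactly $j \in \{1,2,3\}$ light edges. For $\Delta$ to be marked via a given sampled edge $e$, three independent events must occur: $e$ is one of the $j$ light edges of $\Delta$ (probability $j/m$ for a uniform edge draw), the triangle $s_e$ sampled on $e$ in Pass~2 equals $\Delta$ (probability $1/\lambda_e$, since $\Delta$ is one of the $\lambda_e$ triangles incident on $e$, and $e$ being light guarantees $\lambda_e < \tau$ so the marking probability is well-defined), and finally $\Delta$ survives the marking step, which happens with probability $\frac{\lambda_e}{j\tau}$ (here the $i$ computed in the algorithm equals $j$ because $s_e = \Delta$). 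Multiplying, the per-edge-draw probability that $\Delta$ gets marked is $\sum_{e \text{ light edge of }\Delta} \frac{1}{m}\cdot\frac{1}{\lambda_e}\cdot\frac{\lambda_e}{j\tau} = j \cdot \frac{1}{m}\cdot\frac{1}{j\tau} = \frac{1}{m\tau}$, which is crucially \emph{independent of $\Delta$} — the $\lambda_e$ factors cancel and the $j$ factors cancel. This uniformity across all light triangles is the heart of the argument.

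\textbf{From per-draw uniformity to the conditional distribution.} Since $\cF$ consists of $N := \tOh(m/T^{2/3})$ independent edge draws and for each draw the marking events for distinct light triangles are disjoint (a single sampled edge $e$ with its single sampled triangle $s_e$ can mark at most one triangle), the number of marked copies of any fixed light triangle $\Delta$ is $\mathrm{Bin}(N, \tfrac{1}{m\tau})$, and across the $N$ draws each ``marked triangle event'' is equally likely to be any particular light triangle. Hence, conditioned on the multiset of marked triangles being nonempty, reporting ``any marked triangle'' (say the first one, or a uniformly chosen one — I would make sure the algorithm's tie-break is specified, or note that by symmetry any deterministic choice among marked copies still yields a uniform light triangle because the marking process is exchangeable over light triangles) returns a uniformly random light triangle. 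A clean way to say this: let $X_\Delta$ be the number of times $\Delta$ is marked; the vector $(X_\Delta)_{\Delta \text{ light}}$ is exchangeable, so conditioned on $\sum_\Delta X_\Delta \geq 1$ the identity of a reported mark is uniform over light triangles. Note heavy triangles are never marked (if $\Delta$ is heavy, all its edges are heavy, so whenever $e \in \cF$ is an edge of $\Delta$ it is ignored in Line~\ref{algo:3passal_line11}), consistent with the claim being about light triangles only.

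\textbf{The success probability.} For part (b): the probability that a single edge draw produces \emph{some} marked light triangle is $\sum_{\Delta \text{ light}} \frac{1}{m\tau} = \frac{T_L}{m\tau}$. By Observation~\ref{obs:al3passboundno}(iii), $T_L \geq (1-\eps/8)T = \Omega(T)$, and with $\tau = 12(T/\eps^2)^{1/3}$ this is $\Theta_\eps\!\left(\frac{T^{2/3}}{m}\right)$. Therefore the probability that \emph{no} draw among the $N = \tOh(m/T^{2/3})$ independent draws marks anything is at most $\left(1 - \Theta_\eps(T^{2/3}/m)\right)^N \leq e^{-\Theta_\eps(T^{2/3}/m)\cdot N}$, which is $1/\poly(n)$ once the hidden polylog/$1/\eps$ factor in $N$ is chosen large enough. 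Combining (a) and (b): with high probability the algorithm outputs a triangle, and whenever it does, that triangle is uniform over the light triangles — which is exactly Lemma~\ref{lem:lighttriangleprob}.

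\textbf{Anticipated main obstacle.} The routine calculations above (the cancellations, the Chernoff-type bound) are straightforward; the subtle point I would be most careful about is part (a) — arguing rigorously that conditioning on ``at least one mark'' does not distort uniformity. One must ensure that whatever rule the algorithm uses to pick ``any marked triangle'' from possibly several marks is symmetric with respect to the light triangles (the natural candidates — first-marked, or uniform-at-random among marks — both work because the $N$ draws are i.i.d.\ and, per draw, each light triangle is marked with the same probability $\tfrac{1}{m\tau}$ and at most one is marked). A second point worth stating carefully: $e$ being $\tau$-light ensures $\lambda_e < \tau$, so $\lambda_e/(i\tau) < 1/i \le 1$ and the marking probability in Line~\ref{algo:3passal_line14} is genuinely a probability; and the value of $i$ the algorithm computes from $\lambda_e,\lambda_{e_1},\lambda_{e_2}$ in Pass~3 is exactly the number of light edges of $s_e$, which equals $j$ in the computation for $\Delta = s_e$. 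With these points pinned down, the lemma follows.
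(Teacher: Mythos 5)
Your proposal is correct and follows essentially the same route as the paper's proof: the same per-draw cancellation $\frac{i}{m}\cdot\frac{1}{\lambda_e}\cdot\frac{\lambda_e}{i\tau}=\frac{1}{m\tau}$ for an $i$-light triangle, the same union bound giving success probability $T_L/m\tau$ per draw, and the same boosting over $\tOh(m/T^{2/3})$ independent draws. If anything, you are more careful than the paper on the point of why conditioning on ``at least one mark'' and an arbitrary tie-break among marked triangles preserves uniformity (the paper only argues the $\size{\cF}=1$ case explicitly and asserts the rest), which is a welcome addition rather than a deviation.
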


\begin{proof}
From the description of the algorithm, note that the algorithm never reports a heavy triangle.

Now let us consider an arbitrary $i$-light triangle $\Delta$, where $i \in \{1,2,3\}$. Recall that in Pass 1, we have sampled a multi-set of edges $\cF$.  First consider the case when $\size{\cF}=1$. Note that to sample the triangle $\Delta$, we need to store one of the $i$-light edges of $\Delta$ in Pass 1. This happens with probability $i/m$. Moreover, for the sampled light edge $e$, $\Delta$ needs to be sampled among all the triangles incident on $e$, which happens with probability $1/\lambda_e$. Finally, Algorithm~\ref{alg:3passal} outputs the sampled triangle with probability $\lambda_e/i \tau$. Combining these arguments, the probability that $\Delta$ is sampled is: 
$= i/m \times 1/\lambda_e \times \lambda_e/i \tau= 1/m \tau$.


Applying the union bound over all light triangles over $G$, the probability that the algorithm does not report {\sc Fail} is $T_L/m\tau$, where $T_L$ denotes the number of light triangles. Observe that, under the conditional space that the algorithm does not report {\sc Fail}, it reports a light triangle uniformly at random with probability $T_L/m\tau$ when $\size{\cF}=1$. Under the conditional space that the algorithm does not report {\sc Fail}, the probability that it outputs $\Delta$ is $1/T_L$. That is, the algorithm reports a light triangle uniformly at random with probability $T_L/m\tau$ when $|\cF|=1$.

Since $\size{\cF}= \tOh(m/T^{2/3})$ and $T_L \geq (1-\eps/8)T$, with high probability the algorithm does not report {\sc Fail} and reports a light triangle uniformly.
\end{proof}


It is important to note that our algorithm does not sample any heavy triangle. Now we will show that the total error due to not sampling any heavy triangle is not large as $T_H \leq \eps T/8$ (from \Cref{obs:al3passboundno}).

\remove{
\begin{lem}\label{lem:heavytriangleerror}
    Total error due to not sampling any $\tau$-Heavy triangle is at most $\eps^{2/3}/2$.
\end{lem}

\begin{proof}
From the above discussion, we know that the probability of sampling a particular light triangle is $1/m \tau$. As the total number of light triangles in $G$ is $T_L$, using the union bound, we can say that the probability that a light triangle has been sampled is $T_L/m \tau$. Thus the probability of not sampling any triangle is $1- T_L/m \tau$.
    
From Observation~\ref{obs:al3passboundno} (iii), we know that $T_L \geq (1-\Oh(\eps))T$. This implies that the probability of not sampling any triangle is upper bounded by $\frac{\eps^{2/3} T^{2/3}}{m}$. As the number of edges in $\cF$ is $\Oh(m/T^{2/3})$, the total error due to heavy triangles is at most $\eps^{2/3}/2$.
\end{proof}

Now we are ready to prove the final theorem.}

\begin{proof}[Proof of Theorem~\ref{theo:tsamp-al-p-proof}]
From the description of Algorithm~\ref{alg:3passal}, the space complexity of the algorithm is $\size{\cF}=\tOh(m/T^{2/3})$. From Lemma~\ref{lem:lighttriangleprob}, Algorithm~\ref{alg:3passal} does not report {\sc Fail} with high probability, never outputs a heavy triangle and reports a light triangle uniformly. Under the conditional space that the output satisfies the guarantee by Lemma~\ref{lem:lighttriangleprob}, the distribution $\cD$ from which  Algorithm~\ref{alg:3passal} samples a triangle $\Delta$ is as follows. Let us denote the set of heavy and light triangles of $G$ as $\cT_H$ and $\cT_L$. Define
\[ \cD(\Delta)=\begin{cases} 
     1/T_L & \Delta \in \cT_L \\
      0 & \Delta \in \cT_{H}
   \end{cases}
\]

Let $\cU$ denote the uniform distribution over the set of all triangles in $G$. 
Now we bound the $\ell_1$-distance between $\mathcal{D}$ and $\mathcal{U}$ as follows:
\begin{align*}
    ||\mathcal{D} - \mathcal{U}||_1 &= \sum_{\Delta \in \cT_H} |\mathcal{D}(\Delta) - \mathcal{U}(\Delta)| + \sum_{\Delta \in \cT_L} \size{\mathcal{D}(\Delta) - \mathcal{U}(\Delta)} \leq T_H\cdot {1}/{T} + T_L \cdot \size{{1}/{T_L} - {1}/{T}}.
\end{align*}

From Observation~\ref{obs:al3passboundno}, $T_H \leq \eps T /8$ and $T_L\geq (1-\eps / 8)T$. Putting these bounds in the above expression, we have $||\cD-\cU||_1 \leq \eps$.
\end{proof}

\section{One-Pass Triangle Sampling in \al model 
}\label{sec:al1passalgo}
\noindent In this section, we give our one pass algorithm for triangle sampling in the \al model, hence proving the following theorem.

\begin{theo}[Upper bound of Theorem~\ref{theo:two}(ii)]\label{theo:al1pass_proof}
\tsamp can be solved by using $\tOh\left(m/\sqrt{T}\right)$ space in \al model in $1$-pass.
\end{theo}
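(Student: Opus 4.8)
The plan is to implement, in a single pass, the two subroutines sketched in \Cref{sec:al-onepass-over}: one that samples a light triangle almost uniformly and one that samples a heavy triangle exactly uniformly, and then to combine their outputs by an appropriate mixing step so that the overall output distribution is within $\ell_1$-distance $\eps$ of the uniform distribution $\cU$ over $t(G)$. First I would fix the threshold $\tau = \Theta(\sqrt{T})$ (with the precise constant and $\eps$-dependence chosen exactly as in \Cref{sec:al1passheavylight}), recall that every triangle $\Delta=\{a,b,c\}$ with $a\prec_s b\prec_s c$ is charged to its ``top'' edge $\{a,c\}$, so that $T=\sum_{e\in E}t_e$, and split $E=H\cup L$ into heavy edges ($t_e>\tau$) and light edges ($t_e\le\tau$), with $T_H=\sum_{e\in H}t_e$ and $T_L=\sum_{e\in L}t_e=T-T_H$. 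Since each heavy edge is charged $>\tau$ triangles, $|H|=O(\sqrt{T})$.

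The heavy part is the easier half. When the adjacency list of a vertex $c$ arrives, the edges $\{a,c\}$ with $a\prec_s c$ can be read off, and for each such edge $e=\{a,c\}$ the triangles charged to $e$ are exactly the wedges $\{a,b,c\}$ with $a\prec_s b\prec_s c$; crucially, all such $b$ have already appeared, so we can test $ab\in E$ once we have stored enough. Running a reservoir sampler RS that keeps, for each candidate heavy edge, one uniformly chosen triangle charged to it together with the exact count $t_e$ (the count certifies heaviness), and weighting the final choice of edge by $t_e/T_H$ via a weighted reservoir sampler WRS over heavy edges, produces a heavy triangle with probability exactly $(t_e/T_H)\cdot(1/t_e)=1/T_H$, i.e.\ uniform over heavy triangles. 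The space is $\tOh(|H|)=\tOh(\sqrt{T})$. The light part runs $\tOh(m/\sqrt{T})$ parallel copies of: pick an edge $e$ uniformly from the stream by reservoir sampling, and pick a triangle charged to $e$ uniformly among the $t_e$ of them, keeping it only if $t_e\le\tau$ and then only with probability $t_e/\tau$; as computed in the overview, each light triangle is retained by one copy with probability $1/(m\tau)$, so with $\tOh(m/\sqrt{T})$ copies at least one light triangle survives with high probability (assuming $T_L=\Omega(T)$), and conditioned on survival it is uniform over light triangles. The delicate streaming point is that ``pick a uniform triangle charged to $e$'' must itself be done online while also not knowing $e$ in advance: for the sampled edge $e=\{a,c\}$ we run a reservoir sampler over the wedges completing to a charged triangle, which again works because the third vertex $b$ satisfies $a\prec_s b\prec_s c$ and so its adjacency list (hence the edge $ab$) has already passed.

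Finally I would stitch the two halves together. Let $p=T_H/T$ be the (unknown) fraction of heavy triangles; we do not know $p$ exactly, but $T_H$ can be estimated to a $(1\pm\eps)$-multiplicative factor within the same $\tOh(\sqrt{T})$ space by summing the stored counts $t_e$ over detected heavy edges. Output the heavy-subroutine triangle with probability $\widehat{p}=\widehat{T_H}/T$ and the light-subroutine triangle with probability $1-\widehat p$ (handling the ``light subroutine failed'' event, which happens only with inverse-polynomial probability, by a negligible additive error). A triangle $\Delta$ is then output with probability $\widehat p/T_H$ if heavy and $(1-\widehat p)/T_L$ if light; since $\widehat p=(1\pm\eps)p$, $T_H/T=p$, and $T_L/T=1-p$, each of these is $(1\pm O(\eps))/T$, and summing $|\cD(\Delta)-1/T|$ over all triangles gives $\|\cD-\cU\|_1=O(\eps)$, which we rescale to $\le\eps$. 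The total space is $\tOh(\sqrt{T})+\tOh(m/\sqrt{T})=\tOh(m/\sqrt{T})$.

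The main obstacle I expect is the last-vertex / ordering bookkeeping in the streaming implementation: making precise that, for the charged edge $\{a,c\}$, every completing vertex $b$ with $a\prec_s b\prec_s c$ has already revealed its adjacency list so that membership $ab\in E$ can be decided on the fly with only $\tOh(1)$ extra state per sampler, and simultaneously that the reservoir sampler over those wedges is still updated correctly when the edge $\{a,c\}$ itself is only chosen later in the stream (equivalently, running the wedge-sampler over every candidate edge and discarding all but the chosen one, which is where the $\tOh(m/\sqrt T)$ parallel-copies budget is actually spent). Getting the exact constants in $\tau$ and in the number of copies so that the heavy-edge count is $O(\sqrt T)$, the light survival probability boosts to high probability, and the $\ell_1$ error comes out to $\eps$ rather than $O(\eps)$ is the routine-but-careful part; I would defer those to \Cref{sec:al1passsubroutine} and \Cref{sec:al1passheavylight}.
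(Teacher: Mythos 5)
Your overall architecture — charge each triangle to its $\{a,c\}$ edge, run a light sampler ($\tOh(m/\sqrt{T})$ uniformly sampled edges, each retained with probability $t_e/\tau$) and a heavy sampler in parallel, then mix the two outputs with probability $\widehat{T}_H/T$ — matches the paper, and your light subroutine and the final $\ell_1$ stitching are essentially the paper's \rouone and \Cref{theo:al-one}. The gap is in the heavy subroutine. You claim that in $\tOh(\sqrt{T})$ space you can maintain, for every heavy edge $e=\{a,c\}$, the \emph{exact} count $t_e$ and an \emph{exactly} uniform triangle charged to $e$, because ``all such $b$ have already appeared, so we can test $ab\in E$ once we have stored enough.'' This is precisely the step that fails: a triangle $\{a,b,c\}$ with $a\prec_s b\prec_s c$ can only be recognized at the time $c$'s adjacency list is revealed if the edge $\{a,b\}$ is still in memory, and that edge last appeared during $b$'s exposure. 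Since you do not know in advance which edges $\{a,c\}$ will turn out to be heavy, guaranteeing exact counts for all of them would require retaining essentially all previously seen edges, i.e.\ $\Omega(m)$ space. (For the light subroutine this problem does not arise, because there the sampled edge $\{a,c\}$ is already stored when the \emph{middle} vertex $b$ is exposed, and $b$'s own list reveals both $\{b,a\}$ and $\{b,c\}$.)

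The paper's resolution (\routwo, \Cref{lem:subroutine2}) is to subsample every edge into a set $\cF_2$ independently with probability $p=\Theta(\log n/\eps^2\sqrt{T})$; then for a heavy edge $e=\{v,u\}$ with $v\prec_s u$, each charged triangle $\{v,w,u\}$ is detected at $u$'s exposure with probability $p$ (namely, iff $\{v,w\}\in\cF_2$), giving a count $x_e$ with $x_e/p$ only a $(1\pm\eps/30)$-approximation to $t_e$, and a sampled triangle that is only $(1\pm\eps/15)$-close to uniform among the triangles charged to $e$. These approximation errors propagate into the heavy-triangle distribution (a $(1\pm\eps/5)$ factor in \Cref{lem:heavy--al-one}) and must be absorbed into the final $\ell_1$ budget; your analysis, which treats the heavy branch as exact, skips this. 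A secondary consequence is that ``heavy'' must be \emph{defined} via the observed statistic $x_e\ge\kappa$ rather than via $t_e>\tau$, so that the heavy/light partition of triangles used by the two subroutines is consistent (this is why the paper's \sampllight also consults the set $H$ output by \routwo; see \Cref{rem:light-heavy}). With the subsampling mechanism and these approximate guarantees in place, the rest of your argument goes through.
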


To prove the above theorem, we define a notion of charging triangles to edges such that each triangle is charged to only one edge. Based on the number of triangles charged to an edge, we will define that edge to be either \emph{heavy} or \emph{light}. Informally, a triangle is heavy (light) if the edge to which it is charged is heavy (light)~\footnote{The exact definitions of heavy/light triangles and edges are based on the stream and the randomness we are using in the algorithm. The definitions are presented in \Cref{sec:al1passheavylight}.}. Consider the vertices of the graph in the order in which they are revealed in the stream. Let $a \prec_s b$ denote that the adjacency list of vertex $a$ is revealed before the adjacency list of $b$. So, the set of vertices of the graph form a total order w.r.t. $\prec_s$ relation. Consider a triangle $\Delta=\{a,b,c\}$ such that $a \prec_s  b \prec_s c$. We say that $\Delta$ is \emph{charged} to the edge $\{a,c\}$. For an edge $e=\{a,c\}$ such that $a \prec_s c$, $t_{e}$ denotes the number of triangles charged to $e$. Since every triangle is assigned to a unique edge, we  have $T=\sum_{e \in E} t_e$.

The main algorithm {\sc Triangle-Sample-\al-1pass} (Algorithm~\ref{alg:1passal}) of \Cref{theo:al1pass_proof} will be discussed in \Cref{sec:mainalgoal1pass} and \ref{sec:improve}. We discuss the  algorithm {\sc Triangle-Sample-\al-1pass} with space complexity $\tOh(m/\sqrt{T}+\sqrt{T})$. However, we can make some  modifications to the algorithm to obtain the desired space complexity  $\tOh(m/\sqrt{T})$. The details are to be discussed in \Cref{sec:improve}.
To present the main algorithm, we need two streaming subroutines to be run in parallel when the edges are appearing in the stream: \rouone (Algorithm~\ref{algo:sub1}) and \routwo (Algorithm~\ref{algo:sub2}) having space complexities $\tOh(m/\sqrt{T})$ and $\tOh(m/\sqrt{T}+\sqrt{T})$, respectively. These subroutines are described in \Cref{sec:al1passsubroutine}. 
In the post processing phase after the stream ends, {\sc Triangle-Sample-\al-1pass} has again two subroutines {\sc Sample-Light-Triangle} (Algorithm~\ref{alg:1passallight}) and {\sc Sample-Heavy-Triangle} (Algorithm~\ref{alg:1passalheavy}) which are discussed in \Cref{sec:al1passheavylight}. Note that {\sc Sample-Light-Triangle}  uses outputs from \rouone and \routwo,  where {\sc Sample-Heavy-Triangle} uses only the output of \routwo~\footnote{One may wonder that why we are using \routwo in {\sc Sample-Light-Triangle}. But we defer that discussion to \Cref{sec:al1passheavylight}.}. Finally, {\sc Triangle-Sample-\al-1pass}  combines the outputs of {\sc Sample-Light-Triangle}  and {\sc Sample-Heavy-Triangle} to report the final desired output.


\subsection{Descriptions of \rouone and \routwo}\label{sec:al1passsubroutine}

\noindent Here we describe the subroutines \rouone and \routwo mentioned before.

\subsubsection*{Description of \rouone}\label{sec:samplighthelper}
\noindent Recall the intuition discussed to sample a light triangle in the overview in \Cref{sec:al-onepass-over}.
To make the  intuition work, we use \rouone. It samples $\tOh(m/\sqrt{T})$ edges uniformly at random with replacement. Let $\cF_1$ be the set of sampled edges. But to count the number of edges charged to an edge $e=\{u,v\}\in \cF_1$ (such that $u \prec_s v$) and sample a triangle charged to $\{u,v\}$, we desire that $e=\{u,v\}$ is sampled when $u$ is exposed. So, by exploiting the fact that we are in the \al model, one can see all the triangles charged to $\{u,v\}$ in between the exposures of $u$ and $v$. Hence,  we can maintain the value of $t_e$ (the number of triangles charged to $e$) and a triangle charged to $e$ uniformly at random. The pseudocode for \rouone is presented in \Cref{algo:sub1} and its guarantee is stated in \Cref{lem:subroutine1}. The formal discussion about how \rouone will be helpful in sampling a light triangle will be discussed in \Cref{sec:sampheavylight}.

\begin{lem}\label{lem:subroutine1}
\rouone (\Cref{algo:sub1}) uses $\tOh(m/\sqrt{T})$ space. It reports a multiset $\cF_1$ of $\tOh(m/\sqrt{T})$ edges chosen uniformly at random with replacement from the stream. For each edge $e\in \cF_1$, the algorithm stores the following: a Boolean variable $\mathsf{flag}(e)$, a counter $t_e'$, and a triangle $\Delta_e$ containing edge $e$ if $t_e'\neq 0$~\footnote{It might be the case that $\Delta_e$ is not charged to $e$.}. Moreover, for an edge $e \in \cF_1$, if $\mathsf{flag}(e)=1$, then  (i) $t_e'=t_e$, i.e., the number of triangles  charged to $e$, and (ii) $\Delta_e$ is a triangle chosen uniformly at random among all triangles charged to $e$.
\end{lem}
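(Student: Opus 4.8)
\textbf{Proof plan for Lemma~\ref{lem:subroutine1}.}

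The plan is to describe \rouone concretely enough that all four claimed properties fall out, and then verify them one by one. First I would set up the reservoir-sampling backbone: before the stream starts, initialize $k = \tOh(m/\sqrt{T})$ independent reservoir samplers (sampling with replacement, so really $k$ independent single-element reservoir samplers), each of which will, at the end of the stream, hold one uniformly random edge of $\cM$. In the \al model each edge $\{u,v\}$ appears twice — once when $u$'s adjacency list is revealed and once when $v$'s is revealed — so I would fix the convention that edge $\{u,v\}$ is ``sampled'' only at its \emph{first} appearance, i.e., when the earlier endpoint (in $\prec_s$) has its adjacency list exposed; this makes each reservoir sampler pick a uniformly random edge $e = \{u,v\}$ with $u \prec_s v$, and crucially it picks it \emph{at the moment $u$ is exposed}, before $v$ is exposed. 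This last point is exactly what lets us see, for the chosen edge $e$, all triangles charged to $e$: a triangle $\{a,b,c\}$ with $a \prec_s b \prec_s c$ is charged to $\{a,c\}$, and both edges $\{a,b\}$ and $\{b,c\}$ are revealed strictly between the exposure of $a$ and the exposure of $c$ (the edge $\{a,b\}$ when $a$ is exposed or when $b$ is, whichever is first — in any case at or after $a$; and $\{b,c\}$ when $b$ is exposed, which is after $a$ and before $c$). So once $e = \{u,v\}$ with $u \prec_s v$ enters a reservoir, we open a ``tracking window'' from that point until $v$'s adjacency list is exposed.

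Next I would specify the bookkeeping inside a tracking window for a sampled edge $e = \{u,v\}$. Maintain a counter $t_e'$, initialized to $0$, a Boolean $\mathsf{flag}(e)$, and a stored triangle $\Delta_e$. During the window, whenever we observe (via the adjacency lists being revealed) a vertex $w$ with $u \prec_s w \prec_s v$ such that both $\{u,w\}$ and $\{w,v\}$ are edges — equivalently, a triangle charged to $e$ — we increment $t_e'$ and update $\Delta_e$ by a standard single-slot reservoir update (replace $\Delta_e$ by the new triangle with probability $1/t_e'$). To detect such $w$ we need, at the time $v$ would complete the triangle, to know whether $w$ is a common neighbor of $u$ and $v$ with $u \prec_s w \prec_s v$; the cleanest implementation is: when $u$ is exposed, record $N(u)$ restricted to vertices not yet exposed is not available, so instead, when each later vertex $w$ with $u \prec_s w$ is exposed, check whether $u \in N(w)$ (it is in $w$'s adjacency list) and if so add $w$ to a candidate set $C_e$; when finally $v$ is exposed (ending the window), scan $v$'s adjacency list and for every $w \in C_e \cap N(v)$ with $w \prec_s v$, count it as a triangle charged to $e$, doing the reservoir updates in that final scan. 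Set $\mathsf{flag}(e) = 1$ if the candidate set $C_e$ never exceeded the budget we are willing to store, and $\mathsf{flag}(e) = 0$ otherwise; in the latter case $t_e'$ may be a partial count. (This is precisely the event that lets the lemma hedge: $\mathsf{flag}(e) = 1$ certifies $t_e' = t_e$ and $\Delta_e$ uniform.)

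Then the verification. \emph{Correctness of $\cF_1$:} by the reservoir-sampling guarantee (Vitter), each of the $k$ slots independently holds a uniformly random edge of the stream-with-first-appearance convention, hence uniform over $E$; so $\cF_1$ is $k = \tOh(m/\sqrt{T})$ edges chosen uniformly at random with replacement. \emph{The conditional guarantees:} condition on $\mathsf{flag}(e) = 1$, i.e., the tracking window for $e = \{u,v\}$ completed without overflowing storage. By the window argument above, \emph{every} triangle charged to $e$ has its ``middle'' vertex $w$ satisfying $u \prec_s w \prec_s v$ and is therefore detected during the window, and no triangle \emph{not} charged to $e$ is counted (a triangle $\{u,v,w\}$ with $v \prec_s w$ has $w$ exposed after the window closes, and one with $w \prec_s u$ is charged to a different edge and $w$ is never added to $C_e$ since $u \notin N(w)$ would... — actually here one must be slightly careful and argue via the charging rule: if $w \prec_s u \prec_s v$ the triangle is charged to $\{w,v\}$, not $\{u,v\}$; we only increment $t_e'$ for $w$ with $u \prec_s w \prec_s v$, which is exactly the charging condition for $e$). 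Hence $t_e' = t_e$, property (i). Property (ii): the stored $\Delta_e$ is produced by a single-slot reservoir sampler fed exactly the $t_e$ triangles charged to $e$ in stream order, so by the reservoir guarantee it is uniform among them. Finally, $\Delta_e$ is defined (stored) iff at least one triangle was seen, i.e., iff $t_e' \neq 0$. \emph{Space:} $k = \tOh(m/\sqrt{T})$ slots, each storing $O(1)$ edges/counters/triangles plus a candidate set $C_e$ capped at $\tOh(1)$ vertices (overflow just sets $\mathsf{flag}=0$), giving $\tOh(m/\sqrt{T})$ total.

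\textbf{Main obstacle.} The delicate point — and the one I would spend the most care on — is pinning down exactly what $\mathsf{flag}(e)$ is and why capping $|C_e|$ does not destroy the uniformity or the unbiasedness we need downstream: we must store a triangle $\Delta_e$ (possibly \emph{not} charged to $e$) whenever $t_e' \neq 0$ even when $\mathsf{flag}(e) = 0$, yet only \emph{promise} correctness when $\mathsf{flag}(e) = 1$, so the analysis in \Cref{sec:al1passheavylight} must later show that the fraction of light edges with $\mathsf{flag}(e) = 0$ is negligible (this is presumably why \routwo, which handles heavy edges accurately, is also invoked there). Getting the window-detection mechanism to be implementable in the \al model with only $O(1)$-per-slot extra state — rather than needing $N(u)$ stored — is the other place where the \al model's structure is essential and must be spelled out carefully rather than waved at.
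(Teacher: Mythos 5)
Your overall architecture (first-appearance convention for reservoir sampling, a tracking window between the two appearances of the sampled edge, a single-slot reservoir for $\Delta_e$) matches the paper, but your triangle-detection mechanism inside the window has a genuine flaw that the paper's \rouone avoids. You propose that when a vertex $w$ with $u \prec_s w \prec_s v$ is exposed you only check whether $u \in N(w)$, defer the check $v \in N(w)$ until $v$'s list arrives, and meanwhile store $w$ in a candidate set $C_e$. This is unnecessary: in the \al model the \emph{entire} adjacency list of $w$ is revealed when $w$ is exposed, so at that very moment you see both $\{w,u\}$ and $\{w,v\}$ (the latter even though $v$ has not yet been exposed), and you can decide on the spot whether $\{u,w,v\}$ is a triangle charged to $e=\{u,v\}$. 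This is exactly what Step~2 of \Cref{algo:sub1} does (``$u$ forms a triangle $\Delta_e$ with $e$'' is checked against $u$'s full adjacency list), and it is why the per-edge state is genuinely $O(1)$: one flag, one counter, one stored triangle.

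The candidate set is not just inelegant; it breaks the lemma. $|C_e|$ can be as large as $\deg(u)$, so it must be capped, and your fix --- setting $\mathsf{flag}(e)=0$ on overflow --- changes the semantics of $\mathsf{flag}$. In the paper, $\mathsf{flag}(e)=1$ is precisely the event ``$e$ entered $\cF_1$ at its first appearance and survived until its second appearance,'' and the downstream proof of \Cref{lem:light} computes $\pr(e_s=\{a,c\}\ \mbox{and}\ \mathsf{flag}(e_s)=1)=\frac{1}{2m}$ using exactly this, uniformly over all edges. Under your redefinition this probability acquires a dependence on $\deg(u)$ (high-degree edges overflow and get flagged $0$ more often), which biases the conditional edge distribution and invalidates the $\frac{1}{2m}\cdot\frac{1}{t_e}\cdot\frac{t_e}{\tau}$ calculation; your suggestion that one could later argue the fraction of such edges is negligible is not something the paper needs or does, and it is not obviously true (a single high-degree light vertex suffices to create many overflowing windows). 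The repair is simply to drop $C_e$ and detect each charged triangle at the exposure of its middle vertex, after which your verification of (i) and (ii) --- no triangle with third vertex before $u$ or after $v$ is ever counted, and the single-slot reservoir over the detected triangles is uniform --- goes through exactly as in the paper.
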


\begin{proof}
As $\size{\cF_1}=\tOh(m/\sqrt{T})$ and the algorithm stores $O(1)$ information for each edge $e \in \cF_1$ (i.e., $\mathsf{flag}(e), t_e'$, and possibly $\Delta_e$), the space complexity of the algorithm follows.

Now consider the case when $\mathsf{flag}(e)=1$. Let $e=\{v,w\}$ and we have set $\mathsf{flag}(e)=1$ when $w$ is exposed in the stream. From the description of the algorithm, this implies that $e$ is already present in $\cF_1$ before $w$ is exposed, i.e, we have included $e$ into $\cF_1$ when $v$ was exposed and we have set $\mathsf{flag}(e)=0$ at that point of time and it remains $0$ until the exposure of $w$. Consider a triangle $\{v,u,w\}$ such that $v \prec_s u \prec_s w$, i.e., the triangle is charged to $e$. When $u$ is exposed, from the description of the algorithm (\Cref{algo:sub1}), we encounter the triangle $\{v,u,w\}$ and increment the counter $t_e'$.  Now consider any triangle containing $e$ but not charged to $e$, say $\{v,w,u'\}$, either $u'$ is exposed before $v$ or after $w$ in the stream. In the first case, $\mathsf{flag}(e)$ has not been initialized. In the second case, $\mathsf{flag}(e)=1$. In  either of the cases, observe that the algorithm does not consider changing $t_e'$. So, when $\mathsf{flag}(e)=1$ for $e \in \cF_1$, then $t_e'$ is in fact $t_e$. The fact that the triangle $\Delta_e$ is chosen uniformly at random among all the triangles charged to $e$ can be argued in a similar fashion.
\end{proof}

\begin{algorithm}[H]

\caption{\rouone}\label{algo:sub1}

Set $\cF_1 \leftarrow \emptyset$. Initiate a (unweighted) reservoir sampler $\mbox{{\sc RS}}$  to store (up to) $\tOh(m/\sqrt{T})$ edges.\
\label{algo:1passalsub1_line1}

Note that $\cF_1$ is the set of edges stored by {\sc RS}.

\For{every vertex $u$ when its adjacency list is revealed}{
Perform the following steps in parallel:

{\bf Step 1:} \label{algo:1passalsub1_line2}
\For{every edge $e=\{u,v\}$}{

\uIf{$\{v,u\} \in \cF_1$}{
 Set $\mathsf{flag}(e)=1$. \label{algo:1passalsub1_line3}
 
}

Give $e=\{u,v\}$ as an input to ${\sc RS}$.\ \label{algo:1passalsub1_line4}
}
{\bf Step 2:} \For{every edge $e=\{v,w\} \in \cF_1$}{ \label{algo:1passalsub1_line5}

\uIf{$\mathsf{flag}(e)=0$ and $u$ forms a triangle $\Delta_e$ with $e$}{

$t'_{e} \leftarrow t'_{e} + 1$.\ \label{algo:1passalsub1_line6}

Give $\Delta_e$ as input to {\sc RS} that samples a triangle charged to $e$ uniformly at random. \label{algo:1passalsub1_line7}

} 

}

\color{red}







\color{black}




      
}

\For{each edge $e \in \cF_1$}
{
Report $\mathsf{flag}(e),t_e'$, and $\Delta_e$.

}

\end{algorithm}

\subsubsection*{Description of \routwo}\label{sec:sampheavyhelper}

\noindent Recall the intuition discussed to sample a heavy triangle in the overview in \Cref{sec:al-onepass-over}.
To make the intuition work, we use \routwo. Note that it stores a set $\cF_2$ of edges where each edge is included in $\cF_2$ independently with probability $p=100\log n/\eps^2\sqrt{T}$.
First, let us discuss how we essentially detect all the heavy edges. Consider a heavy edge $e=\{u,v\}$ such that $u \prec_s v$. If $\{u,v\}$ is included in $\cF_2$ when $u$ is exposed, then one can compute $t_e$ exactly (as we do in \rouone).
But it may be the case that $e$ is  not included in $\cF_2$ when $u$ is exposed as we are including it in $\cF_2$ with probability $p$. However, we can guarantee that a \emph{good fraction} of edges of the form $\{u,w\}$ are included in $\cF_2$ (with high probability) such that $\{u,v,w\}$ is a triangle charged to $e$. When $\{u,v\}$ arrives in the stream for the second time (during the exposure of $v$), we can detect the set $S_e$ of such triangles $\{u,v,w\}$ such that $e=\{u,w\}\in \cF_2$. Due to the properties of the \al model, we can find the set $S_e$ for every heavy edge $e$. We can argue that $|S_e|/p$ approximates $t_e$ with high probability. Moreover,  we show that a triangle taken uniformly at random from $S_e$ is essentially similar to a triangle chosen uniformly at random from all the triangles charged to $e$. The pseudocode for \routwo is presented in \Cref{algo:sub2} and its guarantee is stated in \Cref{lem:subroutine2}. The formal discussion about how \routwo will be helpful in sampling a heavy triangle will be discussed in \Cref{sec:sampheavylight}. 

\begin{algorithm}[H]
\caption{\routwo}\label{algo:sub2}
Set $\cF_2 \leftarrow \emptyset$, and $\kappa=10\log n$, $H \gets \emptyset$.\ \label{algo:1passalsub2_line1}



\For{every vertex $u$ when its adjacency list is revealed}
{

Perform Steps 1 and 2 in parallel:

{\bf Step 1:} \For{every edge $e=\{u,v\}$ in the stream}
{\label{algo:1passalsub2_line3}
Set $x_e=0$. \label{algo:1passalsub2_line4}

 $H=H \setminus \{e\}$.

If {$e \in \cF_2$}, 
Set $\mathsf{flag}(e)=1$.\ \label{algo:1passalsub2_line5}

\uElse{

Set $\mathsf{flag}(e)=0$.\ \label{algo:1passalsub2_line6}

}

Include $\{u,v\}$ in $\cF_2$ with probability $p= 100 \log n/\eps^2 \sqrt{T}$.

}
{\bf Step 2:} \For{each edge $e = \{x,w\} \in \cF_2$ such that $\mathsf{flag}(e)=1$}{ 

\label{algo:1passalsub2_line9}

Determine if $\{u,x,w\}$ is a triangle.\ \label{algo:1passalsub2_line10}

$x_{e}\leftarrow x_{e}+1$. \label{algo:1passalsub2_line11}

}

{\bf Processing information obtained from Step 1 and 2:}

After the adjacency list of $u$ gets exposed, for each edge $e$ revealed during the exposure of $u$, do the following: 

\uIf{${x}_e \geq \kappa$}{

Include $e$ in $H$.

Choose a triangle $\Delta_e$  uniformly at random among the set of triangles charged to $e$ and detected in {\bf Step 2}.

}

\color{red}





\color{black}
}
\For{each edge $e \in H$}
{
report $x_e$ and $\Delta_e$.

}

\end{algorithm}

\begin{lem}\label{lem:subroutine2}
\routwo (\Cref{algo:sub2}) uses $\tOh\left({m}/{\sqrt{T}}+\sqrt{T}\right)$ space in expectation. It reports a set of edges $\cF_2$ where each $e \in E$ is included in $\cF_2$ with probability $p={100\log n}/{\eps^2 \sqrt{T}}$ independently.  The algorithm stores a set of heavy edges $H$. Also, for each  $e \in H$, it stores a random variable $x_e$ (to possibly estimate $t_e$, i.e, the number of triangles charged to $e$), and a triangle $\Delta_e$ charged to edge $e$ if $x_e \neq 0$. Moreover,  $x_e/p$ is a $(1\pm \eps/30)$-approximation to $t_e$. Also, for any triangle $\Delta$ (charged to $e$), we have $\Pr(\Delta_e=\Delta)$ = $(1\pm \eps/15)\cdot 1/t_e$.
\end{lem}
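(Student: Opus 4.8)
\textbf{Proof proposal for Lemma~\ref{lem:subroutine2}.}

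The plan is to establish the four claims of the lemma separately: the space bound, the marginal inclusion property of $\cF_2$, the concentration of $x_e$ around $p\cdot t_e$ for heavy edges, and the near-uniformity of $\Delta_e$. The marginal property is immediate from the description of \routwo, since in Step~1 we independently include each revealed edge $\{u,v\}$ into $\cF_2$ with probability $p=100\log n/\eps^2\sqrt{T}$ exactly once (the first time it is seen, i.e.\ when the earlier endpoint is exposed), so each $e\in E$ lands in $\cF_2$ with probability $p$ independently. For the space bound I would argue in expectation: $\E[\size{\cF_2}]=pm=\tOh(m/\sqrt T)$, and the set $H$ of heavy edges stored has size $\Oh(\sqrt T)$ because each edge $e\in H$ has $x_e\geq\kappa=10\log n$ meaning at least that many triangles charged to $e$ were detected, and since the total number of triangles is $T$ and each is charged to a unique edge, $\size{H}\le 3T/\kappa$; actually I expect the cleaner bound uses that $H$ only contains genuinely heavy edges (with $t_e$ large), so $\size{H}\le T/\tau$ for the appropriate threshold $\tau=\Theta(\sqrt T)$, giving $\Oh(\sqrt T)$. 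Together with $\Oh(1)$ words stored per edge of $\cF_2\cup H$ (a flag, a counter $x_e$, and one witness triangle $\Delta_e$), this yields $\tOh(m/\sqrt T+\sqrt T)$ in expectation.

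For the approximation guarantee, fix an edge $e=\{u,w\}$ with $u\prec_s w$ and let $t_e$ be the number of triangles $\{u,x,w\}$ charged to $e$ (so $u\prec_s x\prec_s w$). The key observation, exploiting the \al model, is that when $x$ is exposed (which happens strictly between the exposures of $u$ and $w$), the edge $\{u,x\}$ has already been seen once (during $u$'s exposure) and hence its $\mathsf{flag}$ is set, and the edge $\{u,x\}$ was placed into $\cF_2$ with probability $p$ independently; the algorithm then, in Step~2 at the exposure of $w$ — wait, more precisely, for each such triangle the counter $x_e$ associated with $e=\{u,w\}$... I would set this up carefully so that $x_e = \sum_{\Delta \text{ charged to } e} Y_\Delta$ where $Y_\Delta$ is the indicator that the appropriate ``detecting'' edge of $\Delta$ landed in $\cF_2$, and these indicators are independent across the $t_e$ triangles charged to $e$ with mean $p$. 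Then $\E[x_e]=p\cdot t_e$, and by a Chernoff bound, since for heavy $e$ we have $t_e\ge\tau=\Theta(\sqrt T)$ so $p\cdot t_e = \Omega(\log n/\eps^2)$, we get $x_e=(1\pm\eps/30)p\,t_e$ with probability $1-1/\poly(n)$; a union bound over the (at most $m^2$, or better $\Oh(m^{3/2})$ via Lemma~\ref{lem:boundnumbertriangle}) edges makes this simultaneous. Hence $x_e/p$ is a $(1\pm\eps/30)$-approximation of $t_e$. The sampling claim follows because $\Delta_e$ is chosen uniformly at random from the detected set $S_e=\{\Delta \text{ charged to }e : Y_\Delta=1\}$, which is a uniformly random subset of the $t_e$ triangles charged to $e$ (each included independently w.p.\ $p$); conditioned on $\size{S_e}=x_e$ it is a uniform $x_e$-subset, so any fixed $\Delta$ charged to $e$ satisfies $\Pr(\Delta_e=\Delta)=\Pr(\Delta\in S_e)\cdot\E[1/\size{S_e}\mid \Delta\in S_e]$. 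Using $\Pr(\Delta\in S_e)=p$ and concentration of $\size{S_e}$ around $p\,t_e$ (within $(1\pm\eps/30)$, as above, and noting $\E[1/\size{S_e}\mid\Delta\in S_e]$ is controlled by the same high-probability event plus a crude bound on the rare bad event since $\size{S_e}\ge1$ always on this conditioning), this evaluates to $(1\pm\eps/15)\cdot 1/t_e$.

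The main obstacle I anticipate is the last step: bounding $\E[1/\size{S_e}\mid \Delta\in S_e]$ cleanly. One cannot just substitute the concentrated value of $\size{S_e}$ into $1/\size{S_e}$ without justification, because $1/x$ is convex; I would handle this by splitting on the high-probability event $\cG$ that $\size{S_e}\in(1\pm\eps/30)p\,t_e$ — on $\cG$, $1/\size{S_e}$ is within $(1\pm\eps/20)/(p\,t_e)$, and on $\cG^c$ (probability $\le 1/\poly(n)$) we use $1/\size{S_e}\le 1$, whose contribution is negligible compared to $1/(p\,t_e)=\Oh(\eps^2\sqrt T/\log n)\cdot$(stuff) — one has to check $1/\poly(n)$ genuinely dominates, which it does since $p\,t_e$ is at most $\poly(n)$. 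Combining, $\E[1/\size{S_e}\mid\Delta\in S_e] = (1\pm\eps/20)/(p\,t_e) + \Oh(1/\poly(n))= (1\pm\eps/18)/(p\,t_e)$, and multiplying by $\Pr(\Delta\in S_e)=p$ gives the claimed $(1\pm\eps/15)/t_e$. A secondary subtlety is ensuring the ``detecting edge'' of each triangle charged to $e$ is well-defined and that distinct triangles use distinct independent coin flips (no reuse of the same edge's $\cF_2$-membership for two different triangles charged to the same $e$) — this holds because the triangles $\{u,x,w\}$ charged to $e=\{u,w\}$ have distinct apexes $x$, hence distinct detecting edges $\{u,x\}$, so the indicators $Y_\Delta$ are genuinely independent.
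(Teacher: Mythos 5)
Your proposal is correct and follows essentially the same route as the paper's proof: expected size of $\cF_2$ plus the $\Oh(\sqrt T)$ bound on $H$ (via the whp lower bound $t_e=\Omega(\sqrt T)$ for edges passing the $\kappa$ threshold) for space, a Chernoff bound on the sum of independent per-triangle indicators (distinct apexes give distinct detecting edges) for the approximation, and $\Pr(\Delta_e=\Delta)=\Pr(\Delta\in X')\cdot 1/\size{X'}$ for the sampling guarantee. If anything, your treatment of $\E[1/\size{X'}\mid\Delta\in X']$ via splitting on the concentration event is more careful than the paper's, which simply substitutes the concentrated value of $\size{X'}$.
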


\begin{proof}[Proof of \Cref{lem:subroutine2}]
As each edge is included in $\cF_2$ with probability $p= 100 \log n/\eps^2 \sqrt{T}$, $\size{\cF_2}=100 m \log n/\eps^2 \sqrt{T}$ in expectation. For each edge $e \in H$, the algorithm stores $x_e$ and possibly $\Delta_e$. Later in the proof, we argue that $|H|=\Oh({\sqrt{T}})$ with high probability. So, the expected space complexity is bounded by $\tOh(\size{\cF_2}+\size{H})= \tOh\left({m}/{\sqrt{T}}+\sqrt{T}\right)$.

Now, we argue that for an edge $e=\{u,v\} \in H$, $x_e/p$ is a $(1 \pm \eps/30)$-approximation to $t_e$ with high probability. Assume the adjacency list of $v$ is revealed before the adjacency list of $u$. Let $X$ be the set of triangles charged to $e=\{u,v\}$ and ${X}'\subset X$ detected in {\bf Step 2} of \routwo when the adjacency list of $u$ is revealed. For any triangle $\Delta=\{u,v,w\}$ charged to $e=\{u,v\}$, $\Delta \in X'$ if the edge $\{v,w\}$ has been added to $\cF_2$ when the adjacency list of $v$ was revealed. So, the probability that the triangle $\{u,v,w\}$ is in $X'$ is $p$. Note that $x_e=\size{X'}$ and $t_e=\size{X}$.

Using Chernoff bound, we can show that for any edge $e$, if $t_e \geq \tau=900\sqrt{T}$, then with high probability, $x_e \geq\kappa$, and, if $t_e < \tau/10$, then $x_e < \kappa$ with high probability.

Consider an edge $e$ with $x_e \geq \kappa$. Note that $t_e \geq \tau/10$ with high probability. By Chernoff bound, $\widehat{t}_e={x}_e/p$ is a $\left(1 \pm \eps/30\right)$-approximation to $t_e$ with high probability. 

From the description of the algorithm,  we take a triangle $\Delta_e$ uniformly at random from the set $X'$. Recall that ${X}'$ is the set of triangles charged to $e$ that are detected in {\bf Step 2} of \routwo when the adjacency list of $u$ is revealed.  For any triangle  $\Delta$ charged to $e=\{u,v\}$, we can deduce the following:
$$\pr(\Delta_e=\Delta)=\Pr(\Delta \in X') \cdot \frac{1}{\size{X'}}.$$

Let the other vertex in the triangle $\Delta$ be $x$. So, $\Delta$ is included in $X'$ during the exposure of $u$ if the edge $\{v,x\}$ is included in $\cF_2$ during the exposure of $v$. So, $\pr(\Delta \in X')=p$. Also, note that $\size{X'}/p = x_e/p=\widehat{t}_e$ is a $(1\pm \eps/30)$-approximation to $t_e$ with high probability.  

Now, we bound the size of $H$. For each edge $e \in H$, $x_e \geq \kappa$, i.e, $t_e \geq \tau/10$ with high probability. As each triangle is charged to a unique edge, $|H|$ is bounded by $10T/\tau \leq \Oh(\sqrt{T})$. Hence, we are done with the proof of the lemma.
\end{proof}

\subsection{Notions of heavy and light edges and triangles}\label{sec:al1passheavylight}
\noindent
Note that, in the overview in \Cref{sec:al-onepass-over}, we defined an edge $e$ to be either heavy or light based on the value of $t_e$, i.e., the exact number of triangles charged to $e$. For a heavy edge $e$, as \routwo does not determine $t_e$ exactly, the notions of heavy and light edges in terms of actual number of triangles charged to them would not be helpful in the analysis. To cope up with that, we define an edge $e$ to be heavy or light based on the value of $x_e$ in \routwo.


\begin{defi}[Heavy and light edge]
Let $e=(u,v)$ such that $u \prec_s v$. $x_e$ is the random variable that denotes the number of triangles that are charged to $e$, and is detected by \routwo (Algorithm~\ref{algo:sub2}), when the adjacency list of $v$ is revealed. 
Let $\kappa \in \N$ be a parameter. An edge $e \in E$ is said to be \emph{$\kappa$-heavy} or simply \emph{heavy} if $x_e \geq \kappa$. Otherwise, $e$ is said to be \emph{$\kappa$-light} or simply \emph{light}.
\end{defi}

Now we define the notion of heavy and light triangles.

\begin{defi}[Heavy and light triangle]
Let $\kappa\in \N$ and $\Delta$ be a triangle in the graph $G$, which is charged to the edge $e$. $\Delta$ is said to be $\kappa$-heavy if $e$ is $\kappa$-heavy or simply \emph{heavy}. Otherwise, $\Delta$ is $\kappa$-light or simply \emph{light}.
\end{defi}

Let $\cT_L$ and $\cT_H$ denote the sets of $\kappa$-light and $\kappa$-heavy triangles in $G$, respectively, and $T_L$ and $T_H$ be the number of $\kappa$-light and $\kappa$-heavy triangles in $G$, respectively. Thus $T_L=\size{\cT_L}$ and $T_H=\size{\cT_H}$ and $T=T_L +T_H$. 


\subsection{Descriptions of \sampllight and \samplheavy}\label{sec:sampheavylight}
In this section, we describe the algorithms \sampllight and \samplheavy that essentially sample a $\kappa$-light edge and $\kappa$-heavy edge, respectively.
The two algorithms (\sampllight and \samplheavy) are not streaming algorithms in nature, they are working on the outputs produced by \rouone and \routwo (which are both working on the stream in parallel).

\color{black}
\subsubsection*{Description of \sampllight}\label{sec:samplight}

\begin{algorithm}[H]
\caption{{\sc Sample-Light-Triangle}}\label{alg:1passallight} 


\SetAlgoLined


Let $\cF_1$ be the set of edges we get from \rouone satisfying \Cref{lem:subroutine1}. Also, let $H$ be the set of heavy edges we get from \routwo satisfying \Cref{lem:subroutine2}.


Set $\tau=900 \sqrt{T}, \mbox{{\sc Sampled-Triangle}} \leftarrow \emptyset$.\ \label{alg:1passallight_line1}

\For{every edge $e \in \cF_1$ with $\mathsf{flag}(e)=1$ and $e \notin H$}{
\label{alg:1passallight_line2}

\uIf{$t_e \leq \tau$}{

Keep the  triangle $\Delta_e$ charged to $e$ reported by \rouone. 

$\mbox{\sc Sampled-Triangle} \leftarrow \mbox{\sc Sampled-Triangle} \cup \Delta_e$.

}
} 

\uIf{$\mbox{{\sc Sampled-Triangle}} \neq \emptyset$}{ 

Report one triangle in {\sc Sampled-Triangle}  arbitrarily.\ \label{alg:1passallight_line4}
}

Otherwise, return {\sc Fail}. \label{alg:1passallight_line5}


\end{algorithm}

\begin{rem}\label{rem:light-heavy}
    Here we would like to discuss why {\sc Sample-Light-Triangle} uses \routwo along with \rouone.  Note that $H$ is the set of $\kappa$-heavy edges, as determined by the algorithm \routwo. Any edge not in $H$ is the set of $\kappa$-light edges. It may be the case that, for an edge $e \in \cF_1$, $t_e\leq \tau$. But due to the approximation error in \routwo, $x_e \geq \kappa$, i.e., $e \in H$. To take care of the fact, we consider only those edge $e\in \cF_1$ such that $\mathsf{flag}(e)=1$ and $e \notin H$.
\end{rem}

In the following lemma, we prove that Algorithm~\ref{alg:1passallight} samples a $\kappa$-light triangle uniformly.

\begin{lem}\label{lem:light}
Consider the algorithm {\sc Sample-Light-Triangle} (Algorithm~\ref{alg:1passallight}).  If $T_L=\Omega(\eps T)$, then with high probability, the algorithm samples a $\kappa$-light triangle uniformly, that is, with probability $T_L/2 m \tau$, where $T_L$ denotes the number of $\kappa$-light triangles of $G$.
\end{lem}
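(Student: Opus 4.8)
The plan is to follow the template of the proof of \Cref{lem:lighttriangleprob} (its $3$-pass analogue): analyze a single reservoir slot of \rouone, show it outputs every $\kappa$-light triangle with the \emph{same} probability $\frac{1}{2m\tau}$, and then boost over the $\tOh(m/\sqrt T)$ slots of $\cF_1$. The genuinely new point, absent in the $3$-pass setting, is that ``$\kappa$-light'' is defined through the estimates $x_e$ computed by \routwo rather than through the true counts $t_e$, so the first step is to condition on a high-probability good event $\cG$ for \routwo under which all conclusions of \Cref{lem:subroutine2} hold and, moreover, the Chernoff bound behind it holds simultaneously for every edge: in particular, every edge with $t_e \ge \tau = 900\sqrt T$ has $x_e \ge \kappa$, hence lies in $H$. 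Contrapositively, under $\cG$ every $\kappa$-light edge $e \notin H$ satisfies $t_e < \tau$, so the test ``$t_e \le \tau$'' in \Cref{alg:1passallight} never discards a $\kappa$-light edge and the quantity $t_e/\tau$ used for rejection is a bona fide probability. Since the coins of \routwo are independent of those of \rouone, conditioning on $\cG$ leaves the distribution of $\cF_1$ unchanged and keeps \Cref{lem:subroutine1} in force.

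For the single-slot analysis I would fix an arbitrary $\kappa$-light triangle $\Delta$, charged to the (necessarily $\kappa$-light) edge $e$, and condition on $|\cF_1| = 1$. For \sampllight to even consider $e$ we need $e \in \cF_1$ with $\mathsf{flag}(e) = 1$; by the mechanics of \rouone this is exactly the event that the slot caught the \emph{first} of the two appearances of $e$ in the length-$2m$ adjacency-list stream, which has probability $\frac{1}{2m}$, and by \Cref{lem:subroutine1} it then also makes $t_e' = t_e$ and $\Delta_e$ uniform over the $t_e$ triangles charged to $e$. Conditioned on that, $\Delta_e = \Delta$ with probability $\frac{1}{t_e}$, and $\Delta_e$ survives the ``keep with probability $t_e/\tau$'' step with probability $\frac{t_e}{\tau}$ (legitimate since $t_e < \tau$ under $\cG$). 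Multiplying the three independent factors, the slot makes \sampllight output $\Delta$ with probability $\frac{1}{2m}\cdot\frac{1}{t_e}\cdot\frac{t_e}{\tau} = \frac{1}{2m\tau}$, independent of $\Delta$. As the events ``output $\Delta$'' are disjoint over $\Delta \in \cT_L$, one slot outputs some $\kappa$-light triangle with probability $T_L/(2m\tau)$, and conditioned on a non-{\sc Fail} output the resulting triangle is uniform on $\cT_L$.

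The boosting step is then routine: the $\tOh(m/\sqrt T)$ slots of $\cF_1$ are i.i.d.\ copies of this experiment, each contributing at most one triangle to {\sc Sampled-Triangle}, and \sampllight reports the triangle of the first slot that contributes — conditioned on $\cG$ and on at least one slot contributing, this is uniform on $\cT_L$, since each contributing slot's triangle is uniform on $\cT_L$ and independent of the index of the first contributing slot. For the failure probability, each slot contributes with probability $T_L/(2m\tau) = \Omega(\eps\sqrt T/m)$ using the hypothesis $T_L = \Omega(\eps T)$ together with $\tau = \Theta(\sqrt T)$, so taking the $\poly(1/\eps,\log n)$ factor hidden in $|\cF_1| = \tOh(m/\sqrt T)$ large enough gives $\Pr[\text{no slot contributes}\mid\cG] = \bigl(1 - \Omega(\eps\sqrt T/m)\bigr)^{\tOh(m/\sqrt T)} \le n^{-\Omega(1)}$; a union bound with $\Pr[\overline{\cG}] \le n^{-\Omega(1)}$ yields the stated high-probability guarantee, with per-slot success rate $T_L/(2m\tau)$.

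I expect the main obstacle to be the bookkeeping at the seam between the two subroutines: making rigorous that $\kappa$-lightness (an event governed by \routwo) forces $t_e < \tau$, so that \rouone's \emph{exact} count and \emph{uniform} triangle are actually the objects \sampllight gets to use, and that conditioning on \routwo's good event does not perturb \rouone's uniform sampling of $\cF_1$ and of $\Delta_e$. A minor secondary point is reading the ``report one triangle arbitrarily'' line as ``report the triangle of the first contributing slot'', which makes the uniformity of the final output immediate from the i.i.d.\ structure; any label-oblivious tie-break works equally well by the exchangeability of the $\kappa$-light triangles.
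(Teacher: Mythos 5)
Your proposal is correct and follows essentially the same route as the paper's proof: the same single-slot computation $\frac{1}{2m}\cdot\frac{1}{t_e}\cdot\frac{t_e}{\tau}=\frac{1}{2m\tau}$, the same union bound over $\cT_L$, and the same boosting over the $\tOh(m/\sqrt T)$ slots of $\cF_1$. The only difference is that you are more explicit than the paper about conditioning on the high-probability good event of \routwo (so that every $\kappa$-light edge indeed has $t_e<\tau$ and the test in \sampllight never discards a $\kappa$-light triangle), a point the paper's proof leaves implicit.
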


\begin{proof}


Recall that $\cF_1$ is the set of edges sampled in \rouone. Consider an edge $e_s$ in $\cF_1$. For the time being assume that $\size{\cF_1}=1$. Observe that $\mathsf{flag}(e_s)=1$ if $e_s$ is included in $\cF_1$ when $e_s$ arrives for the first time. By \Cref{lem:subroutine1}, if $\mathsf{flag}(e_s)=1$, then $t_{e_s}'=t_{e_s}$ and $\Delta_{e_s}$ is a uniform triangle charged to $e$. Consider a $\kappa$-light triangle $\Delta=\{a,b,c\}$ which is charged to $\{a,c\}$. 
\begin{eqnarray*}
    \pr\left(\Delta_{e_s}=\Delta \right) &=& \pr(e_s =\{a,c\}~\mbox{and}~\mathsf{flag}(e_s)=1) \times \ \pr(\Delta_{e_s}=\Delta ~|~e_s =\{a,c\}~\mbox{and}~\mathsf{flag}(e_s)=1) \\
    &=& \frac{1}{2m} \times \frac{1}{t_e} 
\end{eqnarray*}

Let $\Delta_L$ be the output of {\sc Sample-Light-Triangle}. As we include $\Delta_{e_s}$ to {\sc Sampled-Triangle} with probability  $t_e/\tau$,
we can say that:
$$\pr\left(\Delta_{L}=\Delta \right)=\pr\left(\Delta_{e_s}=\Delta \right)\cdot \frac{t_e}{\tau}=\frac{1}{2m \tau}$$

The probability that Algorithm~\ref{alg:1passallight} reports some $\kappa$-light triangle is $T_L/2m\tau$ which follows by applying the union bound over all light triangles. Observe that, under the conditional space that Algorithm~\ref{alg:1passallight} does not report {\sc Fail}, it returns a $\kappa$-light triangle uniformly. Since $\size{\cF_1}=\tOh(m/\sqrt{T})$ and $T_L=\Omega(\eps T)$, with high probability, {\sc Sample-Light-Triangle} reports a light triangle uniformly at random.
\end{proof}


\subsubsection*{Description of \samplheavy}\label{sec:sampheavy}


\begin{algorithm}[H]
\caption{{\sc Sample-Heavy-Triangle}}\label{alg:1passalheavy} 




Let $H$ be the set of heavy edges we get from \routwo satisfying \Cref{lem:subroutine2}.

\uIf{$H= \emptyset$}{

Report ABORT and return FAIL.

}

Choose an edge $r \in H$ proportional to its weight, where the weight of an edge $e \in H$ is defined as $\frac{x_e}{\sum\limits_{e' \in H}x_{e'}}$.


Let  $\Delta_r$ be the triangle charged to the edge $r$, as reported by \routwo.

Report $\Delta_r$ as the output $\Delta_H$ (as the sampled heavy triangle).












\color{black}








\end{algorithm}


Let us begin by showing that the estimate of the number of heavy triangles $\widehat{T}_H$ found by \routwo is close to $T_H$, the actual number of heavy triangles of $G$.

\begin{coro}\label{lem:heavytriangleapproximate}
With high probability, $\widehat{T}_H=\frac{1}{p}\sum\limits_{e \in H}x_e$ is a $\left(1 \pm \eps/30\right)$-approximation to $T_H$.    
\end{coro}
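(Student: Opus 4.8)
The plan is to derive Corollary~\ref{lem:heavytriangleapproximate} as an immediate consequence of Lemma~\ref{lem:subroutine2}, which already gives us per-edge control: for every $e \in H$, the quantity $x_e/p$ is a $(1\pm \eps/30)$-approximation to $t_e$, the number of triangles charged to $e$. Recall also that, by the definition of heavy triangles and the charging scheme, $T_H = \sum_{e \in H} t_e$ (every heavy triangle is charged to a unique heavy edge). So the natural estimator is $\widehat{T}_H = \frac1p \sum_{e\in H} x_e = \sum_{e\in H} (x_e/p)$.

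First I would condition on the high-probability event from Lemma~\ref{lem:subroutine2} that $H$ is exactly the set of $\kappa$-heavy edges, that $|H| = \Oh(\sqrt T)$, and that simultaneously $x_e/p = (1\pm\eps/30)\, t_e$ for every $e \in H$. Since $|H|$ is polynomially bounded, a union bound over the $\Oh(\sqrt T)$ edges keeps the overall failure probability at $1/\poly(n)$, so all these estimates hold together with high probability. Then I would simply sum: since each term satisfies $(1-\eps/30)t_e \le x_e/p \le (1+\eps/30)t_e$, adding over $e \in H$ gives
\[
(1-\eps/30)\sum_{e\in H} t_e \;\le\; \widehat{T}_H \;\le\; (1+\eps/30)\sum_{e\in H} t_e,
\]
and using $\sum_{e\in H} t_e = T_H$ yields $\widehat{T}_H = (1\pm\eps/30)\,T_H$, as claimed.

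There is essentially no hard step here; the only point requiring a little care is that Lemma~\ref{lem:subroutine2}'s guarantee "$x_e/p$ is a $(1\pm\eps/30)$-approximation to $t_e$" is stated for a fixed edge $e$, so before summing one must make sure it holds for all $e\in H$ \emph{simultaneously}. This is handled by the union bound over the $\Oh(\sqrt T) = \poly(n)$ heavy edges, exploiting that each individual bound already fails with probability only $1/\poly(n)$ (indeed one can boost the Chernoff parameter $\kappa = 10\log n$ inside \routwo to absorb the union bound). Everything else is the trivial observation that a sum of $(1\pm\eps/30)$-approximations of nonnegative quantities is a $(1\pm\eps/30)$-approximation of the sum, together with the identity $T_H = \sum_{e\in H} t_e$ coming from the charging scheme.
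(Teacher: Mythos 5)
Your proposal is correct and follows essentially the same route as the paper: invoke the per-edge guarantee of Lemma~\ref{lem:subroutine2} that $x_e/p=(1\pm\eps/30)\,t_e$ for each $e\in H$, sum over $H$, and use $T_H=\sum_{e\in H}t_e$. Your extra care about taking a union bound over the $\Oh(\sqrt{T})$ heavy edges so that all per-edge estimates hold simultaneously is a point the paper leaves implicit, but it does not change the argument.
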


\begin{proof}
From \Cref{lem:subroutine2}, we know that for a $\kappa$-heavy edge $e$ (that is when $x_e\geq \kappa$),  $\widehat{t}_e={x}_e/p$ is a $\left(1 \pm \eps/30\right)$-approximation to $t_e$. Since $\widehat{T}_H=\sum_{e \in H}x_e/p$, this implies that $\widehat{T}_H$  is a $\left(1 \pm \eps/30\right)$-approximation to $T_H$.
\end{proof}

Now we show that {\sc Sample-Heavy-Triangle} (Algorithm~\ref{alg:1passalheavy}) returns a $\kappa$-heavy triangle almost uniformly.

\begin{lem}\label{lem:heavy--al-one}
Consider Algorithm~\ref{alg:1passalheavy}.
Let us denote the output produced by {\sc Sample-heavy-Triangle} as $\Delta_H$, unless it returns {\sc Fail}. Then, with high probability, $\Delta_H$ follows a distribution $\cD_H$ over the set of $\kappa$-heavy triangle  such that $\cD_H(\Delta)=\pr(\Delta_H=\Delta) = (1\pm \eps/5)/T_H$ for any $\kappa$-heavy triangle $\Delta$.
    
    

\end{lem}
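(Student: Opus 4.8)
The plan is to analyze the sampling distribution $\cD_H$ induced by \samplheavy step by step, conditioning on the high-probability events guaranteed by \Cref{lem:subroutine2} and \Cref{lem:heavytriangleapproximate}. First I would fix a $\kappa$-heavy triangle $\Delta$, charged (in the sense of $\prec_s$) to a unique edge $e \in E$; by definition of heavy triangle, $e$ is $\kappa$-heavy, hence $e \in H$ with high probability by \Cref{lem:subroutine2}. The algorithm outputs $\Delta$ precisely when two independent-looking events occur: (a) the weighted choice picks $r = e$ from $H$, which happens with probability $x_e / \sum_{e' \in H} x_{e'} = x_e / (p\,\widehat{T}_H)$; and (b) the triangle $\Delta_e$ reported by \routwo for edge $e$ equals $\Delta$, which by \Cref{lem:subroutine2} happens with probability $(1 \pm \eps/15)\cdot 1/t_e$. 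Multiplying, $\pr(\Delta_H = \Delta) = \frac{x_e}{p\,\widehat{T}_H}\cdot (1\pm\eps/15)\cdot\frac{1}{t_e}$.

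Next I would substitute the approximation guarantees: $x_e/p = \widehat{t}_e = (1\pm\eps/30)\,t_e$ from \Cref{lem:subroutine2}, so $\frac{x_e}{p\,t_e} = (1\pm\eps/30)$, and $\widehat{T}_H = (1\pm\eps/30)\,T_H$ from \Cref{lem:heavytriangleapproximate}. Combining these three multiplicative error factors — one from the $x_e/p$ estimate of $t_e$, one from the sampled-triangle near-uniformity, and one from $\widehat{T}_H$ estimating $T_H$ — gives
\[
\pr(\Delta_H = \Delta) = \frac{(1\pm\eps/30)(1\pm\eps/15)}{(1\pm\eps/30)}\cdot\frac{1}{T_H} = (1\pm\eps/5)\cdot\frac{1}{T_H},
\]
where the last step uses that $(1\pm\alpha_1)(1\pm\alpha_2)/(1\pm\alpha_3) \subseteq (1\pm(\alpha_1+\alpha_2+\alpha_3+O(\alpha^2)))$ for small $\alpha$, and $\eps/30 + \eps/15 + \eps/30 + (\text{quadratic slack}) \le \eps/5$ for $\eps \in (0,2)$ (being slightly generous with constants here). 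This shows $\cD_H(\Delta) = (1\pm\eps/5)/T_H$ for every $\kappa$-heavy $\Delta$, as required; and since this holds for all heavy triangles, the support of $\cD_H$ is (with high probability) exactly $\cT_H$.

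The main subtlety — and the step I would be most careful about — is justifying that the two events (a) and (b) can be multiplied as if independent, and more importantly that \emph{all} the "with high probability" qualifiers can be taken simultaneously. Concretely, \Cref{lem:subroutine2} asserts, for a \emph{fixed} edge, that $x_e/p$ approximates $t_e$ and that $\Delta_e$ is near-uniform; to conclude the statement for every heavy triangle at once I would take a union bound over the $\Oh(\sqrt{T})$ heavy edges (which is the bound on $|H|$ from \Cref{lem:subroutine2}), absorbing the $\log n$ loss into the $\kappa = 10\log n$ / $p = 100\log n/\eps^2\sqrt{T}$ choices, so that the failure probability stays $1/\poly(n)$. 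I would also remark that conditioned on the realized set $H$ and the realized counts $\{x_e\}_{e \in H}$, the weighted edge choice in \samplheavy and the (already-fixed) reported triangles $\{\Delta_e\}$ are independent by construction, so the product form of $\pr(\Delta_H=\Delta)$ is exact. The remaining arithmetic — collapsing the product of $(1\pm\eps/c)$ factors into $(1\pm\eps/5)$ — is routine given $\eps < 2$, and I would just state it with a one-line Taylor-type bound rather than grinding through it.
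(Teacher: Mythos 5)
Your proposal is correct and follows essentially the same route as the paper's proof: decompose $\pr(\Delta_H=\Delta)$ into the probability that the weighted choice selects the edge $e$ to which $\Delta$ is charged (namely $\widehat{t}_e/\widehat{T}_H = (1\pm\eps/10)\,t_e/T_H$) times the probability $(1\pm\eps/15)/t_e$ that $\Delta_e=\Delta$, then combine the multiplicative errors into $(1\pm\eps/5)$. Your added remarks on the union bound over $|H|=\Oh(\sqrt{T})$ heavy edges and on the independence of the weighted draw from the realized $\{\Delta_e\}$ are sensible elaborations of points the paper leaves implicit, but they do not change the argument.
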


\begin{proof}
Consider a $\kappa$-heavy edge $e=\{u,v\}$ such that the adjacency list of $u$ is revealed sometime after the adjacency list of $v$ is revealed. Let $\cE_e$ be the event that  we report some triangle charged to $e$. From the description of \routwo, we can say that  
$$
    \pr(\cE_e)=\frac{x_e}{\sum\limits_{e\in H}{x_e}}=\frac{x_e/p}{\sum\limits_{e\in H}{x_e/p}}=\frac{\widehat{t}_e}{\widehat{T}_H}=\left(1 \pm \frac{\eps}{10}\right)\frac{t_e}{T_H}.
$$

This is because $\widehat{t}_e$ and $\widehat{T}_H$ are $(1\pm \eps/30)$-approximations of $t_e$ and $T_H$, respectively, which follows from \Cref{lem:subroutine2} and \Cref{lem:heavytriangleapproximate}. 
Now consider a triangle $\Delta \in \cT_H$ and let us calculate that our algorithm reports $\Delta$ as the output $\Delta_H$. Let $e$ be the edge to which $\Delta$ is charged. Observe the following:
\begin{align*}\label{eqn:one}
    \pr(\Delta_H=\Delta)&=\pr(\cE_e) \cdot \pr(\Delta_H=\Delta~|~\cE_e) & \\ &= \left( 1\pm \frac{\eps}{10} \right) \frac{t_e}{T_H} \cdot \pr(\Delta_H=\Delta~|~\cE_e) & \\
    &= \left( 1\pm \frac{\eps}{10} \right) \frac{t_e}{T_H} \cdot \pr(\Delta_e=\Delta~|~\cE_e) & \\
    &= \left( 1\pm \frac{\eps}{10} \right) \frac{t_e}{T_H} \cdot \left(1\pm \frac{\eps}{15}\right)\cdot \frac{1}{t_e} &[\mbox{From  \Cref{lem:subroutine2}}]\\
    &= \left(1 \pm \frac{\eps}{5}\right) \cdot \frac{1}{T_H} &
\end{align*}
\end{proof}

\subsection{ One-pass algorithm with space complexity $\tOh(m/\sqrt{T}+\sqrt{T})$.}\label{sec:mainalgoal1pass}

\noindent In this section, we discuss {\sc Triangle-Sample-\al-1pass} (Algorithm~\ref{alg:1passal}) as our one-pass algorithm for \tsamp in the \al model, with space complexity $\tOh(m/\sqrt{T}+\sqrt{T})$. {\sc Triangle-Sample-\al-1pass} runs \rouone (Algorithm~\ref{algo:sub1}) and \routwo (Algorithm~\ref{algo:sub2}) in parallel during the stream. 
In the post processing phase after the stream ends, {\sc Triangle-Sample-\al-1pass}  has again two subroutines {\sc Sample-Light-Triangle} (Algorithm~\ref{alg:1passallight}) and {\sc Sample-Heavy-Triangle} (Algorithm~\ref{alg:1passalheavy}) which essentially samples $\kappa$-light and $\kappa$-heavy triangle, as guaranteed in \Cref{sec:samplight}. Finally, {\sc Triangle-Sample-\al-1pass}  combines the outputs of {\sc Sample-Light-Triangle}  and {\sc Sample-Heavy-Triangle} to report the final desired output.

\begin{algorithm}[H]
                \caption{Triangle Sampling in 1-pass \al Model}\label{alg:1passal} 


Set $p=100\log n/\eps^2 \sqrt{T}$ and  $\kappa= 10 {\log n}$.\ \label{algo:1passal_line2}


Call \rouone and \routwo  in parallel.\ \label{algo:1passal_line3}

{\bf Process at the end of the stream:}

Consider the set $H$ of heavy edges and the value of $x_e$ for each edge $e \in H$, as reported by \routwo. 

Compute $\widehat{T}_H=\sum\limits_{e\in H}\widehat{t}_e$, where $\widehat{t}_e=\frac{1}{p}x_e$.

Call {\sc Sample-Light-Triangle}. Let $\Delta_L$ be the output.\ \label{algo:1passal_line4}

Call {\sc Sample-Heavy-Triangle}. Let $\Delta_H$ be the output.\ \label{algo:1passal_line5}

If either {\sc Sample-Light-Triangle} or {\sc Sample-Heavy-Triangle} report {\sc Fail}, then report {\sc Fail} and {\sc ABORT} the process.\ \label{algo:1passal_line6}


\uIf{$\widehat{T}_H \leq \eps T/10 $}
{ 
\label{algo:1passal_line7}

Report $\Delta_L$ as the output.\ \label{algo:1passal_line8}

}

\uIf{$\widehat{T}_H \geq \left(1-\eps/10\right) T$} 
{

\label{algo:1passal_line9}

Report $\Delta_H$ as the output.\ \label{algo:1passal_line10}

}

Report $\Delta_H$ with probability $\widehat{T}_H/T$ and $\Delta_L$ with  probability $1-\widehat{T}_H/T$.\
\label{algo:1passal_line11}
\end{algorithm}

In \Cref{theo:al-one}, we show that the space complexity of \cref{alg:1passal} is $\tOh(m/\sqrt{T}+\sqrt{T})$ and show that \cref{alg:1passal} solves \tsamp in \al model with high probability.

\begin{theo}\label{theo:al-one}
Consider Algorithm~\ref{alg:1passal}. It uses $\tOh\left({m}/{\sqrt{T}}+\sqrt{T}\right)$ space in expectation. Moreover, it reports a triangle from a distribution $\cD$ over the set of triangles in $G$ such that $||\cD-\cU||_1 \leq \eps$, where $\cU$ denotes the uniform distribution over the set of all triangles in $G$.

\end{theo}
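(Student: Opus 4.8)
The plan is to combine the three building blocks already established: Lemma~\ref{lem:light} (that \sampllight returns a $\kappa$-light triangle essentially uniformly when $T_L = \Omega(\eps T)$), Lemma~\ref{lem:heavy--al-one} (that \samplheavy returns a $\kappa$-heavy triangle with probability $(1\pm\eps/5)/T_H$), and \Cref{lem:heavytriangleapproximate} (that $\widehat{T}_H$ is a $(1\pm\eps/30)$-approximation to $T_H$). The space bound is immediate: \rouone uses $\tOh(m/\sqrt T)$ space by \Cref{lem:subroutine1}, \routwo uses $\tOh(m/\sqrt T + \sqrt T)$ space in expectation by \Cref{lem:subroutine2}, and the post-processing subroutines only read their outputs, so the total is $\tOh(m/\sqrt T + \sqrt T)$ in expectation. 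The bulk of the proof is the distributional guarantee.

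For correctness I would case-split exactly as \cref{alg:1passal} does, according to the value of $\widehat{T}_H$. Condition throughout on the high-probability event that all the guarantees of Lemmas~\ref{lem:subroutine1}, \ref{lem:subroutine2}, \ref{lem:light}, \ref{lem:heavy--al-one} and \Cref{lem:heavytriangleapproximate} hold; this event fails with probability $1/\poly(n)$ and contributes at most $o(\eps)$ to the $\ell_1$-distance. \textbf{Case 1: $\widehat{T}_H \le \eps T/10$.} Then $T_H \le (1+\eps/30)\widehat{T}_H \le \eps T/8$, say, so $T_L \ge (1-\eps/8)T = \Omega(\eps T)$ and \sampllight applies. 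The algorithm outputs $\Delta_L$, which is uniform over $\cT_L$; the $\ell_1$-distance to $\cU$ is $T_H/T + T_L|1/T_L - 1/T| = 2T_H/T \le \eps/4 < \eps$, mirroring the computation in the proof of \Cref{theo:tsamp-al-p-proof}. \textbf{Case 2: $\widehat{T}_H \ge (1-\eps/10)T$.} Then $T_H \ge (1-\eps/30)^{-1}\widehat{T}_H \ge$ roughly $(1-\eps/8)T$, so $T_L \le \eps T/8$; the algorithm outputs $\Delta_H$. By Lemma~\ref{lem:heavy--al-one}, for each heavy $\Delta$, $\cD(\Delta) = (1\pm\eps/5)/T_H$, so $\sum_{\Delta\in\cT_H}|\cD(\Delta)-1/T| \le T_H|\,(1\pm\eps/5)/T_H - 1/T\,| \le \eps/5 + T_L/T \le \eps/5 + \eps/8$, and $\sum_{\Delta\in\cT_L}1/T = T_L/T \le \eps/8$; total $< \eps$. \textbf{Case 3: $\eps T/10 < \widehat{T}_H < (1-\eps/10)T$.} Here the algorithm reports $\Delta_H$ with probability $\widehat{T}_H/T$ and $\Delta_L$ with probability $1-\widehat{T}_H/T$. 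Since $\widehat{T}_H/T \ge \eps/10$ and hence $T_H = \Theta(\eps T)$, and also $T_L = T - T_H \ge (\eps/10 - \eps/30)T / (\text{const}) = \Omega(\eps T)$ — both subroutines' hypotheses are met. For a heavy $\Delta$ charged to $e$, $\cD(\Delta) = (\widehat{T}_H/T)\cdot(1\pm\eps/5)/T_H$; since $\widehat{T}_H = (1\pm\eps/30)T_H$, this is $(1\pm\eps/3)/T$ after absorbing constants. For a light $\Delta$, $\cD(\Delta) = (1-\widehat{T}_H/T)\cdot (\text{uniform weight in }\cT_L)$; here the uniform weight is $1/T_L$ up to the $(1\pm o(\eps))$ factor from \sampllight, and $1-\widehat{T}_H/T = (T - \widehat{T}_H)/T = (T_L \pm (\eps/30)T_H)/T$, so $\cD(\Delta) = (T_L \pm (\eps/30)T_H)/(T\,T_L) = (1 \pm \eps/3)/T$ using $T_H \le T_L/\Omega(\eps)\cdot$const — more carefully $(\eps/30)T_H/T_L \le (\eps/30)\cdot T/(\Omega(\eps)T) = O(1/30)$, which is too weak, so one must instead note $T_H \le T$ and $T_L = \Omega(\eps T)$ gives a constant, absorbed by choosing the constants in $\widehat{T}_H$'s approximation small enough; alternatively one argues directly that $|(1-\widehat{T}_H/T) - T_L/T| = |\widehat{T}_H - T_H|/T \le (\eps/30)T_H/T \le \eps/30$ and divides by $T_L/T \ge \eps/10\cdot$const. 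Then $\|\cD - \cU\|_1 = \sum_\Delta |\cD(\Delta) - 1/T| \le \sum_\Delta (\eps/3)/T = \eps/3 < \eps$.

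The main obstacle is Case~3: one must verify that the two approximation errors — the $(1\pm\eps/5)$ from the heavy sampler and the $(1\pm\eps/30)$ slack between $\widehat{T}_H$ and $T_H$ — compose multiplicatively into an overall $(1\pm c\eps)/T$ per-triangle probability with $c < 1$, and crucially that on the light side the \emph{additive} discrepancy $|\widehat{T}_H - T_H|/T$ does not blow up when renormalized by the (possibly small, $\Theta(\eps T)$) number of light triangles. The cleanest fix is to carry the constants symbolically: replace the ad hoc $\eps/5, \eps/30$ by generic $c_1\eps, c_2\eps$, push the final bound through, and then check that the concrete constants chosen in \routwo, \samplheavy and \cref{alg:1passal} satisfy the resulting inequality $\|\cD-\cU\|_1 \le \eps$. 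I would also remark that the expected space bound can be made worst-case with an extra $O(\log n)$ factor by aborting whenever $|\cF_2|$ or $|H|$ exceeds twice its expectation, which happens with probability $1/\poly(n)$ by Chernoff, contributing negligibly to the error — this is deferred to \Cref{app:rem} along with the removal of the exact-knowledge-of-$T$ assumption.
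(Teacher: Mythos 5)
Your proposal is correct and follows essentially the same route as the paper: the same three-way case split on $\widehat{T}_H$, the same invocation of \Cref{lem:light}, \Cref{lem:heavy--al-one} and \Cref{lem:heavytriangleapproximate}, and the same space accounting via \Cref{lem:subroutine1} and \Cref{lem:subroutine2}. The ``main obstacle'' you flag in Case~3 dissolves once you bound the $\ell_1$ sum directly instead of seeking a per-triangle $(1\pm c\eps)/T$ guarantee: the light-triangle contribution is $T_L\cdot\size{1/T-(1-\widehat{T}_H/T)/T_L}=\size{\widehat{T}_H-T_H}/T\le \eps/30$, so the $T_L$ factor cancels and no renormalization by the possibly small $T_L$ is ever needed --- which is exactly how the paper's computation proceeds.
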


\begin{proof}
The space complexity follows from \Cref{lem:subroutine1} and \Cref{lem:subroutine2}. In order to prove the other part, we break the analysis into three parts.
\paragraph*{\textbf{Case 1: $\widehat{T}_H \leq \frac{\eps}{10}T$:}}
In this case, we report the output $\Delta_L$ produced by {\sc Sample-Light-Triangle}. From Lemma~\ref{lem:light}, for any light triangle $\Delta \in \cT_L,$ $ \pr(\Delta_L=\Delta)=\remove{(1 \pm \eps/10)}\frac{1}{T_L}.$
\noindent
Note that, in this case the algorithm never reports a triangle that is not in $\cT_L$. So, 
\[ \cD(\Delta)=\begin{cases} 
      1/T_L\remove{\left[(1-\eps/10)\frac{1}{T_L}, (1+\eps/10)\frac{1}{T_L}\right]} & \Delta \in \cT_L \\
      0 & \Delta \in \cT_H
   \end{cases}
\]
Thus,
\begin{equation}\label{eqn:aa}
||\cD-\cU||_1=T_L \cdot \size{{1}/{T_L}-{1}/{T}} \remove{\max \left \lbrace  \size{{(1-\eps/10){1}/{T_L}-{1}/{T}}}, \size{{(1+\eps/10)\frac{1}{T_L}-\frac{1}{T}}}  \right \rbrace}+T_H \cdot {1}/{T}.
\end{equation}

As $\widehat{T}_H$ is a $\left(1\pm\eps/30\right)$-approximation of $T_H$, in this case ${T_H} \leq \frac{\frac{\eps}{10}T}{1-\eps/30} \leq \eps T/8$. So, $\left(1-\eps/8\right)T \leq T_L \leq T$. Combining the bounds on $T_L$ and $T_H$ in Equation~\eqref{eqn:aa}, we conclude that $||\cD-\cU||_1 \leq \eps$.

\paragraph*{\textbf{Case 2: $\widehat{T}_H \geq \left(1-\frac{\eps}{10}\right)T$:}} 
In this case, we report the output $\Delta_H$ produced by {\sc Sample-Heavy-Triangle}. From Lemma~\ref{lem:heavy--al-one}, for any  triangle $\Delta \in \cT_{H}$, we have 
$\pr(\Delta_H=\Delta) =  (1\pm\eps/5)/T_H$.
Note that, here the algorithm never reports a triangle that is in $\cT_L$. So, 
\[ \cD(\Delta)=\begin{cases} 
     0 & \Delta \in \cT_L \\
      (1\pm \eps/5)/T_H & \Delta \in \cT_{H}
   \end{cases}
\]
Thus,
\begin{equation}\label{eqn:bb}
\small
||\cD-\cU||_{1} = \frac{T_{L}}{T} + T_H \cdot\max \left \lbrace \size{\left(1-\frac{\eps}{5}\right)\frac{1}{T_H}-\frac{1}{T}}, \size{\left(1+\frac{\eps}{5}\right)\frac{1}{T_H}-\frac{1}{T}} \right \rbrace.
\end{equation}

As $\widehat{T}_H$ is a $\left(1\pm\eps/30\right)$-approximation of $T_H$, ${T_H} \geq \frac{(1-\eps/10)}{1+\eps/30}T \geq (1-\eps/5)T$. So, $T_L \leq \eps T/5$.  Putting the bounds on $T_L$ and $T_H$ in Equation~\ref{eqn:bb}, we have $||\cD-\cU||_1 \leq \eps$.

\paragraph*{\textbf{Case 3: $\frac{\eps}{10}T<\widehat{T}_H < \left(1-\frac{\eps}{10}\right)T$:}}
In this case, we report $\Delta_H$ produced by {\sc Sample-Heavy-Triangle} with probability $\widehat{T}_H/T$ and $\Delta_L$ produced by {\sc Sample-Light-Triangle} with probability $1-\widehat{T}_H/T$. Let $\Delta_s$ be the triangle reported  by the algorithm. So, for any light triangle $\Delta \in \cT_L$, we get 
$$\pr(\Delta_s=\Delta)=\left(1-\frac{\widehat{T}_H}{T}\right) \times \frac{1}{T_L}$$
From Lemma~\ref{lem:heavy--al-one}, for any heavy triangle $\Delta \in \cT_{H}$, we have:
$$ \pr(\Delta_s=\Delta) =  \frac{\widehat{T}_H}{T} \cdot (1 \pm \frac{\eps}{5})  \frac{1}{{T_H}}.$$


\[ \cD(\Delta)=\begin{cases} 
     \left(1 - \frac{\widehat{T}_H}{T}\right) \cdot \frac{1}{T_L} & \Delta \in \cT_L \\
       \frac{\widehat{T}_H}{T} \cdot (1 \pm \frac{\eps}{5})  \frac{1}{{T_H}} & \Delta \in \cT_{H}
   \end{cases}
\]

So, 
\begin{align*}\label{eqn:cc}
||\cD-\cU||_{1}
&= T_L \cdot \size{\frac{1}{T}-\left(1-\frac{\widehat{T}_H}{T}\right)\frac{1}{T_L}} \\ 
& \;\; \;\; \;\; \;\;
+ {T_{H}} \cdot \max \left \lbrace \size{\left(1-\frac{\eps}{5}\right)\frac{1}{T_H} \cdot \frac{\widehat{T}_H}{T}-\frac{1}{T}}, \size{\left(1+\frac{\eps}{5}\right)\frac{1}{T_H} \frac{\widehat{T}_H}{T}-\frac{1}{T}} \right \rbrace .
\end{align*}

As $\widehat{T}_H$ is a $\left(1\pm\eps/30\right)$-approximation of $T_H$, in this case $ \frac{\eps T/10}{1+\eps/30}\leq {T_H} \leq T$. Also, $T_L \leq T$.   Combining these bounds in the above expression, we conclude that $||\cD-\cU||_1 \leq \eps$.
\end{proof}

\subsection{Improving the space complexity of \Cref{alg:1passal} to $\tOh(m/\sqrt{T})$}\label{sec:improve}

\noindent The space complexity $\tOh(m/\sqrt{T}+\sqrt{T})$ of \Cref{alg:1passal} (as stated in \Cref{theo:al-one}) is due to the same space complexity of \routwo (as stated in \Cref{lem:subroutine2}). The space complexity $\tOh(m/\sqrt{T}+\sqrt{T})$ of \routwo is due to the fact that we store the set of heavy vertex $H$ in \routwo and we argue that $|H|=\Oh(\sqrt{T})$. Here, we argue that we do not need to store the set $H$ explicitly. By Line 17--19 of \routwo (\Cref{algo:sub2}), an edge $e$ is included to $H$ if $x_e \geq \kappa$. Also, we choose a previously detected triangle $\Delta_e$ uniformly at random among the set of triangles charged to $e$. By \Cref{lem:subroutine2}, with high probability, ${x_e}/{p}$ is $(1+\eps/30)$-approximation to $t_e$, and for any triangle $\Delta$ (charged to $e$), we have $\Pr(\Delta_e=\Delta)$ is $(1\pm \eps/15)\cdot 1/t_e$. So, instead of storing $H$ explicitly, we can initiate a weighted reservoir sampler WRS (to store one element) and we give $(\Delta_e,x_e/p)$ to WRS when we detect a heavy edge $e$. Let the final sample reported by WRS be $(w,\Delta)$. We can argue that triangle $\Delta$ satisfies the guarantee by \samplheavy as stated in \Cref{lem:heavy--al-one}. It is important to note that the set $H$ also plays a crucial role in \sampllight (see Line 3 of \Cref{alg:1passallight} and \Cref{rem:light-heavy}). That is, in \sampllight, we check for each $e \in \cF_1$, whether $\mathsf{flag}(e)=1$ and $e \notin H$. But, now we are not storing $H$ explicitly. However, we run \rouone and \routwo in parallel. An edge  $e$ is possibly detected in \routwo as a heavy edge when $e$ arrives for the second time and $\mathsf{flag}(e)$ is possibly set to $1$ in \rouone when $e$ comes for the second time. So, when $e$ is detected to be a heavy edge, we can set $\mathsf{flag}(e)$ to $0$. Thus, the triangle $\Delta$ reported by \sampllight satisfies the guarantee as stated in \Cref{lem:light}. 

Putting things together, we conclude that, in one pass, the space complexity of \tsamp is $\tOh(m/\sqrt{T})$ in the \al model, hence we are done with the proof of \Cref{theo:al1pass_proof}.

\section{Conclusion}
\label{sec:conclude}
\noindent
In this work, we have studied the problem of triangle sampling in three popular streaming models. Our main contribution in this work is designing triangle sampling algorithms in the \al model, where the counting algorithms can not be generalized to get sampling algorithms, and we designed almost optimal algorithms in these scenarios. We showed that in \ea/\va models, triangle counting algorithms can be generalized to design triangle sampling algorithms. 

The main open question left is to study this problem in the Random Order Model (\ra model), where the edges of the graph appear in random order. In this model, the current best space complexity of triangle counting is $\widetilde{\Oh}\left(m/\sqrt{T}\right)$ by McGregor and Vorotnikova~\cite{mcgregor2020triangle}. So the main open question is:
\begin{qs}
    Can one design a triangle sampling algorithm for \ra model matching the bounds of its counting variant?
\end{qs}

Another open question in this context is:
\begin{qs}
Can one design efficient sampling algorithms for other substructures in the streaming models?    
\end{qs}

\bibliographystyle{abbrv}
\bibliography{reference}

\begin{thebibliography}{10}

\bibitem{alon1997finding}
N.~Alon, R.~Yuster, and U.~Zwick.
\newblock {Finding and Counting Given Length Cycles}.
\newblock {\em Algorithmica}, 17(3):209--223, 1997.

\bibitem{arenas2021approximate}
M.~Arenas, L.~A. Croquevielle, R.~Jayaram, and C.~Riveros.
\newblock When is approximate counting for conjunctive queries tractable?
\newblock In {\em Proceedings of the 53rd Annual ACM SIGACT Symposium on Theory of Computing}, pages 1015--1027, 2021.

\bibitem{assadi2018simple}
S.~Assadi, M.~Kapralov, and S.~Khanna.
\newblock A simple sublinear-time algorithm for counting arbitrary subgraphs via edge sampling.
\newblock {\em arXiv preprint arXiv:1811.07780}, 2018.

\bibitem{bar2002reductions}
Z.~Bar-Yossef, R.~Kumar, and D.~Sivakumar.
\newblock Reductions in streaming algorithms, with an application to counting triangles in graphs.
\newblock In {\em SODA}, volume~2, pages 623--632, 2002.

\bibitem{bera2017towards}
S.~K. Bera and A.~Chakrabarti.
\newblock {Towards Tighter Space Bounds for Counting Triangles and other Substructures in Graph Streams}.
\newblock In {\em Proceedings of the 34th Symposium on Theoretical Aspects of Computer Science, {STACS}}, 2017.

\bibitem{biswas2021towards}
A.~S. Biswas, T.~Eden, and R.~Rubinfeld.
\newblock Towards a decomposition-optimal algorithm for counting and sampling arbitrary motifs in sublinear time.
\newblock {\em Approximation, Randomization, and Combinatorial Optimization. Algorithms and Techniques}, 2021.

\bibitem{bloedorn2005relational}
E.~Bloedorn, N.~J. Rothleder, D.~DeBarr, and L.~Rosen.
\newblock Relational graph analysis with real-world constraints: An application in irs tax fraud detection.
\newblock {\em Grobelnik et al.[63]}, 2005.

\bibitem{braverman2013hard}
V.~Braverman, R.~Ostrovsky, and D.~Vilenchik.
\newblock How hard is counting triangles in the streaming model?
\newblock In {\em International Colloquium on Automata, Languages, and Programming}, pages 244--254. Springer, 2013.

\bibitem{bulteau2016triangle}
L.~Bulteau, V.~Froese, K.~Kutzkov, and R.~Pagh.
\newblock {Triangle Counting in Dynamic Graph Streams}.
\newblock {\em Algorithmica}, 76(1):259--278, 2016.

\bibitem{buriol2006counting}
L.~S. Buriol, G.~Frahling, S.~Leonardi, A.~Marchetti-Spaccamela, and C.~Sohler.
\newblock {Counting Triangles in Data Streams}.
\newblock In {\em Proceedings of the 25th ACM SIGMOD-SIGACT-SIGART Symposium on Principles of Database Systems, {PODS}}, pages 253--262, 2006.

\bibitem{chen2017logic}
H.~Chen and S.~Mengel.
\newblock The logic of counting query answers.
\newblock In {\em 2017 32nd Annual ACM/IEEE Symposium on Logic in Computer Science (LICS)}, pages 1--12. IEEE, 2017.

\bibitem{chen2020random}
Y.~Chen and K.~Yi.
\newblock Random sampling and size estimation over cyclic joins.
\newblock In {\em 23rd International Conference on Database Theory (ICDT 2020)}. Schloss Dagstuhl-Leibniz-Zentrum f{\"u}r Informatik, 2020.

\bibitem{cormode2014second}
G.~Cormode and H.~Jowhari.
\newblock A second look at counting triangles in graph streams.
\newblock {\em Theoretical Computer Science}, 552:44--51, 2014.

\bibitem{dubhashi2009concentration}
D.~P. Dubhashi and A.~Panconesi.
\newblock {\em Concentration of Measure for the Analysis of Randomized Algorithms}.
\newblock Cambridge University Press, 2009.

\bibitem{durand2013structural}
A.~Durand and S.~Mengel.
\newblock Structural tractability of counting of solutions to conjunctive queries.
\newblock In {\em Proceedings of the 16th International Conference on Database Theory}, pages 81--92, 2013.

\bibitem{DBLP:conf/sosa/EdenNT23}
T.~Eden, S.~Narayanan, and J.~Tetek.
\newblock Sampling an edge in sublinear time exactly and optimally.
\newblock In T.~Kavitha and K.~Mehlhorn, editors, {\em 2023 Symposium on Simplicity in Algorithms, {SOSA} 2023, Florence, Italy, January 23-25, 2023}, pages 253--260. {SIAM}, 2023.

\bibitem{eden2022almost}
T.~Eden, D.~Ron, and W.~Rosenbaum.
\newblock Almost optimal bounds for sublinear-time sampling of k-cliques in bounded arboricity graphs.
\newblock In {\em 49th International Colloquium on Automata, Languages, and Programming (ICALP 2022)}. Schloss Dagstuhl-Leibniz-Zentrum f{\"u}r Informatik, 2022.

\bibitem{eden2018sampling}
T.~Eden and W.~Rosenbaum.
\newblock {On Sampling Edges Almost Uniformly}.
\newblock In {\em Proceedings of the 1st Symposium on Simplicity in Algorithms, {SOSA}}, volume~61, pages 7:1--7:9, 2018.

\bibitem{DBLP:conf/icalp/FichtenbergerG020}
H.~Fichtenberger, M.~Gao, and P.~Peng.
\newblock Sampling arbitrary subgraphs exactly uniformly in sublinear time.
\newblock In A.~Czumaj, A.~Dawar, and E.~Merelli, editors, {\em 47th International Colloquium on Automata, Languages, and Programming, {ICALP} 2020, July 8-11, 2020, Saarbr{\"{u}}cken, Germany (Virtual Conference)}, volume 168 of {\em LIPIcs}, pages 45:1--45:13. Schloss Dagstuhl - Leibniz-Zentrum f{\"{u}}r Informatik, 2020.

\bibitem{JayaramK21}
R.~Jayaram and J.~Kallaugher.
\newblock An optimal algorithm for triangle counting in the stream.
\newblock In {\em Approximation, Randomization, and Combinatorial Optimization. Algorithms and Techniques, {APPROX/RANDOM} 2021, August 16-18, 2021, University of Washington, Seattle, Washington, {USA} (Virtual Conference)}, volume 207 of {\em LIPIcs}, pages 11:1--11:11. Schloss Dagstuhl - Leibniz-Zentrum f{\"{u}}r Informatik, 2021.

\bibitem{jowhari2005new}
H.~Jowhari and M.~Ghodsi.
\newblock {New Streaming Algorithms for Counting Triangles in Graphs}.
\newblock In {\em International Computing and Combinatorics Conference}, pages 710--716. Springer, 2005.

\bibitem{DBLP:conf/focs/KallaugherKP18}
J.~Kallaugher, M.~Kapralov, and E.~Price.
\newblock The sketching complexity of graph and hypergraph counting.
\newblock In M.~Thorup, editor, {\em 59th {IEEE} Annual Symposium on Foundations of Computer Science, {FOCS} 2018, Paris, France, October 7-9, 2018}, pages 556--567. {IEEE} Computer Society, 2018.

\bibitem{DBLP:conf/pods/KallaugherMPV19}
J.~Kallaugher, A.~McGregor, E.~Price, and S.~Vorotnikova.
\newblock The complexity of counting cycles in the adjacency list streaming model.
\newblock In D.~Suciu, S.~Skritek, and C.~Koch, editors, {\em Proceedings of the 38th {ACM} {SIGMOD-SIGACT-SIGAI} Symposium on Principles of Database Systems, {PODS} 2019, Amsterdam, The Netherlands, June 30 - July 5, 2019}, pages 119--133. {ACM}, 2019.

\bibitem{kallaugher2019complexity}
J.~Kallaugher, A.~McGregor, E.~Price, and S.~Vorotnikova.
\newblock {The Complexity of Counting Cycles in the Adjacency List Streaming Model}.
\newblock In {\em Proceedings of the 38th ACM SIGMOD-SIGACT-SIGAI Symposium on Principles of Database Systems, {PODS}}, pages 119--133, 2019.

\bibitem{kallaugher2017hybrid}
J.~Kallaugher and E.~Price.
\newblock {A Hybrid Sampling Scheme for Triangle Counting}.
\newblock In {\em Proceedings of the 28th Annual ACM-SIAM Symposium on Discrete Algorithms, {SODA}}, pages 1778 -- 1797, 2017.

\bibitem{DBLP:conf/icalp/KaneMSS12}
D.~M. Kane, K.~Mehlhorn, T.~Sauerwald, and H.~Sun.
\newblock Counting arbitrary subgraphs in data streams.
\newblock In A.~Czumaj, K.~Mehlhorn, A.~M. Pitts, and R.~Wattenhofer, editors, {\em Automata, Languages, and Programming - 39th International Colloquium, {ICALP} 2012, Warwick, UK, July 9-13, 2012, Proceedings, Part {II}}, volume 7392 of {\em Lecture Notes in Computer Science}, pages 598--609. Springer, 2012.

\bibitem{kolountzakis2012efficient}
M.~N. Kolountzakis, G.~L. Miller, R.~Peng, and C.~E. Tsourakakis.
\newblock {Efficient Triangle Counting in Large Graphs via Degree-based Vertex Partitioning}.
\newblock {\em Internet Mathematics}, 8(1-2):161--185, 2012.

\bibitem{mcgregor2014graph}
A.~McGregor.
\newblock {Graph Stream Algorithms: A Survey}.
\newblock {\em ACM SIGMOD Record}, 43(1):9--20, 2014.

\bibitem{mcgregor2020triangle}
A.~McGregor and S.~Vorotnikova.
\newblock {Triangle and Four Cycle Counting in the Data Stream Model}.
\newblock In {\em Proceedings of the 39th ACM SIGMOD-SIGACT-SIGAI Symposium on Principles of Database Systems, {PODS}}, pages 445--456, 2020.

\bibitem{DBLP:conf/pods/McGregorVV16}
A.~McGregor, S.~Vorotnikova, and H.~T. Vu.
\newblock {Better Algorithms for Counting Triangles in Data Streams}.
\newblock In {\em Proceedings of the 35th {ACM} {SIGMOD-SIGACT-SIGAI} Symposium on Principles of Database Systems, {PODS}}, pages 401--411, 2016.

\bibitem{milo2002network}
R.~Milo, S.~Shen-Orr, S.~Itzkovitz, N.~Kashtan, D.~Chklovskii, and U.~Alon.
\newblock Network motifs: simple building blocks of complex networks.
\newblock {\em Science}, 298(5594):824--827, 2002.

\bibitem{muthukrishnan2005data}
S.~Muthukrishnan et~al.
\newblock Data streams: Algorithms and applications.
\newblock {\em Foundations and Trends{\textregistered} in Theoretical Computer Science}, 1(2):117--236, 2005.

\bibitem{pagh2012colorful}
R.~Pagh and C.~E. Tsourakakis.
\newblock Colorful triangle counting and a mapreduce implementation.
\newblock {\em Information Processing Letters}, 112(7):277--281, 2012.

\bibitem{pavan2013counting}
A.~Pavan, K.~Tangwongsan, S.~Tirthapura, and K.-L. Wu.
\newblock Counting and sampling triangles from a graph stream.
\newblock {\em Proceedings of the VLDB Endowment}, 6(14):1870--1881, 2013.

\bibitem{RIVIN2002647}
I.~Rivin.
\newblock Counting cycles and finite dimensional lp norms.
\newblock {\em Advances in Applied Mathematics}, 29(4):647--662, 2002.

\bibitem{scott2006efficient}
J.~Scott, T.~Ideker, R.~M. Karp, and R.~Sharan.
\newblock Efficient algorithms for detecting signaling pathways in protein interaction networks.
\newblock {\em Journal of Computational Biology}, 13(2):133--144, 2006.

\bibitem{DBLP:journals/jacm/Sherstov14}
A.~A. Sherstov.
\newblock Communication lower bounds using directional derivatives.
\newblock {\em J. {ACM}}, 61(6):34:1--34:71, 2014.

\bibitem{shlomi2006qpath}
T.~Shlomi, D.~Segal, E.~Ruppin, and R.~Sharan.
\newblock Qpath: a method for querying pathways in a protein-protein interaction network.
\newblock {\em BMC bioinformatics}, 7:1--9, 2006.

\bibitem{DBLP:conf/stoc/TetekT22}
J.~Tetek and M.~Thorup.
\newblock Edge sampling and graph parameter estimation via vertex neighborhood accesses.
\newblock In S.~Leonardi and A.~Gupta, editors, {\em {STOC} '22: 54th Annual {ACM} {SIGACT} Symposium on Theory of Computing, Rome, Italy, June 20 - 24, 2022}, pages 1116--1129. {ACM}, 2022.

\bibitem{ugander2013subgraph}
J.~Ugander, L.~Backstrom, and J.~Kleinberg.
\newblock Subgraph frequencies: Mapping the empirical and extremal geography of large graph collections.
\newblock In {\em Proceedings of the 22nd international conference on World Wide Web}, pages 1307--1318, 2013.

\bibitem{DBLP:journals/toms/Vitter85}
J.~S. Vitter.
\newblock Random sampling with a reservoir.
\newblock {\em {ACM} Trans. Math. Softw.}, 11(1):37--57, 1985.

\bibitem{zhao2018random}
Z.~Zhao, R.~Christensen, F.~Li, X.~Hu, and K.~Yi.
\newblock Random sampling over joins revisited.
\newblock In {\em Proceedings of the 2018 International Conference on Management of Data}, pages 1525--1539, 2018.

\end{thebibliography}
\appendix

\section{Remarks from \Cref{sec:intro}}\label{app:rem}

\begin{rem}{\bf (Sampling lower bounds from counting results).}\label{rem:lower}
The lower bounds stated in Theorem~\ref{theo:two}
follow from the  lower bounds for counting version of the problems~\cite{kallaugher2019complexity,braverman2013hard,bera2017towards}. All the triangle counting lower bounds are for the algorithms that distinguish between $0$ and $T$ triangles.  Let $G$ be the input graph for counting triangles. Let us now consider a new graph $G'=G \cup H$, where $H$ is a clique on $\Theta(T^{1/3})$ vertices. If we run $O(1)$ instances of our sampling algorithm over $G'$, then from the samples, we can distinguish whether $G$ contains $T$ triangles or $0$ triangles depending on whether  we get a triangle outside $H$ as a sample or not, respectively.
The formal statements of the lower bounds are presented in \Cref{sec:lb_app}.
\end{rem}

\begin{rem}{\bf (Regarding the threshold $T$).}
We assume that the parameter $T$ (a promised lower bound on $\size{t(G)}$) is at least polynomial in $\log n$ and $1/\eps$, that is, $T \geq \poly(\log n , 1/\varepsilon)$. Otherwise, the trivial streaming algorithm that stores all the edges matches the stated bounds in the above results. We assume that we know a lower bound on $T$. The promise $T$ on the number of structures  is a standard assumption for estimating substructures (such as triangles) in the world of graph streaming algorithms~\cite{DBLP:conf/focs/KallaugherKP18,DBLP:conf/icalp/KaneMSS12,DBLP:conf/pods/KallaugherMPV19, DBLP:conf/pods/McGregorVV16, bera2017towards}.~\footnote{Here we have cited a few. However, there are huge amount of relevant literature.}
 To make the exposition simple, While describing our algorithm, we assume that we know the exact values of $m$ and $\size{t(G)}=T$. We would like to note that the algorithm can be modified when we do not known $m$ and know a lower bound $T$ on $\size{t(G)}$ such that $T$ is at least a constant factor of $\Omega(\size{t(G)})$.  However, the algorithms can be suitably modified using standard techniques from graph streaming algorithms to work for unknown $m$ and a lower bound on $|t(G)|$. Our algorithms can adapt to the value of $m$ as stream progresses. The assumption on $T$ can be removed by running $\Oh(\log n)$ parallel instances of our algorithm with $T=T_1, \ldots, T=T_{\Oh(\log n)}$ such that $T_1$ is the promised lower bound on $\size{t(G)}$ and $T_{\Oh(\log n)}=\Theta(n^3)$. If we run the triangle counting algorithm in parallel, then at the end, we can know an approximate value $\hat{t}$ for $\Theta(\size{t(G)})$ such that $\frac{|t(G)|}{2}\leq \hat{t}\leq 2|t(G)|$. Then use the output of  $T=T_i$ such that $i$ is the highest index satisfying $T_i \geq \frac{1}{4}\hat{t}$.
 \end{rem}
\section{Triangle Sampling in \ea/\va Model}\label{sec:eaalgo}
\noindent
We discuss our triangle sampling algorithms for \ea and \va models. We start by first describing the $3$-pass algorithm.

\subsection{Multi-pass Sampling in \ea/\va Model 
}\label{sec:eamultipassalgo}


\begin{theo}\label{theo:tsamp-arb-p-proof}
\tsamp can be solved in $\tOh\left(m^{3/2}/T\right)$ space in \ea/\va model in $3$-passes.
\end{theo}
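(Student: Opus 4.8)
The plan is to mimic, in the \ea/\va setting, the same light/heavy decomposition used for the \al model, but with the thresholds rescaled to match the target space bound $\tOh(m^{3/2}/T)$ and using extra passes to compensate for the fact that we cannot ``bracket'' an edge between two appearances of a vertex as in the \al model. First I would set a threshold $\tau = \Theta_\eps(T/\sqrt{m})$ and call an edge $e$ \emph{heavy} if the number $\lambda_e$ of triangles incident on it is at least $\tau$, and \emph{light} otherwise; a triangle is heavy if all three of its edges are heavy, and light otherwise. Using Lemma~\ref{lem:boundnumbertriangle} exactly as in Observation~\ref{obs:al3passboundno}: the number of heavy edges is at most $3T/\tau = \Oh(\sqrt{m}/\eps)$ (too large to store directly — this is where the \ea bound differs from \al), the number of heavy triangles is $\Oh((\sqrt{m}/\eps)^{3/2}) = \Oh(m^{3/4}/\eps^{3/2})$, which we need to argue is at most $\eps T / c$ — this requires $T \gg m^{3/4}$, so in the regime where the stated bound $m^{3/2}/T$ is nontrivial (i.e. $T \le m$) we may \emph{not} be able to discard heavy triangles for free, and I expect we instead have to sample heavy triangles separately, paralleling the one-pass \al argument.

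Concretely, I would run two procedures. \textbf{Light part:} in pass~1 sample a multiset $\cF$ of $\tOh(m^{3/2}/T)$ edges uniformly at random with replacement. In pass~2, for each $e\in\cF$ count $\lambda_e$ by waiting for pairs of adjacent edges forming a triangle with $e$ (this is doable in \ea/\va because once $e$ is fixed we just watch the stream for the two completing edges) and simultaneously pick, via reservoir sampling, one triangle $s_e$ incident on $e$ uniformly at random. In pass~3, for each sampled $s_e = \{e,e_1,e_2\}$ count $\lambda_{e_1},\lambda_{e_2}$ so we can determine $i\in\{1,2,3\}$, the number of light edges of $s_e$ (which is nonzero since $e$ is light). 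Then mark $s_e$ with probability $\lambda_e/(i\tau)$; a fixed $i$-light triangle $\Delta$ is then output with probability $\tfrac{i}{m}\cdot\tfrac1{\lambda_e}\cdot\tfrac1i\cdot\tfrac{\lambda_e}{\tau} = \tfrac1{m\tau}$, independent of $\Delta$, so conditioned on not failing this step samples a uniform light triangle, and $|\cF| = \tOh(m^{3/2}/T) = \tOh(m/\tau)$ repetitions suffice to succeed w.h.p. when $T_L = \Omega(T)$. \textbf{Heavy part:} using a pass we detect heavy edges and, as in \routwo, subsample edges with probability $p = \Theta_\eps(\log n/\tau)$ so that each heavy edge's triangle count can be estimated to within $(1\pm\eps/30)$ and a near-uniform triangle on it can be sampled; the expected storage for the subsampled edges is $\tOh(m/\tau) = \tOh(m^{3/2}/T)$ and the number of heavy edges is $\Oh(\sqrt m/\eps)$ which we can afford since $\sqrt m \le m^{3/2}/T$ precisely when $T\le m$. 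Then sample a heavy edge proportional to its estimated count and output a uniform detected triangle on it, giving each heavy triangle probability $(1\pm\eps/5)/\widehat T_H$.

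Finally I would combine the two outputs exactly as in \Cref{alg:1passal}/\Cref{theo:al-one}: compute the estimate $\widehat T_H$ of the number of heavy triangles; if $\widehat T_H \le \eps T/10$ output the light sample, if $\widehat T_H \ge (1-\eps/10)T$ output the heavy sample, and otherwise output the heavy sample with probability $\widehat T_H/T$ and the light one with probability $1-\widehat T_H/T$. The $\ell_1$ analysis is then a verbatim three-case computation as in the proof of \Cref{theo:al-one}, using $|T_L - T| $ and $|T_H - T|$ bounds that follow from the case hypotheses and the $(1\pm\eps/30)$-approximation of $\widehat T_H$, and yields $\|\cD-\cU\|_1 \le \eps$. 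The space is dominated by $|\cF| = \tOh(m^{3/2}/T)$ from the light part and the subsampled edge set plus heavy-edge list from the heavy part, all $\tOh(m^{3/2}/T)$, and we use $3$ passes total (the light part needs $3$; the heavy part's detection and estimation can be folded into the same passes).

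The main obstacle I anticipate is the heavy part's feasibility: whereas in the \al model one exploits the double-appearance of each edge to compute $t_e$ and sample a charged triangle ``in one window'', in \ea/\va one must instead either (a) use an extra pass to resolve heavy edges, or (b) subsample à la \routwo and carefully redo the Chernoff argument showing $x_e/p$ concentrates and a uniform sample from the detected triangles is $(1\pm\eps)$-uniform among all triangles on $e$ — and then verify that the heavy-edge list of size $\Oh(\sqrt m/\eps)$ and the subsample of size $\tOh(m/\tau)$ both fit inside $\tOh(m^{3/2}/T)$, which forces the regime $T\le m$ (the stated bound is anyway only interesting there). Getting the bookkeeping so that detection, count-estimation, and triangle-sampling for all heavy edges all happen within the same $3$ passes, without an extra pass, is the delicate point; if it cannot be done in $3$ passes one would either accept a weaker pass count or, more likely, observe (as the paper hints for \ea/\va) that the existing multi-pass triangle \emph{counting} machinery of Assadi et al.~\cite{assadi2018simple} already localizes triangles to sampled edges and can be read off to produce a near-uniform sample with the same $\tOh(m^{3/2}/T)$ space in $3$ passes.
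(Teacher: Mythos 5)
There is a genuine gap here, and it lies in the streaming primitives your plan relies on rather than in the probabilistic bookkeeping. Your light part asks, for each stored edge $e=\{a,b\}\in\cF$, to ``count $\lambda_e$ by watching the stream for the two completing edges.'' In the \ea/\va model this is not a low-space operation: detecting that some vertex $c$ closes a triangle with $e$ requires knowing that \emph{both} $\{a,c\}$ and $\{b,c\}$ have appeared, and since edges arrive in arbitrary order this amounts to computing the intersection of $N(a)$ and $N(b)$, which costs $\Omega(\deg(a)+\deg(b))$ space per stored edge (potentially $\Omega(n)$) in any constant number of passes. This is exactly the feature of the \al model (each vertex's edges arrive contiguously, and each edge appears twice) that Algorithm~\ref{alg:3passal} and \rouone{} exploit and that has no \ea analogue. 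The same objection kills the heavy part: \routwo{} detects a triangle $\{u,x,w\}$ charged to a subsampled edge $\{x,w\}$ precisely because $\{u,x\}$ and $\{u,w\}$ both appear during the exposure of $u$'s adjacency list; in \ea there is no such window, so the $(1\pm\eps/30)$ concentration argument has nothing to concentrate. Your own hedge at the end (fall back to ``reading off'' a sample from existing counting machinery) is an acknowledgement of this, but it is not a proof.

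The paper's actual proof of \Cref{theo:tsamp-arb-p-proof} avoids edge-local triangle counts entirely. It orders vertices by degree, charges each triangle $\{u,v,w\}$ with $u\prec v\prec w$ to its lowest-degree edge $\{u,v\}$, and samples a \emph{wedge}: pick a uniform random edge $\{u,v\}$, then a third vertex $w$ chosen either as a uniform neighbor of $u$ (accepted with probability $\deg(u)/\sqrt{2m}$, when $\deg(u)\le\sqrt{2m}$) or proportionally to degree via a second uniform random edge (accepted with probability $\sqrt{2m}/\deg(w)$, when $\deg(u)>\sqrt{2m}$), and finally uses the third pass only to test whether $\{u,v,w\}$ is a triangle with the right ordering. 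Each of these steps needs $\Oh(1)$ words per instance, the acceptance probabilities are calibrated so that \emph{every} triangle is output with probability exactly $1/(\sqrt{2}\,m^{3/2})$, and $\tOh(m^{3/2}/T)$ parallel instances give success with high probability and an exactly uniform conditional distribution --- no light/heavy triangle dichotomy, no $\widehat{T}_H$ estimation, and no $\eps$-dependent $\ell_1$ case analysis is needed. I would encourage you to notice that the $m^{3/2}$ in the target bound is the signature of degree-based wedge sampling (as in the counting algorithms of \cite{bera2017towards,assadi2018simple}), not of the $\lambda_e$-based decomposition from \Cref{sec:almultipassalgo}.
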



Let us first discuss our intuition of the algorithm.
Let us consider an ordering $\prec$ on the  vertices of $G$ such that $u \prec v$ implies $\deg(u)\leq \deg(v)$. For a triangle $\{u,v,w\}$ with $u \prec v \prec w$, we charge the triangle to the edge $\{u,v\}$.

Let us discuss the following algorithm to sample a triangle uniformly at random. First let us take an edge $\{u,v\}$ uniformly at random from the graph. Without loss of generality, let us assume that $u \prec v$. If $\deg(u)\leq \sqrt{2m}$, then we take a neighbor $w$ of $u$ uniformly at random, then check whether $\{u,v,w\}$ is a triangle and $u \prec  v \prec w$. If yes, we report the triangle $\{u,v,w\}$ with probability $\deg(u)/\sqrt{2m}$. If $\deg(u)> \sqrt{2m}$, then choose a vertex $w$ from the graph such that the probability a particular vertex is chosen as $w$ is proportional to the degree of the vertex. Then check whether $\{u,v,w\}$ is a triangle and $u \prec  v \prec w$. If yes, we report the triangle $\{u,v,w\}$ as the output.

We can argue that the above procedure reports a triangle from the graph uniformly at random with probability $\Omega(T/m^{3/2})$. To boost the success probability, one can repeat the above procedure $\tOh{(m^{3/2}/T)}$ times.

Now we will describe our streaming implementation of this above intuition in \Cref{alg:3passea2}, and thereafter we analyse its correctness.

\begin{algorithm}
\caption{Triangle Sampling in $3$-pass \ea/\va Model}\label{alg:3passea2}
\SetAlgoLined

\textbf{Pass 1:} Sample two random edges $\{u, v\}$ and $\{x,y\}$ from the stream $\cM$. \label{algo:3passea_line1}



\textbf{Pass 2:} Find the degrees of $u$ and $v$, that is, $\mathrm{deg}(u)$ and $\deg(v)$, and the degrees of $x$ and $y$, that is, $\deg(x)$ and $\deg(y)$, respectively.\ \label{algo:3passea_line3}


Find random neighbors $u'$ of $u$ and $v'$ of $v$, respectively.\ \label{algo:3passea_line5}

\textbf{Process:} Without loss of generality, let us assume $u \prec v$. Perform the following two operations parallely.\ \label{algo:3passea_line6}

\uIf{$\mathrm{deg}(u) \leq \sqrt{2m}$}{ \label{algo:3passea_line7}

Set $w \leftarrow u'$.\ \label{algo:3passea_line8}

Proceed to {\bf Pass 3} with probability $\deg(u)/\sqrt{2m}$.\ \label{algo:3passea_line9}

}

\uIf{$\deg(u) > \sqrt{2m}$}{ \label{algo:3passea_line10}

$w \leftarrow \mbox{Unif}\{x,y\}$.\ \label{algo:3passea_line11}

\uIf{$\deg(w) \leq \sqrt{2m}$}{ \label{algo:3passea_line12}

FAIL and ABORT the process.\ \label{algo:3passea_line13}

}

\uElse{ \label{algo:3passea_line14}

Proceed to {\bf Pass 3} with probability $\sqrt{2m}/\deg(w)$.\ \label{algo:3passea_line15}

}

}
\textbf{Pass 3:} \uIf{$u,v$ and $w$ form a triangle and $u \prec v \prec w$}{ \label{algo:3passea_line17}

Output the triangle $\{u,v,w\}$.\ \label{algo:3passea_line18}

}



\end{algorithm}

Now let us prove Theorem~\ref{theo:tsamp-arb-p-proof}.

\begin{proof}[Proof of Theorem~\ref{theo:tsamp-arb-p-proof}]

We start with the space complexity.





\paragraph*{Space Complexity:}
As seen in Algorithm~\ref{alg:3passea2}, all the passes require $\Oh(1)$ space. Since we are running $\widetilde{\Oh}(m^{3/2}/T)$ instances of Algorithm~\ref{alg:3passea2} in parallel, the total space used is $\widetilde{\Oh}(m^{3/2}/T)$.



\paragraph*{Proof of Correctness:}

Let us consider an arbitrary triangle $\Delta = \{u,v,w\}$ such that $u \prec v \prec w$. We argue that the probability that $\Delta$ is sampled by Algorithm~\ref{alg:3passea2} is $1/\sqrt{2}m^{3/2}$. Note that $\Delta$ has been sampled by Algorithm~\ref{alg:3passea2} only if the edge $\{u,v\}$ has been sampled in {\bf Pass 1}, as well as $w$ has been chosen in {\bf Pass 2}. Thus the probability that $\Delta$ has been sampled by Algorithm~\ref{alg:3passea2} can be written as follows:
\begin{equation}\label{eqn:probdelsampled}
\begin{aligned}
    \pr(\Delta \ \mbox{is been sampled}) = \pr(\{u,v\} \ \mbox{is  sampled in {\bf Pass 1}}) \times \ \pr(w \ \mbox{is chosen in {\bf Pass 2}}).
\end{aligned}    
\end{equation}

Now let us compute the second term in Equation~\ref{eqn:probdelsampled}. We consider the following two cases depending upon the degree of the vertex $u$.
\noindent
\paragraph*{\textbf{Case 1: $\deg(u) \leq \sqrt{2m}$:}} 
Note that the algorithm finds a random neighbor $u'$ of $u$ in {\bf Pass 2} in Line~\ref{algo:3passea_line5}. Then in the {\bf Process} step, the algorithm proceeds with probability $\deg(u)/\sqrt{2m}$, and sets $w$ as $u'$. Since $u'$ has been chosen randomly among all the neighbors of $u$, the vertex $w$ is sampled with probability $1/\deg(u)$. Hence, 
\begin{equation*}
\begin{aligned}
    \pr(\Delta \ \mbox{is been sampled}) = \frac{1}{m} \times \frac{\deg(u)}{\sqrt{2m}}\times \frac{1}{\deg(u)}=\frac{1}{\sqrt{2}m^{3/2}}.
\end{aligned}    
\end{equation*}
\noindent
\paragraph*{\textbf{Case 2: $\deg(u) > \sqrt{2m}$:}}
The vertex $w$ is chosen uniformly among the vertices $\{x,y\}$ with equal probability. In that case, the probability that $w$ is sampled is as follows:
$$\pr(w \in \{x,y\}) \times \frac{1}{2} \times \frac{\sqrt{2m}}{\deg(w)} = \frac{\deg(w)}{m} \times \frac{1}{2} \times \frac{\sqrt{2m}}{\deg(w)} = \frac{1}{\sqrt{2m}}.$$

Hence, 
\begin{equation*}
\begin{aligned}
    \pr(\Delta \ \mbox{is been sampled}) = \frac{1}{m} \times \frac{1}{\sqrt{2m}}=\frac{1}{\sqrt{2}m^{3/2}}.
\end{aligned}    
\end{equation*}




Note that till now we have been arguing for a fixed triangle $\Delta$. Using the union bound over the $T$ triangles of $G$, we conclude that the probability that any triangle has been sampled by Algorithm~\ref{alg:3passea2} in a single execution is $T/\sqrt{2}m^{\frac{3}{2}}$.

Since we are running $t=\tOh(m^{\frac{3}{2}}/T)$ instances of Algorithm~\ref{alg:3passea2} in parallel, the probability that no triangle has been sampled is at most $(1- T/\sqrt{2}m^{\frac{3}{2}})^{t} \leq 1/\mathsf{poly}(n)$.

Now let us prove that Algorithm~\ref{alg:3passea2} has sampled a triangle uniformly among all the triangles of the graph $G$. So, let us now work on the conditional space that a triangle has been sampled by Algorithm~\ref{alg:3passea2}. Under this conditional space, the probability that the particular triangle $\Delta$ has been sampled is:
\begin{eqnarray*}
    \pr\left(\frac{\Delta \ \mbox{has been sampled}}{ \mbox{A triangle has been sampled}}\right) = \frac{\frac{1}{\sqrt{2}m^{\frac{3}{2}}}}{\frac{T}{\sqrt{2}m^{\frac{3}{2}}}} = \frac{1}{T}
\end{eqnarray*}

Thus the triangle $\Delta$ has been sampled uniformly by Algorithm~\ref{alg:3passea2}. This completes the proof of Theorem~\ref{theo:tsamp-arb-p-proof}.
\end{proof}

\subsection{$1$-Pass Triangle Sampling  in \ea/\va Model}\label{sec:ea1passalgo}

\begin{theo}\label{theo:tsamp-arb-1-proof}
\tsamp can be solved by using $\tOh\left(\min\{m,m^{2}/T\}\right)$ space in \ea/\va model in $1$-pass.
\end{theo}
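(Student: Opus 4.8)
The plan is to adapt the known single-pass counting/sampling approach of Pavan et al.~\cite{pavan2013counting} and, for the $m^2/T$ regime, its standard variant; the $\min$ comes from choosing whichever of the two subroutines is cheaper. First I would dispose of the trivial side of the minimum: if $m \leq m^2/T$, i.e.\ $T \leq m$, then simply store the entire stream in $\Oh(m)$ space and sample a triangle exactly uniformly by brute force after the stream ends; this already achieves $\ell_1$-distance $0 \leq \eps$. So from now on assume $T > m$, and the target is $\tOh(m^2/T)$ space.

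For the interesting regime, the plan is to run $r = \tOh(m^2/T)$ independent copies of the following one-pass ``wedge-sampler.'' Using reservoir sampling, pick an edge $e=\{u,v\}$ uniformly at random from the stream. After $e$ is fixed, continue over the (rest of the) stream and use reservoir sampling again to pick a uniformly random edge $f$ among the edges adjacent to $e$ that arrive after $e$; say $f=\{u,w\}$ (sharing endpoint $u$ with $e$), so $(v,u,w)$ is a candidate wedge. Then wait for the third edge $\{v,w\}$ to appear later in the stream; if it does, output the triangle $\{u,v,w\}$, else this copy reports \textsc{Fail}. The subtlety already handled in Pavan et al.\ is that to avoid biasing toward triangles whose edges appear in a particular order, one must be a little careful: the cleanest fix is to run the wedge sampler with $e$ chosen uniformly among \emph{all} edges and $f$ chosen uniformly among \emph{all} edges sharing an endpoint with $e$ (via a two-stage reservoir that allows $f$ to have arrived either before or after $e$, or equivalently restart the adjacent-edge reservoir appropriately), so that each wedge of the triangle is equally likely to be the ``witnessed'' one; I would make the wait-for-the-third-edge step symmetric over which of the three edges is the ``last'' one by having each copy guess an ordering. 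The key computation is then: for a fixed triangle $\Delta$ with edges $e_1,e_2,e_3$ and, say, $d_i$ the degree-sum of the wedge through vertex $i$, a single copy outputs $\Delta$ with probability that is the same for every triangle up to the normalization $\sum_{\text{wedges } W} 1/(\text{number of edges adjacent to the edge of }W)$; after summing over the three wedges of $\Delta$ this becomes a quantity depending only on global parameters ($m$ and the edge-adjacency counts), not on $\Delta$, so conditioned on success each copy outputs a uniformly random triangle. The success probability of one copy is $\Theta(T / m^2)$ (there are $T$ triangles, each witnessed with probability $\Theta(1/m^2)$ since picking a specific edge costs $1/m$ and picking a specific adjacent edge costs $\Omega(1/m)$ in the worst case). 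Hence with $r=\Theta((m^2/T)\log n)$ copies, at least one succeeds with high probability, and we output the triangle from (say) the first successful copy; it is uniform.

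The space accounting is easy: each copy stores $\Oh(1)$ edges and $\Oh(1)$ counters, so $r$ copies use $\tOh(m^2/T)$ words; together with the $\Oh(m)$ brute-force branch this gives the claimed $\tOh(\min\{m, m^2/T\})$ bound. To remove the assumption that $m$ and $T$ are known exactly, I would invoke the standard geometric-guessing reduction described in \Cref{app:rem}: run $\Oh(\log n)$ guesses for $T$ in parallel, track $m$ as the stream progresses, and at the end use the output corresponding to the correct guess (identified via a parallel triangle-counting estimator).

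The main obstacle I anticipate is making the ``conditioned on success, the output is \emph{exactly} uniform'' claim fully rigorous: the naive wedge sampler is biased because the probability a given triangle is witnessed depends on which of its three edges is sampled first and on how many edges are adjacent to that edge, and these adjacency counts differ across triangles. The fix — summing over all three wedges and choosing the reservoir scopes so that the per-wedge witness probability telescopes to a triangle-independent constant — requires care, and is exactly the point where Pavan et al.'s original analysis does its real work; I would either reproduce that argument or, alternatively, accept a $(1\pm\eps)$-near-uniform guarantee by a light rejection-sampling correction (reject a witnessed triangle with probability tuned to equalize the per-triangle output probabilities, losing only a constant factor in success probability, which is absorbed into $\tOh(\cdot)$). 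Either route closes the proof, so I would present the rejection-correction version for brevity and note the exact-uniform variant follows \cite{pavan2013counting}.
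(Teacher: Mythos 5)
Your route is genuinely different from the paper's. The paper's Algorithm~\ref{alg:1passea} does not use wedge/neighborhood sampling at all: it maintains a reservoir of \emph{two independent uniform edges} $e_1,e_2$ drawn from the whole stream, and whenever a later edge closes a triangle with the current reservoir pair, that triangle is recorded (and discarded if the reservoir subsequently changes). Charging each triangle to its last-arriving edge, a fixed triangle is reported iff the reservoir holds its first two edges, which happens with probability exactly $1/m^2$ --- a triangle-independent constant --- so exact uniformity conditioned on success is immediate, the success probability is $T/m^2$, and $\tOh(m^2/T)$ parallel copies finish the job. What this buys over your Pavan-et-al-style wedge sampler is precisely the absence of the bias you correctly identify as the main obstacle: there is no $1/c(e)$ factor to cancel, hence no rejection step and no reliance on the original analysis of~\cite{pavan2013counting}. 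Your trivial $\Oh(m)$ branch for $T\le m$ and the guessing reduction for unknown $m,T$ match what the paper does implicitly.

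One claim in your primary route is wrong and you should drop it: summing the witness probability over the three wedges of $\Delta$ does \emph{not} yield a triangle-independent quantity, since $\sum_{i}1/c(W_i)$ (with $c(W_i)$ the number of edges adjacent to the base edge of wedge $W_i$) varies with the degrees of $\Delta$'s vertices; likewise the ``guess an ordering''/``$f$ may arrive before $e$'' symmetrization does not remove this dependence. Your fallback does repair this: charge each triangle to the wedge formed by its first two edges in stream order, track $c(e)$ for the reservoir edge $e$, and accept a witnessed triangle with probability $c(e)/2m$, giving per-triangle probability exactly $\frac{1}{m}\cdot\frac{1}{c(e)}\cdot\frac{c(e)}{2m}=\frac{1}{2m^2}$. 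Note only that the rejection costs a factor of $c(e)$ (up to $\Theta(m)$), not ``a constant'', in the per-copy success probability; the corrected success probability $T/2m^2$ is still exactly what the $\tOh(m^2/T)$ repetition count requires, so the theorem follows. With that correction made explicit, your proof closes; it is simply a more laborious path than the paper's.
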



Let us consider an ordering  $\pi:E \rightarrow  |E|$ over the edges of the graph such that $\pi$ is a bijection.~\footnote{In particular, $\pi$ refers to the order in which the edges of the graph will come in the streaming.} Let us discuss an algorithm to sample a triangle uniformly. Take two edges $\{u,v\}$ and $\{u',v'\}$ from the graph uniformly at random and with replacement. If the two edges have exactly one common vertex (say $u=u'$), then check whether $\{v,v'\}$ is an edge, i.e., $\{u,v,v'\}$ form a triangle. If yes and $\pi(\{v,v'\})>\pi(\{u,v\}), \pi(\{u,v'\})$, report the triangle as the output. We can argue that the above algorithm samples a triangle uniformly at random with probability $\Omega(T/m^2)$. We repeat the above procedure $\tOh{(m^2/T)}$ times to boost the success probability.

In order to implement the above intuition in the streaming, we use a reservoir sampler. The streaming implementation is described in \Cref{alg:1passea}, and thereafter we prove its correctness.

\begin{algorithm}
\caption{Triangle Sampling in $1$-pass \ea/\va Model}\label{alg:1passea}
\SetAlgoLined

Initiate a Reservoir sampler RS to store two edges $e_1, e_2$ uniformly at random with replacement from the stream.

\For{each edge $e$ in the streaming}
{
Give $e$ to as the input to the RS. 

After the execution of RS, 

$\{e_1',e_2'\} \leftarrow RS$. \

\uIf{$e \notin \{e_1',e_2'\}$}{

Check if $e$ forms an edge with the two edges $e_1',e_2'$ sampled by RS.

Let the respective triangle be $\Delta =\{e_1', e_2', e\}$. \

}

\uElse{
Delete any information (possible about some triangle) stored except the two edges (one of them is $e$ itself) sampled by RS. \

}



}

\uIf{there is any triangle $\Delta$ stored}{

Output $\Delta$. \

}

\uElse{

Report {\sc Fail}. \

}



\end{algorithm}






















Now we proceed to prove Theorem~\ref{theo:tsamp-arb-1-proof}.

\begin{proof}[Proof of Theorem~\ref{theo:tsamp-arb-1-proof}]

First we argue about space complexity.

\paragraph*{Space Complexity:}

The space complexity of \Cref{alg:1passea} is due to the reservoir sampler RS, which stores two edges, along with a (possible) triangle. This requires $\Oh(1)$ space in a single iteration. Since we run $\tOh(m^2/T)$ instances of Algorithm~\ref{alg:1passea} in parallel, the total space complexity is $\tOh(m^2/T)$.



\paragraph*{Proof of Correctness:}



To prove that Algorithm~\ref{alg:1passea} samples a triangle uniformly, let us consider an arbitrary triangle $\Delta= \{e_1, e_2, e_3\}$ containing edges $e_1$, $e_2$ and $e_3$ such that $e_1$ and $e_2$ comes before $e_3$ in the streaming.

Let us consider the case when $\{e_1,e_2\}$ is sampled by the reservoir sample RS. Then observe that  the status if reservoir sampler is not changed on arrival of any edge $e$ in the stream, after the arrival of  $e_1$ and $e_2$. So, by the description of the algorithm, when $e_3$ arrives, the algorithm detects the triangle $\Delta=\{e_1,e_2,e_3\}$, Moreover, $\Delta$ will be the output reported by the algorithm.

Hence,

$$\pr \left(\Delta~\mbox{is reported}\right)=\pr(\mbox{RS samples $e_1$ and $e_2$})=\frac{1}{m^2}.$$





Since there are $T$ triangles in the graph $G$ and the triangles reported by the algorithm are disjoint event, using the union bound, we can say that the probability that Algorithm~\ref{alg:1passea} outputs a sampled triangle is: $T/m^2$.

Since we are running $t= \tOh(m^2/T)$ instances of Algorithm~\ref{alg:1passea} in parallel, the probability that no triangle will be sampled among all the instances is bounded by $(1 - T/m^2)^{t} \leq 1/\mathsf{poly}(n)$. Thus, we conclude that with probability at least $1- 1/\mathsf{poly}(n)$, a triangle will be sampled.

Under the conditional space that a triangle has been sampled, the probability that a particular triangle $\Delta$ is sampled is as follows:
\begin{eqnarray*}
\pr(\Delta  \mbox{ is sampled}) = \frac{\pr(\Delta \ \mbox{is reported})}{\pr(\mbox{A triangle is reported})} 
= \frac{1/m^2}{T/m^2}= \frac{1}{T}.
\end{eqnarray*}
Thus, the triangle $\Delta$ has been sampled uniformly.
\end{proof}

\section{Lower Bound Results for Triangle Sampling}\label{sec:lb_app}

Here we discuss the lower bound results for \tsamp in different modes, as mentioned in the introduction. The lower bound results on \tsamp follow from the corresponding lower bound for triangle counting along with the discussion in \Cref{rem:lower} of \Cref{app:rem}.

\subsection{Lower bound for Triangle Sampling in \al Model}\label{sec:allb_app}
Let us begin by stating the multi-pass triangle counting lower bound in \al model.

\begin{lem}[Lower bound of Triangle Counting in Multi-pass, Theorem 5.2 of \cite{kallaugher2019complexity}]\label{theo:lower-tri-cnt}
For any $m,T \in \N$ such that $T \leq m^{3/2}$, there exist $m'= \Theta(m)$ and $T'=\Theta(T)$ such that any adjacency list streaming algorithm that
distinguishes between $m'$-edge graphs with $0$ and $T'$ triangles with probability at least $2/3$ in a constant number of passes requires $\Omega(m/T^{2/3})$ space.
\end{lem}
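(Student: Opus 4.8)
This lemma is a verbatim restatement of Theorem~5.2 of \cite{kallaugher2019complexity}, so strictly one may simply invoke it; nevertheless, here is the route I would take to (re)prove it. The plan is the standard streaming-to-communication reduction. A $p$-pass streaming algorithm that uses $s$ bits of space and whose input stream is the concatenation of a prefix $X$ (given to Alice) and a suffix $Y$ (given to Bob) yields a $2$-party communication protocol for the corresponding predicate on $(X,Y)$ exchanging $\Oh(ps)$ bits: Alice runs the algorithm on $X$, ships the memory to Bob who continues on $Y$, and so on for $p$ passes. Since $p=\Oh(1)$ here, it suffices to construct streams $X,Y$ whose union is an $m'=\Theta(m)$-edge graph with either $0$ or $T'=\Theta(T)$ triangles, such that the underlying communication problem has complexity $\Omega(m/T^{2/3})$.

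For the hard instance I would use a packing of clique gadgets, which is exactly where the exponent $2/3$ enters. Set $k=\Theta(T^{1/3})$ and $N=\Theta(m/T^{2/3})$, and take $N$ vertex-disjoint gadgets each on $\Theta(k)$ vertices; the edge budget is $\Theta(Nk^2)=\Theta(m)$ as required, and the vertex count $\Theta(Nk)$ fits within $n$ using the hypothesis $T\le m^{3/2}$ (so $k\le m^{1/2}$). In gadget $i$, Alice always places a fixed triangle-free ``scaffold'', while Bob places the $\binom{k}{2}$ edges of a $k$-clique on that scaffold iff a hidden bit $z_i\in\{0,1\}$ equals $1$; the scaffold is designed so that scaffold $+$ clique creates $\Theta(k^3)=\Theta(T)$ triangles but no triangle otherwise. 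In the NO case all $z_i=0$ and the whole graph is triangle-free; in the YES case exactly one $z_i=1$, giving $\Theta(T)$ triangles. Crucially, in the \al order one reveals the gadget vertices so that every ``decision'' edge of gadget $i$ is incident to a vertex that appears only in Bob's part of the stream; then after processing Alice's part the algorithm must retain enough about each of the $N$ scaffolds to later certify a triangle, so that distinguishing $0$ from $\Theta(T)$ triangles amounts to solving an $\mathsf{Index}$-/$\mathsf{Disjointness}$-type problem on the $N$-bit vector $z$, whose one-way (hence $\Oh(1)$-round) communication complexity is $\Omega(N)=\Omega(m/T^{2/3})$.

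The main obstacle — the part that makes this a theorem rather than an exercise — is twofold. First, one must design the scaffold/clique gadget so that it is provably triangle-free unless the planted clique is present and so that the ``clique-or-not'' information is robust to the adversarial \al vertex ordering, taking care that the algorithm cannot shortcut via degree information or the order in which gadget vertices appear. Second, one must lift the one-way $\Omega(N)$ bound to the genuine $p$-pass model by a direct-sum / information-complexity argument, so that the bound degrades only by the harmless $\Oh(p)=\Oh(1)$ factor and does not become vacuous in the densest regime $T=\Theta(m^{3/2})$, where $N=\Theta(1)$ and the statement is trivially true anyway. Combining the reduction with the $\Omega(N)$ communication lower bound then yields the claimed $\Omega(m/T^{2/3})$ space bound for any constant number of passes.
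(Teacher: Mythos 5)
The paper does not prove this lemma; it is imported verbatim from \cite{kallaugher2019complexity}, and --- crucially --- the paper's own Remark~\ref{rem:nof} records that the bound is \emph{conditional} on a conjecture about the Disjointness problem in the $k$-player \emph{number-on-forehead} model (where only $\Omega(\sqrt{n})$ is currently known, versus the conjectured $\Omega(n)$). Your proposal purports to derive the $\Omega(m/T^{2/3})$ bound unconditionally from a two-party reduction to an $\mathsf{Index}$/$\mathsf{Disjointness}$-type problem. That cannot be right as stated: if a two-party argument of the kind you sketch sufficed, the original authors would not have needed the NOF conjecture at all. The $T^{2/3}$ exponent in the multi-pass \al setting genuinely comes from a multiparty NOF structure, not from a two-party cut of the stream.

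There is also a concrete technical error independent of the above: you write that the underlying problem has ``one-way (hence $\Oh(1)$-round) communication complexity $\Omega(N)$.'' A one-way lower bound does \emph{not} imply a bounded-round lower bound --- $\mathsf{Index}$ is the canonical counterexample, with one-way complexity $\Omega(N)$ but two-way complexity $\Oh(\log N)$. Since a $p$-pass algorithm yields a $(2p-1)$-round protocol, a constant-pass lower bound requires a multi-round communication lower bound, and for the planted-clique gadget you describe the natural two-round protocol (Bob sends the index $i$ with $z_i=1$, Alice answers) is exponentially cheaper than $N$. Finally, your Alice/Bob split glosses over the fact that in the \al model every edge appears twice, once in each endpoint's adjacency list, so one cannot freely assign all ``decision'' edges of a gadget to Bob's suffix; this double exposure is precisely what forces the multiparty formulation. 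If you want to use this lemma, the honest route is to cite Theorem~5.2 of \cite{kallaugher2019complexity} and state the conditioning on the NOF conjecture, as the paper does.
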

\begin{rem}\label{rem:nof}
    The lower bound stated in the above theorem is conditioned on the conjecture for the complexity of the \emph{Disjointness} problem in the \emph{number on forehead} model. Here, there are $k$ players each holding a vector in $\{0,1\}^n$ such that each player can see all the vectors except own vector. The objective of the players is to decide if there exists an index $i\in[n]$ such that the $i$-th coordinate of all the vectors are $1$. The conjecture states that the communication complexity of the problem is $\Omega(n)$ where as the best known lower bound is $\Omega(\sqrt{n})$~\cite{DBLP:journals/jacm/Sherstov14}.
\end{rem}

In the multi-pass setting, we have the following lower bound result for \tsamp  in the \al model (assuming that the conjecture states in \Cref{rem:nof} is true). 

\begin{theo}[Lower bound of \Cref{theo:two} (i)]
For any $m,T \in \N$ such that $T \leq m^{3/2}$, there exist $m'= \Theta(m)$ and $T'=\Theta(T)$ such that any adjacency list streaming algorithm that samples a triangle uniformly from any graph with $m'$ edges and $T'$ triangles with probability at least $2/3$ in a constant number of passes requires $\Omega(m/T^{2/3})$ space.
\end{theo}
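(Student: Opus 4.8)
The plan is to reduce from the multi-pass triangle counting lower bound (Lemma~\ref{theo:lower-tri-cnt}) using the sampling-to-counting reduction sketched in Remark~\ref{rem:lower}. Concretely, suppose towards a contradiction that there is an adjacency-list streaming algorithm $\cA$ that, for all $m, T$ with $T \le m^{3/2}$, samples a triangle from a distribution $\eps$-close (in $\ell_1$) to uniform over the triangles of any input graph with $m$ edges and at least $T$ triangles, using $o(m/T^{2/3})$ space and a constant number of passes, succeeding with probability at least $2/3$. I will fix $\eps$ to be a small constant (say $\eps = 1/10$), which is legitimate since the target theorem is stated for a fixed proximity parameter hidden inside $\widetilde{\Theta}(\cdot)$; the $\poly(1/\eps)$ factors are absorbed into the $\widetilde{\Theta}$.

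The reduction is as follows. Let $G$ be an instance of the counting problem from Lemma~\ref{theo:lower-tri-cnt}: $G$ has $m' = \Theta(m)$ edges and either $0$ or $T' = \Theta(T)$ triangles. Build $G' = G \cup K$ where $K$ is a vertex-disjoint clique on $\Theta((T')^{1/3})$ vertices, so that $K$ contributes $\Theta(T')$ triangles and $\Theta((T')^{2/3}) = O(m')$ extra edges. Thus $G'$ has $\Theta(m')$ edges and \emph{always} has $\Omega(T')$ triangles, so the promise needed to run $\cA$ is satisfied regardless of which case $G$ is in, and the space budget $o(m/T^{2/3})$ is unchanged up to constants. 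The adjacency-list stream for $G'$ can be produced online from the stream for $G$ by first replaying the $G$-stream and then appending the (fixed, known) edges of $K$; this adds only $O(1)$ passes and $O(1)$ working space beyond what $\cA$ uses, so the composed algorithm still runs in constantly many passes with $o(m/T^{2/3})$ space. Now run $O(1)$ independent instances of $\cA$ on $G'$ and output ``$T'$ triangles'' if and only if at least one instance returns a triangle lying outside $K$. If $G$ has $0$ triangles, every triangle of $G'$ is inside $K$, so the output is always ``$0$ triangles.'' If $G$ has $T'$ triangles, then roughly half the triangles of $G'$ lie outside $K$, so a sample from any distribution that is $1/10$-close to uniform hits a triangle outside $K$ with constant probability, and $O(1)$ repetitions boost this past $2/3$. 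Hence the composed algorithm distinguishes the two cases with probability $\ge 2/3$ in constantly many passes using $o(m/T^{2/3})$ space, contradicting Lemma~\ref{theo:lower-tri-cnt}. (The conditional nature of Lemma~\ref{theo:lower-tri-cnt} on the Number-on-Forehead Disjointness conjecture, noted in Remark~\ref{rem:nof}, carries over verbatim, so the lower bound is likewise conditional.)

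The main subtlety — and the step I would be most careful about — is the arithmetic of making the clique ``balanced'': one must choose the clique size $c = \Theta((T')^{1/3})$ so that (i) the number of clique triangles $\binom{c}{3} = \Theta(T')$ is within a constant factor of the $T'$ triangles possibly present in $G$, guaranteeing that a constant fraction of triangles of $G'$ are outside $K$ in the ``$T'$'' case (so an $\eps$-close sample escapes $K$ with constant probability), and (ii) the number of clique edges $\binom{c}{2} = \Theta((T')^{2/3}) = O(m')$ is dominated by $m'$ (using $T' \le (m')^{3/2}$), so the edge count of $G'$ stays $\Theta(m')$ and the lower bound parameter $m/T^{2/3}$ is unchanged up to constants. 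Once these two inequalities are pinned down, everything else is a routine union bound / probability amplification. I would also remark that the same template, with a clique of size $\Theta((T')^{1/3})$, is exactly what yields the analogous lower bounds in the $\ea$ and $\va$ models from the counting lower bounds of~\cite{braverman2013hard, bera2017towards}, which is why Remark~\ref{rem:lower} states all of them uniformly.
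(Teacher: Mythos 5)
Your proposal is correct and takes essentially the same route as the paper: the authors prove this theorem only via the reduction sketched in \Cref{rem:lower}, namely appending a vertex-disjoint clique on $\Theta(T^{1/3})$ vertices to the counting instance and running $O(1)$ independent copies of the sampler, declaring ``$T'$ triangles'' iff some sample falls outside the clique. Your additional bookkeeping --- that the clique contributes $\Theta(T^{2/3})=O(m)$ edges (using $T\le m^{3/2}$) and $\Theta(T)$ triangles, that the promise on the triangle count holds in both cases, and that the bound remains conditional on the number-on-forehead conjecture of \Cref{rem:nof} --- is exactly what the paper leaves implicit.
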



Now, we state the $1$-pass lower bound of triangle counting in \al model.

\begin{lem}[Lower bound for Triangle Counting in 1-pass, Theorem 5.1 of \cite{kallaugher2019complexity}]
For any $m, T \in \N$ such that $T \leq m$, there exist $m' = \Theta(m)$ and $T'=
\Theta(T)$ such that any \al streaming algorithm that distinguishes between $m'$-edge graphs with $0$ and $T'$ triangles in one pass with probability at least $2/3$ requires $\Omega(m/\sqrt{T})$ space.
\end{lem}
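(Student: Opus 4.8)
The plan is to prove the bound by a reduction from a hard communication problem, which is exactly the route by which the matching upper bound $\tOh(m/\sqrt T)$ is known to be tight. A single-pass \al streaming algorithm using $S$ bits of memory yields a one-way communication protocol of cost $O(S)$: partition the stream into consecutive blocks by vertex-revelation order, let each player run the algorithm on its own block, and forward the memory contents to the next player. Hence it suffices to exhibit a family of $\Theta(m)$-edge graphs encoding a communication instance whose one-way complexity is $\Omega(m/\sqrt T)$, on which deciding whether there are $0$ or $\Theta(T)$ triangles solves the communication problem. I would reduce from (number-in-hand, one-way) \textsc{Set-Disjointness} on $N=\Theta(m/\sqrt T)$ coordinates; its one-way randomized complexity is $\Omega(N)$ \emph{unconditionally}, which is precisely what lets the single-pass bound avoid the number-on-forehead Disjointness conjecture that the multi-pass bound of Lemma~\ref{theo:lower-tri-cnt} rests on. Since the three edges of a triangle split naturally among parties, I would use the three-party variant, mapping each party to one ``layer'' of a tripartite structure.

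The gadget is the crux of obtaining the right triangle count. Fix a public ``canvas'' consisting of a complete bipartite graph between two sets $P,Q$ with $|P|=|Q|=\sqrt T$, contributing $T$ edges and, being bipartite, no triangles of its own. For each coordinate $i\in[N]$ introduce an apex vertex $c_i$; one party connects $c_i$ to all of $P$ iff $x_i=1$, and a second connects $c_i$ to all of $Q$ iff $y_i=1$. Then coordinate $i$ contributes the $T$ triangles $\{c_i,p,q\}$ (over $p\in P,\ q\in Q$) precisely when $x_i=y_i=1$, and none otherwise; under a unique-intersection promise a \textsc{Yes} instance produces exactly $T$ triangles and a \textsc{No} instance produces $0$. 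The edge budget is $\Theta(T+N\sqrt T)=\Theta(m)$ once $N=\Theta(m/\sqrt T)$ (here $T\le m$ keeps the canvas within budget), so $m'=\Theta(m)$ and $T'=\Theta(T)$ as required, and the $\Omega(N)=\Omega(m/\sqrt T)$ communication lower bound transfers to $S$.

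The main obstacle, and the place where the \al model must be handled with care, is that in the \al model every edge is exposed \emph{twice}, once at each endpoint, so a vertex whose neighbourhood mixes two parties' input edges cannot be cleanly assigned to a single player's phase of the stream. In the naive assignment above, each apex $c_i$ sees both an $x$-edge (to $P$) and a $y$-edge (to $Q$), so its adjacency list straddles the party boundary; a player would be forced to emit input-dependent edges it does not know, and there are up to $\Theta(m)$ such crossing edges, far too many to carry in a message. The resolution I would pursue is to make the party boundary consist only of \emph{public} edges: split each apex into per-party copies linked through the public canvas (equivalently, route each coordinate through a distinct public hub), and choose the vertex-revelation order so that each player's phase exposes exactly its own input-dependent edges at both endpoints while every edge crossing from one phase to the next is a canvas/hub edge known to all. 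One then verifies (i) that this rewiring preserves the ``active coordinate $\Rightarrow \Theta(T)$ triangles, inactive $\Rightarrow 0$'' correspondence and introduces no spurious triangles, and (ii) that the induced protocol is genuinely one-way with cost $O(S)$, which, combined with the $\Omega(N)$ disjointness bound and a constant-factor boosting of the success probability, yields the claimed $\Omega(m/\sqrt T)$ space lower bound in the $2/3$-success regime.
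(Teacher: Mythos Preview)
The paper does not prove this lemma at all: it is stated as a citation of Theorem~5.1 of Kallaugher, Mitzenmacher, Pagh, and Vorotnikova~\cite{kallaugher2019complexity}, and the authors use it as a black box to transfer the counting lower bound to sampling via the reduction in Remark~\ref{rem:lower}. So there is no ``paper's own proof'' to compare against; your proposal is an attempt to reprove an imported result.

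As a sketch of how the cited result might be established, your approach is in the right spirit (communication complexity reduction, gadget that converts a common coordinate into $\Theta(T)$ triangles), but a few points are not nailed down. First, you announce a three-party reduction and then describe a two-party one (Alice controls $c_i$--$P$ edges, Bob controls $c_i$--$Q$ edges, the $P$--$Q$ canvas is public); two-party one-way disjointness already gives $\Omega(N)$, so this is fine, but the text is internally inconsistent. Second, and more substantively, the \al obstacle you correctly flag is not actually resolved: ``split each apex into per-party copies linked through the public canvas'' is a slogan, not a construction. If you replace $c_i$ by two vertices $a_i,b_i$ with $a_i$ joined to $P$ (Alice's side) and $b_i$ joined to $Q$ (Bob's side), you must also add an $a_i$--$b_i$ edge to recover any triangles at all, and then the triangles you get are of the form $\{a_i,b_i,p\}$ or $\{a_i,b_i,q\}$, only $\Theta(\sqrt T)$ many rather than $T$; moreover that $a_i$--$b_i$ edge is itself input-dependent at both endpoints. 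You would need to spell out a concrete rewiring and a vertex-revelation order in which every adjacency list lies entirely within one party's knowledge, and then recount edges and triangles. Until that is written down, the \al lower bound is not established by your argument.
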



In the one-pass setting, we have the following lower bound result for \tsamp  in the \al model. 

\begin{theo}[Lower bound of Theorem~\ref{theo:two} (ii)]
For any $m, T \in \N$ such that $T \leq m$, there exist $m' = \Theta(m)$ and $T'=
\Theta(T)$ such that any \al streaming algorithm that samples a triangle uniformly from any graph with $m'$ edges and $T'$ triangles in one pass with probability at least $2/3$ requires $\Omega(m/\sqrt{T})$ space.
\end{theo}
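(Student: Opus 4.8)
The plan is to follow the sampling\mbox{-}to\mbox{-}counting reduction sketched in \Cref{rem:lower}, instantiated with the one\mbox{-}pass adjacency\mbox{-}list triangle counting lower bound of \cite{kallaugher2019complexity} (Theorem~5.1, stated just above). Fix $m,T$ with $T\le m$, and let $\overline m=\Theta(m)$ and $\overline T=\Theta(T)$ be the parameters it furnishes, so that any \al streaming algorithm distinguishing $\overline m$\mbox{-}edge graphs with $0$ triangles from $\overline m$\mbox{-}edge graphs with $\overline T$ triangles in one pass with success probability at least $2/3$ requires $\Omega(m/\sqrt T)$ space. Choose $k=\Theta(T^{1/3})$ so that the complete graph $K$ on $k$ vertices has $\binom k3=\Theta(\overline T)$ triangles and $\binom k2=\Theta(T^{2/3})$ edges; since $T\le m$ we have $\binom k2=\Oh(m)$, hence $m'':=\overline m+\binom k2=\Theta(m)$ and $T'':=\binom k3=\Theta(T)$.

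The reduction itself: given the adjacency\mbox{-}list stream of a counting instance $G$ (with $\overline m$ edges and either $0$ or $\overline T$ triangles), we produce the stream of $G':=G\sqcup K$, the disjoint union of $G$ with $K$, by first replaying the stream of $G$ and then appending the adjacency lists of the $k$ new clique vertices. This transformation is \emph{oblivious} (the same gadget $K$ is attached in both cases), preserves the one\mbox{-}pass \al model, and is implementable with only $\Oh(\log n)$ additional memory (two counters to enumerate $K$). Observe that $G'$ has $m''=\Theta(m)$ edges in both cases, always has at least $\binom k3=T''=\Theta(T)$ triangles (so a triangle sampler is well defined on it, and the promised lower bound $T''$ on its triangle count holds), and the triangles of $G'$ lying outside $V(K)$ are exactly the triangles of $G$.

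Now suppose an \al algorithm $\cA$ samples a uniformly random triangle of the input graph in one pass using space $S$, succeeding with probability at least $2/3$ (and otherwise outputting an explicit \mbox{{\sc Fail}} symbol). Run $\Oh(1)$ independent copies of $\cA$ in parallel on the stream of $G'$ and output ``$G$ has $\overline T$ triangles'' iff some copy returns a triangle not entirely contained in $V(K)$. If $G$ is triangle\mbox{-}free, every triangle of $G'$ lies inside $K$, so no copy can ever return a triangle outside $V(K)$, and the test correctly says ``$0$''. If $G$ has $\overline T$ triangles, a uniformly random triangle of $G'$ avoids $V(K)$ with probability $\overline T/(\overline T+\binom k3)=\Omega(1)$, so each successful copy returns an outside\mbox{-}$K$ triangle with probability $\Omega(1)$; a constant number of copies therefore makes the test say ``$\overline T$'' with probability at least $2/3$. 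The combined distinguisher uses $\Oh(S)+\Oh(\log n)$ space; since $m/\sqrt T\ge\sqrt m=\Omega(\log n)$ in the non\mbox{-}trivial regime $T\ge\poly(\log n)$ assumed throughout, the counting lower bound forces $S=\Omega(m/\sqrt T)$. Taking $m'\leftarrow m''$ and $T'\leftarrow T''$ gives the claimed statement; the multi\mbox{-}pass lower bound (the preceding theorem) follows by the identical argument using the constant\mbox{-}pass counting lower bound of \cite{kallaugher2019complexity} in place of the one\mbox{-}pass one.

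The main obstacle is conceptual rather than technical: because the reduction must be oblivious, the two resulting instances necessarily differ slightly in their triangle counts ($\binom k3$ versus $\overline T+\binom k3$), so one must be content with a \emph{promised lower bound} $T''$ on the triangle count rather than an exact count — which is exactly the convention used for $T$ everywhere in the paper — and the clique $K$ is indispensable even in the triangle\mbox{-}free case, since a uniform triangle sampler is otherwise undefined there. A secondary point worth stating carefully is the boosting step: a uniform sampler with success probability only $2/3$ cannot be amplified in the usual way, but the test has one\mbox{-}sided error (in the triangle\mbox{-}free case it can never falsely report an outside\mbox{-}$K$ triangle, provided $\cA$ on failure outputs \mbox{{\sc Fail}} rather than an arbitrary non\mbox{-}triangle), so $\Oh(1)$ independent repetitions already suffice.
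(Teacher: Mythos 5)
Your proposal is correct and is exactly the paper's argument: the paper proves this lower bound by citing the one-pass \al counting lower bound of \cite{kallaugher2019complexity} together with the reduction of \Cref{rem:lower}, which attaches a disjoint clique on $\Theta(T^{1/3})$ vertices and runs $\Oh(1)$ copies of the sampler, deciding according to whether any returned triangle lies outside the clique. Your write-up merely fills in details the paper leaves implicit (obliviousness of the stream transformation, the one-sided-error boosting, and the bookkeeping for $m'$ and $T'$), all of which are handled correctly.
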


\subsection{Lower bounds for Triangle Sampling in \ea/\va Model}\label{sec:ealb_app}

Similar to the lower bounds for \al model, we will first state the lower bound of triangle counting in \ea/\va model below.

\begin{lem}[Lower bound for Triangle Counting in Multi-pass, Theorem 4.2 of \cite{bera2017towards}]
    For any $m, T \in \N$ such that $T = \Omega(m)$, there exist $m' = \Theta(m)$ and $T'=\Theta(T)$ such that any \ea (\va) streaming algorithm that distinguishes between $m'$-edge graphs with $0$ and $T'$ triangles in multi-pass with probability at least $2/3$ requires $\Omega(m^{3/2}/T)$ space.
\end{lem}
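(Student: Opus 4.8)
The plan is to derive this bound in the standard way: by reducing a hard two‑party communication problem to triangle counting over a graph stream. The primitive I would start from is a \emph{gap} (or promise) version of set disjointness: Alice holds $A\subseteq[N]$, Bob holds $B\subseteq[N]$, with the promise that either $A\cap B=\emptyset$ or $|A\cap B|=L$ (the $L$ common elements placed adversarially), and it is known that, even with an unbounded number of rounds, distinguishing the two cases needs $\widetilde{\Omega}(N/L)$ bits of randomized communication (for $L=1$ this is unique disjointness). A $p$‑pass streaming algorithm using space $s$ yields a protocol in which the two players alternately process their parts of the stream and forward the working memory, at cost $O(ps)$ bits; so for a constant number of passes such an algorithm would force $s=\widetilde{\Omega}(N/L)$. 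I would choose the parameters so that $N=\Theta(\sqrt m)$ and $L=\Theta(T/m)$, whence $N/L=\Theta(m^{3/2}/T)$.

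Next I would describe the gadget. Fix $a=\Theta(\sqrt m)$ and put down a \emph{public}, triangle‑free ``backbone'' $K_{a,a}$ on vertex classes $Y$ and $Z$ ($\Theta(m)$ edges). Add $N=\Theta(\sqrt m)$ \emph{apex} vertices $x_1,\dots,x_N$, one per coordinate of the disjointness instance. If $i\in A$, Alice inserts the $a$ edges $\{x_i,y\}$ for $y\in Y$; if $i\in B$, Bob inserts the $a$ edges $\{x_i,z\}$ for $z\in Z$. Since $Y$, $Z$ and the set of apexes are each independent, the only triangles are of the form $x_iyz$ with $yz$ a backbone edge, and such a triangle is present exactly when $i\in A\cap B$; hence the number of triangles equals $a^2\,|A\cap B|\in\{0,\ a^2L\}=\{0,\ \Theta(T)\}$ under the promise. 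Fixing the hard distribution so that $|A|=|B|=N/4$ always, the total number of edges is $\Theta(m)+a\,|A|+a\,|B|=\Theta(m)$ in both cases, and appending a fixed triangle‑free padding (a matching on fresh vertices) makes it exactly some $m'=\Theta(m)$. The edge set is easily realized as a valid \ea\ stream (public backbone, then Alice's edges, then Bob's, alternating over passes), and also as a valid \va\ stream (reveal $Y$, then $Z$, then the apexes, so that each apex's Alice‑ and Bob‑edges arrive when it is revealed). Running the hypothetical algorithm and checking whether it reports $\approx 0$ or $\approx T'$ triangles decides the disjointness instance, giving the reduction.

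The main obstacle, as is typical here, is calibrating the gadget so that all three quantities land correctly at once: exactly $0$ triangles in the NO case (which is exactly why every triangle must be ``completed'' through the triangle‑free backbone), $\Theta(T)$ triangles in the YES case, and only $\Theta(m)$ edges; and it is precisely the constraint $1\le L\le N$, i.e.\ $m\le T\le m^{3/2}$, that shows up as the hypothesis $T=\Omega(m)$ (together with the trivial upper limit $T\le O(m^{3/2})$). A secondary but essential point is pinning down the communication complexity of the promise‑disjointness problem actually used: one must produce a hard input distribution whose YES instances have intersection size exactly $L=\Theta(T/m)$ and argue that the $\widetilde{\Omega}(N/L)$ lower bound survives the $O(ps)$‑bit simulation; the matching ``sampling'' protocol of cost $\widetilde{O}(N/L)$ shows this is tight, and explains why the space bound degrades smoothly from $\Theta(\sqrt m)$ down to $\Theta(1)$ as $T$ ranges from $\Theta(m)$ to $\Theta(m^{3/2})$.
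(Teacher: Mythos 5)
The paper does not actually prove this lemma; it is imported verbatim as Theorem~4.2 of Bera and Chakrabarti~\cite{bera2017towards}, so there is no in-paper argument to compare against. Your reduction is the standard one and, up to the choice of gadget, essentially the one used in the cited source: gap set disjointness over a universe of size $N=\Theta(\sqrt m)$ with gap $L=\Theta(T/m)$, a public triangle-free $K_{a,a}$ backbone, and per-element apex stars so that each common element contributes $a^2=\Theta(m)$ triangles. The parameter calibration, the edge count, and the role of the hypothesis $T=\Omega(m)$ (equivalently $L\ge 1$) all check out, and the $\Omega(N/L)$ bound for gap disjointness does follow from unique disjointness by the blow-up argument you allude to, so the \ea case is complete. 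One detail needs fixing in the \va case: with your ordering (reveal $Y$, then $Z$, then the apexes), each apex's adjacency list interleaves Alice's edges with Bob's, so simulating one pass costs $\Theta(N)$ alternations and the communication cost of the simulation becomes $O(pNs)$, which destroys the lower bound. Instead reveal the apexes first, then $Y$, then $Z$: no edges arrive during the apex phase, all of Alice's edges arrive during the $Y$ phase, and the backbone together with Bob's edges arrives during the $Z$ phase, so each pass needs only $O(1)$ alternations and the $\Omega(m^{3/2}/T)$ bound goes through for \va as well.
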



In the \ea/\va model model, we have the following lower bound for triangle sampling in multi-pass.

\begin{theo}[Lower bound of \Cref{theo:tsamp-arb-p-proof}]
    For any $m, T \in \N$ such that $T = \Omega(m)$, there exist $m' = \Theta(m)$ and $T'=\Theta(T)$ such that any \ea (\va) streaming algorithm samples a triangle from any graph with $m'$ edges and $T'$ triangles in multi-pass with probability at least $2/3$ requires $\Omega(m^{3/2}/T)$ space.
\end{theo}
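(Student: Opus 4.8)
The plan is to reduce the triangle counting decision problem (distinguishing $0$ from $T'$ triangles) to the triangle sampling problem, exactly following the recipe laid out in Remark~\ref{rem:lower}. First I would invoke the multi-pass triangle counting lower bound in the \ea/\va model (Theorem 4.2 of \cite{bera2017towards}, stated just above): for $T = \Omega(m)$ there are $m' = \Theta(m)$, $T' = \Theta(T)$ such that any \ea (\va) streaming algorithm distinguishing $m'$-edge graphs with $0$ triangles from those with $T'$ triangles in a constant number of passes with success probability $2/3$ needs $\Omega(m^{3/2}/T)$ space. Call the hard input graph $G$.

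Next I would build the reduction gadget. Given $G$, form $G' = G \cup H$ where $H$ is a vertex-disjoint clique on $\Theta(T^{1/3})$ vertices; then $H$ contributes $\Theta(T)$ triangles on its own, and crucially $G'$ always has at least $\Theta(T)$ triangles, so the promise for the sampling algorithm is met regardless of whether $G$ is triangle-free or has $T'$ triangles. Note $G'$ has $\Theta(m) + \Theta(T^{2/3})$ edges; since $T = O(m^{3/2})$ in the relevant regime we have $T^{2/3} = O(m)$, so $G'$ still has $\Theta(m)$ edges, and the stream for $G'$ is just the stream for $G$ followed by the (locally generatable) edges of $H$ — this works in both the \ea and \va orderings since $H$'s vertices can be appended at the end. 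Now run $O(1)$ independent instances of the supposed triangle sampler on $G'$. If $G$ is triangle-free, every sampled triangle lies inside $H$; if $G$ has $T'$ triangles, a constant fraction of $G'$'s triangles lie outside $H$ (since both $t(G)$ and $t(H)$ are $\Theta(T)$), so with constant probability at least one of the $O(1)$ samples is a triangle using a vertex not in $H$, which the reduction detects. This distinguishes the two cases with probability $\geq 2/3$.

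Then I would conclude: the sampling algorithm, run $O(1)$ times in parallel (or sequentially over the same passes, which only multiplies the pass count by a constant — still a constant number of passes), plus $O(\log n)$ bookkeeping space to remember which vertices belong to $H$, solves the counting decision problem. Hence its space must be $\Omega(m^{3/2}/T)$, and rescaling $m', T'$ by the constant factors from $G'$ versus $G$ (absorbing $\Theta(\cdot)$ into $\Theta(\cdot)$) gives the claimed statement.

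I expect the only real subtlety — it is not so much an obstacle as a point to state carefully — is verifying that the clique $H$ genuinely preserves the promise and does not swamp the triangle count in the "yes" instance: one needs $|t(H)| = \Theta(T)$ and $|t(G)| = \Theta(T)$ simultaneously so that the fraction of triangles outside $H$ is bounded below by a constant, which fixes the size of $H$ at $\Theta(T^{1/3})$ and uses the hypothesis $T = \Omega(m)$ (together with $T \leq m^{3/2}$) to keep the total edge count at $\Theta(m)$. Everything else — appending $H$'s edges to the stream in either arrival model, the $O(1)$-repetition amplification, the $O(\log n)$ overhead — is routine. The formal statement is recorded above; no further calculation is needed.
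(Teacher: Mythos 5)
Your proposal is correct and follows essentially the same route as the paper: the paper's own argument (given in Remark~\ref{rem:lower} and invoked in \Cref{sec:ealb_app}) is exactly this reduction from the counting lower bound of \cite{bera2017towards} via the disjoint clique $H$ on $\Theta(T^{1/3})$ vertices and $O(1)$ runs of the sampler. Your additional checks — that $G'$ keeps $\Theta(m)$ edges, that the promise is preserved in both cases, and that $H$'s edges can be appended to the stream in either arrival model — are the right details to make the sketch rigorous.
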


Similarly, let us first state the lower bound of triangle counting in $1$-pass \ea/\va model.

\begin{lem}[Lower bound for Triangle Counting in 1-pass, Theorem 1 of \cite{braverman2013hard}]
    For any $m, T \in \N$ such that $T = \Omega(m)$, there exist $m' = \Theta(m)$ and $T'=\Theta(T)$ such that any \ea (\va) streaming algorithm that distinguishes between $m'$-edge graphs with $0$ and $T'$ triangles in $1$-pass with probability at least $2/3$ requires $\Omega(m^2/T)$ space.
\end{lem}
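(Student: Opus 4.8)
This is the cited bound of \cite{braverman2013hard}, so what follows is a sketch of the route one would take to establish it; the same two-party communication template is what later drives the sampling lower bound in \Cref{rem:lower}.

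The plan is to pass through one-way communication complexity. The first step is the standard streaming-to-communication simulation: any $1$-pass \ea (\va) streaming algorithm using $s$ bits of memory that distinguishes a triangle-free graph from one with $T'$ triangles yields a one-way (Alice-to-Bob) protocol of cost $O(s)$ for the corresponding graph problem. One splits the edge stream so that Alice owns a prefix and Bob owns a suffix, lets Alice run the algorithm on her edges and transmit the resulting memory state, and lets Bob resume the run on his edges and output the verdict. Hence it suffices to exhibit a reduction, from a communication problem of complexity $\Omega(m^2/T)$, to the task of telling a triangle-free graph from one with $\Theta(T)$ triangles, using a graph with only $\Theta(m)$ edges and $\Theta(n)$ vertices.

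For the hard problem I would take one-way Set Disjointness on a ground set of size $N = \Theta(m^2/T)$ under the unique-intersection promise, whose one-way (indeed two-way randomized) communication complexity is $\Omega(N)$. Each coordinate $j \in [N]$ is encoded by a cheap ``gadget'' that is triangle-free on its own but, when activated, contributes $\Theta(T)$ triangles via a dense blob --- think of a clique on $\Theta(T^{1/3})$ vertices, which the automatic ceiling $T = O(m^{3/2})$ (the bound on triangles per edge count) guarantees can be afforded within the $\Theta(m)$-edge budget. Alice flips her switch on gadget $j$ exactly when $j$ is in her set, Bob flips the complementary switch exactly when $j$ is in his set, and gadget $j$ activates precisely when both switches are flipped, i.e.\ $j$ lies in the intersection. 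Under the promise at most one gadget activates, so the graph has either $0$ triangles (disjoint instance) or $\Theta(T)$ triangles (intersecting instance), and distinguishing these solves Disjointness; since all of Alice's edges precede all of Bob's, one pass of the streaming algorithm implements one round of the protocol, and so any distinguisher needs $\Omega(N) = \Omega(m^2/T)$ space. Rescaling constants yields $m' = \Theta(m)$ and $T' = \Theta(T)$.

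The hard part --- and the genuine technical content behind \cite{braverman2013hard} --- is engineering the gadget family so that three budgets hold at once: the total edge count stays $\Theta(m)$ even though there are $\Theta(m^2/T)$ gadgets (forcing a careful choice of per-gadget size near $T^{1/3}$ and a very light switching mechanism, possibly reusing shared structure), an activated gadget produces $\Theta(T)$ triangles while every inactive gadget and all the padding produce none (so one must rule out spurious triangles, e.g.\ ones spanning several gadgets), and the embedding is honestly one-way with no backward information flow. Once the gadget is in hand, the simulation and the reduction above are routine.
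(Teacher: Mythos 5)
The paper itself gives no proof of this lemma: it is imported verbatim as Theorem~1 of \cite{braverman2013hard}, and the only original content surrounding it is the counting-to-sampling reduction of \Cref{rem:lower}. You correctly recognize this, and your high-level template --- simulate a $1$-pass \ea algorithm by a one-way protocol, then reduce from set disjointness under the unique-intersection promise --- is indeed the route taken in the cited source, so at that level of resolution there is nothing to object to.

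The concrete gadget you propose, however, does not survive its own accounting, and the sentence claiming that the ceiling $T = O(m^{3/2})$ ``guarantees [the clique] can be afforded within the $\Theta(m)$-edge budget'' is where the sketch breaks. A single clique on $\Theta(T^{1/3})$ vertices costs $\Theta(T^{2/3})$ edges and fits the budget, but the reduction must prepare $N = \Theta(m^2/T)$ coordinates, \emph{any one of which} could turn out to be the activated one; hence the potential structure for all $N$ gadgets must coexist in the graph, either as fixed padding or as input-dependent edges that are all present in the worst-case input where one player's set is all of $[N]$. By Lemma~\ref{lem:boundnumbertriangle}, any gadget capable of hosting $\Theta(T)$ triangles needs $\Omega(T^{2/3})$ edges, so disjoint per-coordinate blobs cost $\Omega(N \cdot T^{2/3}) = \Omega(m^2/T^{1/3})$ edges in total; this is $O(m)$ only when $T = \Omega(m^3)$, which is incompatible with $T = O(m^{3/2})$ for every nontrivial $m$. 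So the ``shared structure'' you mention in passing is not an optional refinement but the entire content of the construction: the $\Theta(T)$ triangles of an activated coordinate must be closed through a common, globally reused structure, with each coordinate switching only $O(T/m)$ input-dependent edges per player, and one must then separately rule out triangles created across coordinates or inside the shared part. Your sketch names these obligations but commits to a specific design that provably cannot meet them; as a proof of the lemma it therefore has a genuine gap, while as a pointer to the cited literature (which is all the paper itself provides here) it is serviceable.
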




So, in the \ea/\va model, we have the following lower bound of triangle sampling in $1$-pass.

\begin{theo}[Lower bound of \Cref{theo:tsamp-arb-1-proof}]
    For any $m, T \in \N$ such that $T = \Omega(m)$, there exist $m' = \Theta(m)$ and $T'=\Theta(T)$ such that any \ea (\va) streaming algorithm that samples a triangle from a graph with $m'$ edges and $T'$ triangles in $1$-pass with probability at least $2/3$ requires $\Omega(m^2/T)$ space.
\end{theo}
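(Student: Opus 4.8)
The plan is to obtain this sampling lower bound from the $1$-pass triangle counting lower bound of Braverman et al.\ (the lemma stated immediately above, i.e.\ Theorem~1 of \cite{braverman2013hard}) via the sampling-to-counting reduction outlined in \Cref{rem:lower}. Fix target parameters $m,T\in\N$ with $T=\Omega(m)$; we may assume $T\le\Oh(m^{3/2})$, since otherwise no $m$-edge graph has $T$ triangles (\Cref{lem:boundnumbertriangle}) and the statement is vacuous. By the counting lower bound there are $m_0=\Theta(m)$, $T_0=\Theta(T)$ and a family $\cG$ of $m_0$-edge \ea/\va streams, each realizing a graph $G$ with either $0$ or $T_0$ triangles, such that any $1$-pass \ea/\va algorithm distinguishing the two cases with probability $\ge 2/3$ uses $\Omega(m_0^2/T_0)=\Omega(m^2/T)$ space.

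\textbf{The reduction.} For each $G\in\cG$ let $G'$ be the disjoint union of $G$ with a clique $K_\ell$ on $\ell$ vertices, where $\ell=\Theta(T^{1/3})$ is chosen so that $\binom{\ell}{3}=\Theta(T_0)$. Then $G'$ has $m'=m_0+\binom{\ell}{2}=\Theta(m)+\Oh(T^{2/3})=\Theta(m)$ edges (this is exactly where $T\le\Oh(m^{3/2})$ is used) and $T'=\binom{\ell}{3}+(\text{triangles of }G)=\Theta(T)$ triangles in \emph{both} cases, and $G'$ always contains a triangle, so it is a legitimate input to a triangle sampler with promise $T'$. Crucially, the triangles of $G'$ lying outside $K_\ell$ are precisely the triangles of $G$: there are none when $G$ is triangle-free, whereas when $G$ has $T_0$ triangles they form a fixed constant fraction $c_0\in(0,1)$ of all triangles of $G'$. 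A disjoint clique can be appended to the stream in either the \ea or the \va model, so this is legal in both.

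\textbf{Building the distinguisher.} Fix any sufficiently small constant $\eps$ (say $\eps<c_0/2$), and suppose $A$ is a $1$-pass \ea/\va algorithm using space $s$ that on any graph with $m'$ edges and $\geq T'$ triangles outputs (when it succeeds, which happens with probability $\ge 2/3$) a triangle drawn from a distribution $\eps$-close in $\ell_1$ to uniform over the triangles. Run $A$ on $G'$. If $G$ is triangle-free, $A$ can never output a triangle outside $K_\ell$. If $G$ has $T_0$ triangles, then with probability $\ge 2/3$ the output is $\eps$-near-uniform, and conditioned on that it lies outside $K_\ell$ with probability at least $c_0-\eps\ge c_0/2$; hence a single run of $A$ outputs a triangle outside $K_\ell$ with probability at least some fixed constant $c_1>0$. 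Now run $k=\Oh(1)$ independent copies of $A$ in parallel (still one pass, space $\Oh(ks)=\Oh(s)$) and declare ``$G$ has $T_0$ triangles'' iff at least one copy outputs a triangle outside $K_\ell$. For $k$ a large enough constant this distinguisher is correct with probability $\ge 2/3$ in both cases. By the counting lower bound, $\Oh(s)=\Omega(m^2/T)$, hence $s=\Omega(m^2/T)$, which is the claimed bound with $m'=\Theta(m)$, $T'=\Theta(T)$.

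\textbf{Expected main obstacle.} There is no deep difficulty; the argument is a clean black-box reduction. The two points requiring care are (i) verifying that appending $K_\ell$ does not inflate the edge count beyond $\Theta(m)$, which is exactly where $T=\Omega(m)$ together with $T\le\Oh(m^{3/2})$ is invoked, and (ii) choosing the proximity parameter $\eps$ small relative to $c_0$ so that an $\eps$-near-uniform sample still falls outside $K_\ell$ with constant probability, and then amplifying the one-sided constant success probability of the sampler to $2/3$ by a constant number of independent repetitions. The same template, with the input counting lower bound swapped, yields the other sampling lower bounds collected in this appendix.
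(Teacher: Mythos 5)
Your proposal is correct and follows essentially the same route as the paper: the paper's Remark~\ref{rem:lower} performs exactly this reduction (append a clique $H$ on $\Theta(T^{1/3})$ vertices to the hard counting instance, run $\Oh(1)$ copies of the sampler, and decide based on whether any output triangle lies outside $H$), and the appendix then instantiates it with the $1$-pass \ea/\va counting lower bound of \cite{braverman2013hard}. Your write-up merely fills in the details the paper leaves implicit (the edge-count check via $T\le\Oh(m^{3/2})$, the choice of $\eps$ relative to the constant fraction of triangles outside the clique, and the constant-repetition amplification), all of which are consistent with the paper's intent.
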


\section{Useful concentration bounds}\label{sec:conc_bounds} 

In our work, we use the following three concentration inequalities, see~\cite{dubhashi2009concentration}.

\begin{lem}[Chernoff-Hoeffding bound]
\label{lem:cher_bound1}
Let $X_1, \ldots, X_n$ be independent random variables such that $X_i \in [0,1]$. For $X=\sum\limits_{i=1}^n X_i$ and $\mu=\E[X]$, the following holds for all $0\leq \delta \leq 1$
$$ 
    \pr\left(\size{X-\mu} \geq \delta\mu\right) \leq 2\exp{\left(\frac{-\mu \delta^2}{3}\right)}.
$$

\end{lem}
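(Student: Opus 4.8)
The statement is the standard multiplicative Chernoff--Hoeffding bound, and the plan is to prove it by the exponential moment (Bernstein) method, treating the two tails separately and combining them by a union bound to produce the leading factor of $2$.

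First I would bound the upper tail. For any $t>0$, Markov's inequality applied to $e^{tX}$ gives $\pr(X \geq (1+\delta)\mu) \leq e^{-t(1+\delta)\mu}\,\E[e^{tX}]$, and independence gives $\E[e^{tX}] = \prod_{i=1}^{n}\E[e^{tX_i}]$. Since $X_i \in [0,1]$, convexity of $x\mapsto e^{tx}$ yields $e^{tX_i}\leq 1+(e^t-1)X_i$, so $\E[e^{tX_i}] \leq 1+(e^t-1)p_i \leq \exp((e^t-1)p_i)$ with $p_i = \E[X_i]$; multiplying over $i$ gives $\E[e^{tX}] \leq \exp((e^t-1)\mu)$. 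Hence $\pr(X \geq (1+\delta)\mu) \leq \exp\!\big(\mu(e^t-1-t(1+\delta))\big)$, and the choice $t=\ln(1+\delta)>0$ yields the classical form $\big(e^{\delta}/(1+\delta)^{1+\delta}\big)^{\mu}$.

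Next I would convert this to the stated exponent via the scalar inequality $(1+\delta)\ln(1+\delta)-\delta \geq \delta^2/3$, valid for $\delta\in[0,1]$. I expect this to be the only step requiring any real calculation: set $f(\delta)=(1+\delta)\ln(1+\delta)-\delta-\delta^2/3$, check $f(0)=0$ and $f'(\delta)=\ln(1+\delta)-2\delta/3$ with $f'(0)=0$; since $f''(\delta)=1/(1+\delta)-2/3$ is nonnegative on $[0,1/2]$, $f'$ first increases and then on $[1/2,1]$ one checks directly that $f'$ remains nonnegative, so $f\geq 0$ throughout $[0,1]$. This gives $\pr(X \geq (1+\delta)\mu) \leq \exp(-\mu\delta^2/3)$.

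For the lower tail I would run the same argument with $e^{-tX}$: Markov gives $\pr(X \leq (1-\delta)\mu) \leq \exp\!\big(\mu(e^{-t}-1+t(1-\delta))\big)$ for $t>0$, and the choice $t=-\ln(1-\delta)>0$ gives $\big(e^{-\delta}/(1-\delta)^{1-\delta}\big)^{\mu} \leq \exp(-\mu\delta^2/2) \leq \exp(-\mu\delta^2/3)$, using the analogous one-variable estimate $-\delta-(1-\delta)\ln(1-\delta)\leq -\delta^2/2$ on $[0,1)$. Finally, a union bound over the events $\{X\geq(1+\delta)\mu\}$ and $\{X\leq(1-\delta)\mu\}$ gives $\pr(|X-\mu|\geq\delta\mu)\leq 2\exp(-\mu\delta^2/3)$. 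Apart from the two elementary scalar inequalities, every step is the textbook moment-generating-function computation, so I do not anticipate a genuine obstacle.
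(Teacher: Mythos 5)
Your proof is correct: the exponential-moment argument, the optimization $t=\ln(1+\delta)$ (resp.\ $t=-\ln(1-\delta)$), the scalar inequalities $(1+\delta)\ln(1+\delta)-\delta\geq\delta^2/3$ and $\delta+(1-\delta)\ln(1-\delta)\geq\delta^2/2$ on the stated ranges, and the final union bound all check out. The paper does not prove this lemma at all --- it is quoted as a standard fact from the concentration-inequalities literature (Dubhashi--Panconesi) --- and your argument is precisely the textbook derivation that the citation points to, so there is nothing further to compare.
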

\begin{lem}[Chernoff-Hoeffding bound]
\label{lem:cher_bound2}
Let $X_1, \ldots, X_n$ be independent random variables such that $X_i \in [0,1]$. For $X=\sum\limits_{i=1}^n X_i$ and $\mu_l \leq \E[X] \leq \mu_h$, the followings hold for any $\delta >0$.
\begin{itemize}
\item[(i)] $\pr \left( X \geq \mu_h + \delta \right) \leq \exp{\left(\frac{-2\delta^2}{n}\right)}$.
\item[(ii)] $\pr \left( X \leq \mu_l - \delta \right) \leq \exp{\left(\frac{-2\delta^2}{n}\right)}$.
\end{itemize}

\end{lem}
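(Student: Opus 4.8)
The plan is to prove the standard Hoeffding inequality first — the statement with both $\mu_h$ and $\mu_l$ replaced by $\mu := \E[X]$ — and then recover the displayed form by a one-line monotonicity argument. For the standard inequality I would use the exponential-moment (Chernoff) method. Fix $s>0$; by Markov's inequality applied to the nonnegative random variable $e^{s(X-\mu)}$ and then independence of the $X_i$,
\[
\pr\!\left(X \geq \mu + \delta\right) \;\leq\; e^{-s\delta}\,\E\!\left[e^{s(X-\mu)}\right] \;=\; e^{-s\delta}\prod_{i=1}^n \E\!\left[e^{s(X_i - \E X_i)}\right].
\]

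The technical core is Hoeffding's lemma: if $Y$ has mean zero and $Y\in[a,b]$ almost surely, then $\E[e^{sY}] \leq \exp\!\big(s^2(b-a)^2/8\big)$. I would prove it by writing each $y\in[a,b]$ as the convex combination $y = \tfrac{b-y}{b-a}a + \tfrac{y-a}{b-a}b$, using convexity of $t\mapsto e^{st}$ to bound $e^{sy}$ pointwise, taking expectations to reduce to a deterministic function $\varphi(s)$, and showing by a second-order Taylor expansion of $\log\varphi$ — whose second derivative is the variance of a $[0,1]$-valued auxiliary random variable, hence at most $1/4$ — that $\varphi(s)\leq \exp(s^2(b-a)^2/8)$. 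Applying this with $Y = X_i - \E X_i$, which lies in an interval of length $1$, gives $\E[e^{s(X_i - \E X_i)}] \leq e^{s^2/8}$, so the product above is at most $e^{ns^2/8}$ and $\pr(X \geq \mu + \delta) \leq \exp(ns^2/8 - s\delta)$. Choosing $s = 4\delta/n$ gives $\pr(X \geq \mu + \delta) \leq \exp(-2\delta^2/n)$; running the identical argument with $-X_i$ in place of $X_i$ gives the lower-tail bound $\pr(X \leq \mu - \delta) \leq \exp(-2\delta^2/n)$.

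Finally, the stated form follows by monotonicity: since $\mu_h \geq \E[X]$ we have $\{X \geq \mu_h+\delta\} \subseteq \{X \geq \E[X]+\delta\}$, whence $\pr(X \geq \mu_h+\delta) \leq \exp(-2\delta^2/n)$, and symmetrically $\pr(X \leq \mu_l - \delta) \leq \pr(X \leq \E[X]-\delta) \leq \exp(-2\delta^2/n)$. The only step that is not entirely routine is Hoeffding's lemma (the convexity bound together with the Taylor/variance estimate); everything else is Markov's inequality, independence, optimization in $s$, and set inclusion. Since this is a standard fact cited to \cite{dubhashi2009concentration}, I expect the paper to simply quote it, but the route above is the self-contained proof.
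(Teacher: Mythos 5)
Your proof is correct: the exponential-moment argument, Hoeffding's lemma via convexity plus the variance bound of $1/4$, the optimization $s=4\delta/n$, and the final monotonicity step replacing $\E[X]$ by $\mu_h$ (resp.\ $\mu_l$) all check out. The paper itself gives no proof of this lemma --- it is stated in the appendix of standard concentration bounds and cited to \cite{dubhashi2009concentration} --- so your write-up is simply the standard textbook derivation of the fact the paper quotes, and there is nothing to compare it against.
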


\begin{lem}[Hoeffding's Inequality] \label{lem:hoeffdingineq}

Let $X_1,\ldots,X_n$ be independent random variables such that $a_i \leq X_i \leq b_i$ and $X=\sum\limits_{i=1}^n X_i$. Then, for all $\delta >0$, we have
$$ \pr\left(\size{X-\E[X]} \geq \delta\right) \leq  2\exp\left(\frac{-2\delta^2} {\sum\limits_{i=1}^{n}(b_i-a_i)^2}\right).$$

\end{lem}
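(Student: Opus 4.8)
The plan is to prove this by the standard exponential-moment (Chernoff) method, reducing the tail bound to a bound on the moment generating function of each centered summand (\emph{Hoeffding's lemma}).

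First I would center the variables: set $Y_i = X_i - \E[X_i]$, so that $\E[Y_i] = 0$ and $Y_i$ lies in an interval of length $b_i - a_i$, and write $Y = X - \E[X] = \sum_{i=1}^n Y_i$. For any $s > 0$, Markov's inequality applied to $e^{sY}$ together with independence of the $Y_i$ gives
\[
\pr(Y \geq \delta) = \pr\left(e^{sY} \geq e^{s\delta}\right) \leq e^{-s\delta}\, \E\left[e^{sY}\right] = e^{-s\delta}\prod_{i=1}^n \E\left[e^{sY_i}\right].
\]

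The key step, and the part I expect to be the main obstacle, is \emph{Hoeffding's lemma}: if $Z$ has $\E[Z] = 0$ and $Z \in [c,d]$ almost surely (so in particular $c \le 0 \le d$), then $\E[e^{sZ}] \leq \exp(s^2(d-c)^2/8)$ for every real $s$. I would prove this using convexity of $z \mapsto e^{sz}$: for $z \in [c,d]$ we have $e^{sz} \leq \frac{d-z}{d-c}\, e^{sc} + \frac{z-c}{d-c}\, e^{sd}$, so taking expectations and using $\E[Z] = 0$ gives $\E[e^{sZ}] \leq \frac{d}{d-c}\, e^{sc} - \frac{c}{d-c}\, e^{sd} = e^{\varphi(s)}$, where $\varphi(s) = sc + \log(1 - \theta + \theta e^{s(d-c)})$ with $\theta = -c/(d-c) \in [0,1]$. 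A direct computation shows $\varphi(0) = \varphi'(0) = 0$ and $\varphi''(s) = (d-c)^2\, q(s)(1 - q(s))$ for a quantity $q(s) \in [0,1]$, hence $\varphi''(s) \leq (d-c)^2/4$; Taylor's theorem with remainder then yields $\varphi(s) \leq s^2(d-c)^2/8$.

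Applying Hoeffding's lemma to each $Y_i$, whose range has length $b_i - a_i$, we get $\E[e^{sY_i}] \leq \exp(s^2(b_i-a_i)^2/8)$, and therefore
\[
\pr(Y \geq \delta) \leq \exp\left(-s\delta + \frac{s^2}{8}\sum_{i=1}^n (b_i - a_i)^2\right).
\]
Optimizing the exponent over $s > 0$, i.e. taking $s = 4\delta / \sum_{i=1}^n (b_i - a_i)^2$, gives $\pr(Y \geq \delta) \leq \exp\left(-2\delta^2 / \sum_{i=1}^n (b_i - a_i)^2\right)$. Since each $-Y_i$ also lies in an interval of length $b_i - a_i$, the identical argument applied to $-Y$ gives $\pr(Y \leq -\delta) \leq \exp\left(-2\delta^2 / \sum_{i=1}^n (b_i - a_i)^2\right)$, and a union bound over these two tail events produces the claimed two-sided inequality with the factor $2$. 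Beyond checking the bound $\varphi''(s) \leq (d-c)^2/4$ in Hoeffding's lemma — which amounts to the elementary fact that $q(1-q) \leq 1/4$ — every step is routine.
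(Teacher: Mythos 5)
Your proof is correct and complete: the centering, the Chernoff/Markov step with independence, Hoeffding's lemma via convexity and the bound $\varphi''(s)\leq (d-c)^2/4$, the optimization $s=4\delta/\sum_{i=1}^n(b_i-a_i)^2$, and the final union bound over the two tails all check out. The paper gives no proof of this lemma at all --- it is quoted as a standard concentration bound with a citation to Dubhashi--Panconesi --- and your argument is exactly the classical textbook derivation that the citation points to, so there is nothing to compare beyond noting agreement.
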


\end{document}